\newtheorem{lemma}{Lemma} 
\newtheorem{theorem}{Theorem} 
\newtheorem*{theorem*}{Theorem} 
\newtheorem{corollary}{Corollary} 
\newtheorem{proposition}{Proposition}
\theoremstyle{remark}
\newcommand{\comments}[1]{}
\newcommand{\surhom}{
  \ensuremath{
      \negthinspace 
      \longrightarrow
      \hspace{-5mm} \rightarrow \hspace{1mm}
  }
}
\renewcommand{\phi}{\varphi}
\newcommand{\ignore}[1]{}
\title{QCSP on partially reflexive forests}
\author{
Barnaby Martin \\ 
Engineering and Computing Sciences, Durham University, U.K.\thanks{Supported by EPSRC grant EP/G020604/1.} \\
\texttt{barnabymartin@gmail.com}
}
\begin{document}
\maketitle

\begin{abstract}
We study the (non-uniform) quantified constraint satisfaction problem QCSP$(\mathcal{H})$ as $\mathcal{H}$ ranges over partially reflexive forests. We obtain a complexity-theoretic dichotomy: QCSP$(\mathcal{H})$ is either in NL or is NP-hard. The separating condition is related firstly to connectivity, and thereafter to accessibility from all vertices of $\mathcal{H}$ to connected reflexive subgraphs. In the case of partially reflexive paths, we give a refinement of our dichotomy: QCSP$(\mathcal{H})$ is either in NL or is Pspace-complete.
\end{abstract}

\section{Introduction}

The \emph{quantified constraint satisfaction problem} QCSP$(\mathcal{B})$, for a fixed \emph{template} (structure) $\mathcal{B}$, is a popular generalisation of the \emph{constraint satisfaction problem} CSP$(\mathcal{B})$. In the latter, one asks if a primitive positive sentence (the existential quantification of a conjunction of atoms) $\phi$ is true on $\mathcal{B}$, while in the former this sentence may be positive Horn (where universal quantification is also permitted). Much of the theoretical research into CSPs is in respect of a large complexity classification project -- it is conjectured that CSP$(\mathcal{B})$ is always either in P or NP-complete \cite{FederVardi}. This \emph{dichotomy} conjecture remains unsettled, although dichotomy is now known on substantial classes (e.g. structures of size $\leq 3$ \cite{Schaefer,Bulatov} and smooth digraphs \cite{HellNesetril,barto:1782}). Various methods, combinatorial (graph-theoretic), logical and universal-algebraic have been brought to bear on this classification project, with many remarkable consequences. A conjectured delineation for the dichotomy was given in the algebraic language in \cite{JBK}.

Complexity classifications for QCSPs appear to be harder than for CSPs. Just as CSP$(\mathcal{B})$ is always in NP, so QCSP$(\mathcal{B})$ is always in Pspace. However, no overarching polychotomy has been conjectured for the complexities of QCSP$(\mathcal{B})$, as $\mathcal{B}$ ranges over finite structures, but the only known complexities are P, NP-complete and Pspace-complete (see \cite{BBCJK,CiE2006} for some trichotomies). It seems plausible that these complexities are the only ones that can be so obtained. 

In this paper we study the complexity of QCSP$(\mathcal{H})$, where $\mathcal{H}$ is a partially reflexive (undirected) forest, i.e. a forest with potentially some loops. CSP$(\mathcal{H})$, in these instances, will either be equivalent to \emph{$2$-colourability} and be in L (if $\mathcal{H}$ is irreflexive and contains an edge) or will be trivial (if $\mathcal{H}$ contains no edges or some self-loop). Thus, CSP$(\mathcal{H})$ is here always (very) easy. We will discover, however that QCSP$(\mathcal{H})$ may be either in NL or be NP-hard (and is often Pspace-complete).

It is well-known that CSP$(\mathcal{B})$ is equivalent to the \emph{homomorphism problem} \textsc{Hom}$(\mathcal{B})$ -- is there a homomorphism from an input structure $\mathcal{A}$ to $\mathcal{B}$? A similar problem, \textsc{Sur-Hom}$(\mathcal{B})$, requires that this homomorphism be surjective. On Boolean $\mathcal{B}$, each of CSP$(\mathcal{B})$, \textsc{Sur-Hom}$(\mathcal{B})$ and QCSP$(\mathcal{B})$ display complexity-theoretic dichotomy (the first two between P and NP-complete, the last between P and Pspace-complete). However, the position of the dichotomy is the same for QCSP and \textsc{Sur-Hom}, while it is different for CSP. Indeed, the QCSP and \textsc{Sur-Hom} are cousins: a surjective homomorphism from $\mathcal{A}$ to $\mathcal{B}$ is equivalent to a sentence $\Theta$ of the form $\exists v_1,\ldots,v_k \theta(v_1,\ldots,v_k) \wedge \forall y (y=v_1 \vee \ldots \vee y=v_k)$, for $\theta$ a conjunction of atoms, being true on $\mathcal{B}$. This sentence is certainly not positive Horn (it involves some disjunction), but some similarity is there. Recently, a complexity classification for \textsc{Sur-Hom}$(\mathcal{H})$, where $\mathcal{H}$ is a partially reflexive forest, was given in \cite{GolovachPaulusmaSong}.\footnote{Their paper is in fact about partially reflexive trees, but they state in the conclusion how their result extends to partially reflexive forests.} The separation between those cases that are in P and those cases that are NP-complete is relatively simple, those that are hard are precisely those in which, in some connected component (tree), the loops induce a disconnected subgraph. Their work is our principle motivation, but our dichotomy appears more complicated than theirs. Even in the basic case of partially reflexive paths, we find examples $\mathcal{P}$ whose loops induce a disconnected subgraph and yet QCSP$(\mathcal{P})$ is in NL. In the world of QCSP, for templates that are partially reflexive forests $\mathcal{H}$, the condition for tractability may be read as follows. If $\mathcal{H}$ is disconnected (not a tree) then QCSP$(\mathcal{H})$ is in NL. Otherwise, let $\lambda_H$ be the longest distance from a vertex in $H$ to a loop in $\mathcal{H}$. If either 1.) there exists no looped vertex or 2.) there exists a single reflexive connected subgraph $\mathcal{T}_0 \subseteq \mathcal{H}$, such that there is a $\lambda_H$-walk from any vertex of $H$ to $T_0$, then QCSP$(\mathcal{H})$ is in NL (we term such an $\mathcal{H}$ \emph{quasi-loop-connected}). In all other cases, QCSP$(\mathcal{H})$ is NP-hard. In the case of partially reflexive paths, we may go further and state that all other cases are Pspace-complete.

In the world of partially reflexive trees, we derive our NL membership results through the algebraic device of polymorphisms, together with a logico-combinatorial characterisation of template equivalence given in \cite{LICS2008}. In the first instance, we consider trees in which the loops induce a connected subgraph: so-called \emph{loop-connected} trees -- including irreflexive trees. Such trees $\mathcal{T}$ are shown to possess certain (surjective) polymorphisms, that are known to collapse the complexity of QCSP$(\mathcal{T})$ to a polynomially-sized ensemble of instances of CSP$(\mathcal{T}^c)$ (the superscript suggesting an expansions by some constants) \cite{hubie-sicomp}. Although CSP$(\mathcal{T}^c)$ may no longer trivial, $\mathcal{T}^c$ still admits a majority polymorphism, so it follows that CSP$(\mathcal{T}^c)$ is in NL \cite{DalmauK08}.

We prove that every loop-connected tree $\mathcal{T}$ admits a certain \emph{majority} polymorphism, and deduce therefore that QCSP$(\mathcal{T})$ is in NL. However, we also prove that loop-connected trees are the only trees that admit majority polymorphisms, and so we can take this method no further. In order to derive the remaining tractability results, we use the characterisation from \cite{LICS2008} for equivalent templates -- the first time this method has been used in complexity classification. If there exist natural numbers $t$ and $s$ such that there are surjective homomorphisms from $\mathcal{T}^t$ to $\mathcal{S}$ and from $\mathcal{S}^s$ to $\mathcal{T}$ (the superscript here indicates direct power), then it follows that QCSP$(\mathcal{T})$ = QCSP$(\mathcal{S})$, i.e. $\mathcal{T}$ and $\mathcal{S}$ agree on all positive Horn sentences. Of course it follows immediately that QCSP$(\mathcal{T})$ and QCSP$(\mathcal{S})$ are of the same complexity. It turns out that for every quasi-loop-connected tree $\mathcal{T}$, there is a loop-connected tree $\mathcal{S}$ such that QCSP$(\mathcal{T})$ = QCSP$(\mathcal{S})$, and our tractability classification follows (indeed, one may even insist that the loops of $\mathcal{S}$ are always contiguous with some leaves). 

For our NP-hardness proofs we use a direct reduction from \emph{not-all-equal $3$-satisfiability} (NAE3SAT), borrowing heavily from \cite{MatchingCut}. (In the paper \cite{GolovachPaulusmaSong} the NP-hardness results follow by reduction from the problem \emph{matching cut}, which is proven NP-complete in \cite{MatchingCut} by reduction from NAE3SAT.) Our Pspace-hardness proofs, for partially reflexive paths only, use a direct reduction from \emph{quantified not-all-equal $3$-satisfiability} (QNAE3SAT). We require several different flavours of the same reduction in order to cover each of the outstanding cases. We conjecture that all NP-hardness cases for partially reflexive trees (forests) are in fact Pspace-complete.

The paper is organised as follows. After the preliminaries and definitions, we give the cases that are in NL in Section~\ref{sec:easy}, and the cases that are NP-hard and Pspace-complete in Section~\ref{sec:hard}. For the cases that are in NL, we first give our result for loop-connected trees. We then expand to the case of quasi-loop-connected paths (for pedagogy and as an important special subclass) before going on to all quasi-loop-connected trees. For the cases that are hard, we begin with the Pspace-completeness results for paths and then give the NP-hardness for the outstanding trees. Finally we conclude with open problems. We give here our main results.
\begin{theorem}[Pspace Dichotomy]
\label{thm:paths-main}
Suppose $\mathcal{P}$ is a partially reflexive path. Then, either $\mathcal{P}$ is quasi-loop-connected, and QCSP$(\mathcal{P})$ is in NL, or QCSP$(\mathcal{P})$ is Pspace-complete.
\end{theorem}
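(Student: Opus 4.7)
The theorem combines an upper bound (NL when $\mathcal{P}$ is quasi-loop-connected) with a matching lower bound (Pspace-hard otherwise); Pspace membership is automatic for QCSP on any finite template, so the task is to prove exactly these two things. My plan is to first import the NL results already outlined in the introduction (via polymorphisms and template equivalence), and then prove Pspace-hardness by a sequence of carefully tailored reductions from quantified not-all-equal $3$-satisfiability (QNAE3SAT).

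For the NL side, I would first treat the loop-connected case: show that every loop-connected path $\mathcal{P}$ admits a majority polymorphism. The construction is natural — take the median along the path, with some fixed retraction into the reflexive subpath whenever the three inputs disagree too widely; connectivity of the loops ensures this retraction is a well-defined homomorphism. Having a (surjective) majority polymorphism, I invoke the reduction of \cite{hubie-sicomp} to collapse QCSP$(\mathcal{P})$ to a polynomially-sized ensemble of instances of CSP$(\mathcal{P}^c)$, and then \cite{DalmauK08} places these in NL since the majority survives expansion by constants. For a general quasi-loop-connected path $\mathcal{P}$, I use the equivalence criterion of \cite{LICS2008}: I construct a loop-connected path $\mathcal{S}$ (with loops contiguous to a leaf, as promised in the introduction) together with $t,s \in \mathbb{N}$ and surjective homomorphisms $\mathcal{P}^t \surhom \mathcal{S}$ and $\mathcal{S}^s \surhom \mathcal{P}$. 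The $\lambda_H$-walk condition from every vertex of $\mathcal{P}$ to $\mathcal{T}_0$ is precisely what is needed to fold $\mathcal{P}$ onto $\mathcal{S}$ while preserving the positive Horn theory; the reverse direction uses $\mathcal{S}^s$ to recover the extraneous "antennae" of $\mathcal{P}$. This gives QCSP$(\mathcal{P})$ = QCSP$(\mathcal{S})$ and NL membership follows from the previous step.

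For the Pspace-hard side, the reduction is from QNAE3SAT. A Boolean variable $x$ will be encoded by a pair of existentially or universally quantified variables ranging over $\mathcal{P}$, and the truth-value will be read off from which of two distinguished regions of $\mathcal{P}$ the variable lands in. The failure of quasi-loop-connectedness provides the asymmetry that lets a clause gadget enforce not-all-equal: either $\mathcal{P}$ contains two reflexive subpaths that cannot be simultaneously reached by a common $\lambda_H$-walk, or there are vertices whose only loop-walks have incompatible parities/lengths, and either obstruction can be weaponised to forbid the all-equal assignment in a $3$-clause. Universal quantifiers of QNAE3SAT translate to universal quantification over $\mathcal{P}$ via auxiliary constraint gadgets that force the universally-quantified element to encode a Boolean value, exploiting the same non-quasi-loop-connectedness.

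The main obstacle is the case analysis in the hardness reduction. Non-quasi-loop-connected paths come in several flavours — two disconnected reflexive subpaths, a single reflexive subpath with unequal distances $\lambda_H$ to it from different ends, mismatched parities in the loop-walk structure, and so on — and each flavour demands its own variant of the NAE3SAT (respectively QNAE3SAT) gadget, along the lines of \cite{MatchingCut}. Verifying that the gadgets compose correctly under universal quantification, and that Boolean-value encoding is robustly enforced in every flavour, is the technically delicate part; the algebraic NL side, by contrast, is essentially a packaging of known machinery once the majority polymorphism and the equivalent loop-connected companion are produced.
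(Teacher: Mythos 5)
Your proposal is correct and follows essentially the same route as the paper, which proves this theorem by simply combining its Theorem~\ref{thm:paths-easy} (NL membership for quasi-loop-connected, i.e.\ $0$-eccentric, paths via a majority polymorphism on loop-connected paths plus the surjective-homomorphisms-between-powers equivalence of \cite{LICS2008}) with Theorem~\ref{thm:paths-hard} (Pspace-hardness for all other paths via case-split QNAE3SAT reductions using patterns and $\forall$-selectors). The only divergence is cosmetic: your ``median with retraction'' differs in detail from the paper's amalgamated operation $f_1$, and the paper's case analysis is organised around weakly balanced $0$-centred/$1$-centred paths and one residual form rather than the flavours you list, but the strategy is the same.
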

\begin{proof}
This follows immediately from Theorems~\ref{thm:paths-easy} (tractability) and \ref{thm:paths-hard} (Pspace-completeness).
\end{proof}
\begin{theorem}[NP Dichotomy]
\label{thm:main}
Suppose $\mathcal{H}$ is a partially reflexive forest. Then, either $\mathcal{H}$ is disconnected or quasi-loop-connected, and QCSP$(\mathcal{H})$ is in NL, or QCSP$(\mathcal{H})$ is NP-hard.
\end{theorem}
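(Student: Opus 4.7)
The plan is to decompose the statement into a tractability half (the ``either'' clause) and a hardness half (the ``or'' clause), mirroring the roadmap given in the introduction; the theorem then follows by combining the two. For the tractability part I would produce a single NL algorithm that handles three essentially different sub-cases: (a) $\mathcal{H}$ disconnected, (b) $\mathcal{H}$ a loop-connected (in particular, irreflexive) tree, and (c) $\mathcal{H}$ a quasi-loop-connected tree that is not loop-connected. For the hardness part I would give a uniform reduction from NAE3SAT whenever $\mathcal{H}$ is a connected tree that is not quasi-loop-connected.

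For (a), because $\mathcal{H}$ is a disjoint union of trees $\mathcal{H}_1,\ldots,\mathcal{H}_r$, I would pre-process any instance by branching non-deterministically, for each universal variable $y$, over which component of $\mathcal{H}$ it is evaluated in; after this $O(\log n)$-guess step every connected atom sits in a single component, and the residual problem collapses to $\Hom(\mathcal{H}_i^c)$ instances, which are trivial or amount to $2$-coloring. For (b) I would exhibit an explicit \emph{majority} polymorphism on any loop-connected tree $\mathcal{T}$ by using the tree-median: on the unique shortest $(u,v)$-paths the majority of three values is taken to be the median, with a small correction to stay inside the reflexive subtree. Surjectivity of this operation then triggers Chen's collapse theorem, which reduces QCSP$(\mathcal{T})$ to a polynomial family of $\CSP(\mathcal{T}^c)$ instances; since $\mathcal{T}^c$ still has a majority polymorphism, Dalmau--Krokhin places each of these in NL. For (c) I would apply the equivalence criterion of \cite{LICS2008}: for every quasi-loop-connected tree $\mathcal{T}$ I would construct a loop-connected tree $\mathcal{S}$ together with powers $s,t$ and surjective homomorphisms $\mathcal{T}^t \surhom \mathcal{S}$ and $\mathcal{S}^s \surhom \mathcal{T}$. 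The natural candidate for $\mathcal{S}$ is the tree obtained from $\mathcal{T}$ by pulling all loops down to a contiguous block of leaves at distance $\lambda_H$ from the common reflexive core $\mathcal{T}_0$; verifying the two surjective homomorphisms reduces to checking reachability along the $\lambda_H$-walks guaranteed by the quasi-loop-connected hypothesis.

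For the hardness direction, suppose $\mathcal{H}$ is a connected partially reflexive tree that is not quasi-loop-connected. Then $\mathcal{H}$ contains a loop (otherwise an irreflexive tree is vacuously quasi-loop-connected with $\lambda_H$ undefined / trivially satisfied) and either its reflexive part is disconnected, or no single reflexive component $\mathcal{T}_0$ is reachable by a $\lambda_H$-walk from every vertex. In each such configuration I would extract two incomparable reflexive ``targets'' and a path between them that does not collapse under QCSP equivalence; using these as Boolean values, I would adapt the gadget construction of \cite{MatchingCut} to encode NAE-clauses as positive Horn formulae, with the universal quantifiers forcing the NAE (rather than merely satisfiability) condition. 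Several variants of the reduction will be needed, one per qualitative way in which quasi-loop-connectedness fails.

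The main obstacle will be the case analysis underlying the hardness construction: the negation of quasi-loop-connectedness splits into noticeably different combinatorial shapes (loops scattered across several branches of a common node, reflexive subgraphs at mismatched distances from a branching point, and so on), and for each shape one must design a gadget and, crucially, verify that the universal player cannot cheat by routing a forced variable through a reflexive segment and thus satisfying a clause trivially. A secondary difficulty is the verification of the two surjective homomorphisms in step (c), since one has to be precise about how a small power of a pruned-loop tree $\mathcal{S}$ can recover the internal branching structure of $\mathcal{T}$; both steps are where I expect the bulk of the technical work to live.
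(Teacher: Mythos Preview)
Your overall architecture matches the paper's: tractability split into the disconnected case, the loop-connected case (via a majority polymorphism and Chen's collapse), and the genuinely quasi-loop-connected case (via the surjective-power equivalence of \cite{LICS2008}); hardness via NAE3SAT gadgets on a tree that fails quasi-loop-connectedness. Parts (b), (c) and the hardness sketch are vague but point in the same direction as Proposition~\ref{prop:loop-connected}, Corollary~\ref{cor:quasi} and Theorem~\ref{thm:not-quasi-loop-connected}.

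Part (a), however, contains a genuine gap. You write that you will ``branch non-deterministically, for each universal variable $y$, over which component of $\mathcal{H}$ it is evaluated in'' and call this an ``$O(\log n)$-guess step''. But universal variables are universally quantified: truth of the sentence requires success for \emph{every} value of each such $y$, so you may not guess a component and move on. With $k$ universal variables and $r\geq 2$ components there are $r^k$ component-patterns to consider, and $k$ is part of the input; no NL procedure enumerates these in general, and the quantifier alternation prevents you from treating the universal block as a single co-nondeterministic guess. Your subsequent claim that ``every connected atom sits in a single component'' is also false as stated: atoms mix existential and universal variables, and the component of an existential variable chosen by the Skolem strategy may legitimately depend on the components of several earlier universals.

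The paper's treatment of the disconnected case is entirely different and does not attempt any per-variable branching. It invokes a structural result from \cite{CiE2006}: when the template is disconnected, QCSP$(\mathcal{H})$ coincides with its restriction to inputs that are conjunctions of $\Pi_2$ sentences $\forall x\,\exists\bar y\,\phi(x,\bar y)$. Each such conjunct is then evaluated by cycling the single universal $x$ through the (constantly many) elements of $H$ and solving a CSP over a partially reflexive forest with one constant, which is in NL. Without this $\Pi_2$ collapse (or an equivalent argument bounding quantifier alternation), your NL claim for disconnected $\mathcal{H}$ is unsupported.

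A minor secondary point: even after fixing a component, the residual problem is not ``trivial or 2-colouring''. With constants present, CSP over a partially reflexive tree is placed in NL via the majority polymorphism and \cite{DalmauK08}, not by reduction to bipartiteness.
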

\begin{proof}
For tractability, if $\mathcal{H}$ is a tree then we appeal to Corollary~\ref{cor:quasi}. If $\mathcal{H}$ is a forest that is not a tree, then it follows that $\mathcal{H}$ is disconnected, and that that QCSP$(\mathcal{H})$ is equivalent to QCSP$(\mathcal{H})$ with inputs restricted to the conjunction of sentences of the form ``$\forall x \exists \overline{y} \phi(x,\overline{y})$'', where $\phi$ is a conjunction of positive atoms (see \cite{CiE2006}). The evaluation of such sentences on any partially reflexive forest is readily seen to be in NL.

For NP-hardness, we appeal to Theorem~\ref{thm:not-quasi-loop-connected}.
\end{proof}

\section{Preliminaries and definitions}

Let $[n]:=\{1,\ldots,n\}$. A graph $\mathcal{G}$ has vertex set $G$, of cardinality $|G|$, and edge set $E(\mathcal{G})$.
Henceforth we consider partially reflexive trees, \mbox{i.e.} trees potentially with some loops (we will now drop the preface partially reflexive).
For a sequence $\alpha \in \{0,1\}^*$, of length $|\alpha|$, let $\mathcal{P}_\alpha$ be the undirected path on $|\alpha|$ vertices such that the $i$th vertex has a loop iff the $i$th entry of $\alpha$ is $1$ (we may say that the path $\mathcal{P}$ is \emph{of the form} $\alpha$). We will usually envisage the domain of a path with $n$ vertices to be $[n]$, where the vertices appear in the natural order. The \emph{centre} of a path is either the middle vertex, if there is an odd number of vertices, or between the two middle vertices, otherwise. Therefore the position of the centre of a path on $m$ vertices is at $\frac{m+1}{2}$. In a path on an even number of vertices, we may refer to the pair of vertices in the middle as \emph{centre vertices}. Call a path $\mathcal{P}$ \emph{loop-connected} if the loops induce a connected subgraph of $\mathcal{P}$.
Call a path \emph{$0$-eccentric} if it is of the form $\alpha 1^b0^a$ for $b\geq 0$ and $|\alpha|\leq a$.
Call a path \emph{weakly balanced} if, proceeding from the centre to each end, one encounters at some point a non-loop followed by a loop (if the centre is loopless then this may count as a non-loop for both directions). Call a weakly-balanced path $\mathcal{P}$ \emph{$0$-centred} if the centre vertex is a non-loop (and $|P|$ is odd) or one of the centre vertices is a non-loop (and $|P|$ is even). Otherwise, a weakly-balanced path $\mathcal{P}$ is \emph{$1$-centred}. 

In a rooted tree, the \emph{height} of the tree is the maximal distance from any vertex to the root. For a tree $\mathcal{T}$ and vertex $v \in T$, let $\lambda_T(v)$ be the shortest distance in $\mathcal{T}$ from $v$ to a looped vertex (if $\mathcal{T}$ is irreflexive, then $\lambda_T(v)$ is always infinite). Let $\lambda_T$ be the maximum of $\{\lambda_T(v):v \in T\}$. A tree is \emph{loop-connected} if the self-loops induce a connected subtree. A tree $\mathcal{T}$ is \emph{quasi-loop-connected} if either 1.) it is irreflexive, or 2.) there exists a connected reflexive subtree $\mathcal{T}_0$ (chosen to be \textbf{maximal} under inclusion) such that there is a walk of length $\lambda_T$ from every vertex of $\mathcal{T}$ to $T_0$. The quasi-loop-connected paths are precisely those that are $0$-eccentric.

The problems CSP$(\mathcal{T})$ and QCSP$(\mathcal{T})$ each take as input a sentence $\phi$, and ask whether this sentence is true on $\mathcal{T}$. For the former, the sentence involves the existential quantification of a conjunction of atoms -- \emph{primitive positive} logic. For the latter, the sentence involves the arbitrary quantification of a conjuction of atoms -- \emph{positive Horn} logic.

The \emph{direct product} $\mathcal{G} \times \mathcal{H}$ of two graphs $\mathcal{G}$ and $\mathcal{H}$ has vertex set $\{(x,y):x \in G, y \in H\}$ and edge set $\{((x,u),(y,v)):x,y \in G, u,v \in H, (x,y) \in E(\mathcal{G}), (u,v) \in E(\mathcal{H})\}$. Direct products are (up to isomorphism) associative and commutative. The $k$th power $\mathcal{G}^k$ of a graph $\mathcal{G}$ is $\mathcal{G} \times \ldots \times \mathcal{G}$ ($k$ times). A homomorphism from a graph $\mathcal{G}$ to a graph $\mathcal{H}$ is a function $h:G\rightarrow H$ such that, if $(x,y) \in E(\mathcal{G})$, then $(h(x),h(y)) \in E(\mathcal{G})$ (we sometimes use ${\surhom}$ to indicate existence of surjective homomorphism). A \emph{$k$-ary polymorphism} of a graph is a homomorphism from $\mathcal{G}^k$ to $\mathcal{G}$. A ternary function $f:G^3 \rightarrow G$ is designated a \emph{majority} operation if $f(x,x,y)=f(x,y,x)=f(y,x,x)=x$, for all $x,y \in G$.

In a matrix, we refer to the \emph{leading} diagonal, running from the top left to bottom right corner, and the \emph{rising} diagonal running from the bottom left to top right corner.

The computational reductions we use will always be comprised by local substitutions that can easily be seen to be possible in logspace -- we will not mention this again. Likewise, we recall that QCSP$(\mathcal{T})$ is always in Pspace, thus Pspace-completeness proofs will only deal with Pspace-hardness.

\section{Tractable trees}
\label{sec:easy}

We now explore the tree templates $\mathcal{T}$ such that QCSP$(\mathcal{T})$ is in NL. We derive our tractability results though majority polymorphisms and equivalence of template.

\subsection{Loop-connected trees and majority polymorphisms}

\subsubsection{Majority operations on (antireflexive) trees}

It is known that all (irreflexive) trees admit a majority polymorphism \cite{Bandelt1987191}; however, not just any operation will suffice for our purposes, therefore we define a majority polymorphism of a certain kind on a rooted tree $\mathcal{T}$ \textbf{whose root could also be a leaf} (\mbox{i.e.} is of degree one). In a rooted tree let the root be labelled $0$ and let the parity propagate outwards from the root along the branches. For $x,y \in T$ define the $\mathrm{meet}(x,y)$ to be the highest (first) point at which the paths from the root to $x$ and the root to $y$ meet. If $x$ and $y$ are on the same branch, and the closer to the root is $x$, then $\mathrm{meet}(x,y)$ is $x$. In the following definition, we sometimes write, e.g., $d$ $[-1]$, to indicate that the function takes either value $d$ or $d-1$: this is dependent on the dominant parity of the arguments which should be matched by the function. Define the following ternary function $f$ on $T$.
\[ f_0(x,y,z):= \left\{
\begin{array}{llr}
d \ [-1] & \mbox{$x,y,z$ all the same parity; $d$ is highest of}   \\
& \mbox{$\mathrm{meet}(x,y)$, $\mathrm{meet}(y,z)$ and $\mathrm{meet}(x,z)$} & A \\
\mbox{meet}(u,v) \ [-1] & \mbox{two of $x,y,z$ ($u$ and $v$) same parity; other different} & B \\
\end{array}
\right.
\]
\begin{lemma}
\label{lem:anti-op}
Let $\mathcal{T}$ be a rooted (irreflexive) tree whose root has degree one. Then $f_0$ is a majority polymorphism of $\mathcal{T}$.
\end{lemma}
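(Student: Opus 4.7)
The plan is to verify directly, by parity-driven case analysis, that $f_0$ is well-defined and satisfies both the majority identity and edge preservation. The argument rests on one elementary fact about trees: for any three vertices, the three pairwise meets (lowest common ancestors) admit a unique deepest element (possibly coinciding with the other two in degenerate configurations), so the choice of $d$ in Case A is unambiguous. I would read the parity bracket $[-1]$ as the rule ``take the parent of the written vertex iff its parity disagrees with the majority parity of the inputs''. The majority identity $f_0(x,x,y)=f_0(x,y,x)=f_0(y,x,x)=x$ then follows by inspection: if $y$ has the same parity as $x$, Case A gives meets $x,\mathrm{meet}(x,y),\mathrm{meet}(x,y)$ whose deepest is $x$ (with the right parity, so no correction), while if $y$ has opposite parity, Case B gives $\mathrm{meet}(x,x)=x$ (again with matching parity). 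The other two orderings are symmetric.

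The main work is showing that $f_0$ preserves edges. Given three edges $(x_i,y_i)$, each $y_i$ differs in parity from $x_i$, so the parity pattern of $(y_1,y_2,y_3)$ is the bitwise complement of that of $(x_1,x_2,x_3)$, and the same case (A or B) governs both triples. In Case A I would show that the parity-corrected deepest meets $d_x$ and $d_y$ are adjacent in $\mathcal{T}$ by a sub-case analysis on whether each $y_i$ is the parent or the child of $x_i$, exploiting that any pairwise meet shifts by at most one edge when an argument is replaced by a neighbour, so $d_x$ and $d_y$ lie at distance at most one in $\mathcal{T}$. Case B is analogous but involves only the single meet $\mathrm{meet}(u,v)$ of the like-parity pair, so the analysis is simpler.

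The hardest step is the parity bookkeeping in Case A: one must check that whenever a $[-1]$ adjustment is invoked on one side it is invoked compatibly on the other, so the two outputs are genuinely tree-adjacent rather than merely close. This is exactly where the hypothesis that the root has degree one is used. Were the root of degree at least two, three inputs lying in three different subtrees of the root could force all pairwise meets to equal the root itself; if their common parity disagreed with that of the root, the prescription would call for ``the parent of the root'', which does not exist. With the root of degree one there is only one subtree, so a deepest meet equal to the root requires at least two inputs to be the root, in which case the inputs' common parity automatically matches that of the root and no adjustment is needed. Once this degenerate situation is excluded, the remaining verification reduces to a finite tabulation over the parent/child orientations of the three edges, confirming adjacency of the corresponding outputs.
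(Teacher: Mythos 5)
Your proposal is correct and follows essentially the same route as the paper's (much terser) proof: direct verification of the majority identity, well-definedness via the comparability of the three pairwise meets together with the degree-one root (which is indeed used exactly to avoid calling for the nonexistent parent of the root), and edge-preservation by observing that neighbouring triples have complementary parity patterns, so the same rule governs both sides. The one compressed step is the inference that the \emph{deepest} pairwise meet moves by at most one edge when all three arguments move to neighbours -- this is cleanest seen by noting that the deepest pairwise meet is the median of the triple, so that the standard fact that median is a polymorphism of the reflexive tree applies -- but the conclusion is true and your write-up is in fact more detailed than the paper's own.
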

\begin{proof}
It is easy to see that $f$ is a majority operation and that it is well-defined. For the latter we use both the fact that the root has degree one and that $\mathrm{meet}(x,y)$, $\mathrm{meet}(y,z)$ and $\mathrm{meet}(x,z)$ are never incomparable. Now we prove that $f$ preserves $E(\mathcal{T})$. Consider $f(a,b,c)$ and $f(a',b',c')$ such that $(a,a')$,  $(b,b')$ and $(c,c') \in E(\mathcal{T})$.

Case I. $a$, $b$ and $c$ are of the same parity. It follows that $a'$, $b'$ and $c'$ are the same parity, and Rule A applies.

Case II. Two are of one parity, one is different -- w.l.o.g. $a$ and $b$ share same parity. Thus, $a'$ and $b'$ are of the same parity with $c'$ different, and Rule B applies.
\end{proof}

\subsubsection{Majority operations on reflexive trees}

It is known that reflexive trees admit a majority polymorphism \cite{Bandelt:1989:DAR:72175.72177}, but it will be a simple matter for us to provide our own. If $x,y,z$ are vertices of a (not necessarily reflexive) tree $\mathcal{T}$ then we define their \emph{median} to be the unique point where the paths from $x$ to $y$, $y$ to $z$ and $x$ to $z$ meet. It follows that median is a majority operation. If $x$, $y$ and $z$ are all on a single branch (path), then we have given the standard definition of median. On a tree, the median function need not be conservative (\mbox{i.e.} we do not in general have $\mathrm{median}(x,y,z)\in \{x,y,z\}$). The following is easy to verify.
\begin{lemma}
Let $\mathcal{T}$ be a reflexive tree. Then the median function is a majority polymorphism of $\mathcal{T}$.
\end{lemma}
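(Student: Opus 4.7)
The plan is to verify the only non-trivial requirement: that the median function preserves the edge relation $E(\mathcal{T})$. The ``majority'' identities $f(x,x,y)=f(x,y,x)=f(y,x,x)=x$ follow immediately from the definition of median on a tree, and existence and uniqueness of the meeting point is a standard fact about trees. Because $\mathcal{T}$ is reflexive, an ordered pair $(u,v)$ is in $E(\mathcal{T})$ iff $d_\mathcal{T}(u,v)\leq 1$. So given edges $(a,a'),(b,b'),(c,c')\in E(\mathcal{T})$ and setting $m:=\mathrm{median}(a,b,c)$, $m':=\mathrm{median}(a',b',c')$, it suffices to show $d_\mathcal{T}(m,m')\leq 1$.

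The key tool I would use is an ``edge-majority'' characterisation of the median on a tree: for any edge $e=\{u,v\}$ of $\mathcal{T}$, deleting $e$ disconnects $\mathcal{T}$ into two components $T_e^u\ni u$ and $T_e^v\ni v$, and $\mathrm{median}(x,y,z)$ lies in whichever component contains at least two of $x,y,z$. This is easy to verify from the definition via meeting paths, and it is the essential handle on ``where the median sits''. I would record it as a short preliminary observation.

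The main step is then a clean counting/contradiction argument. Suppose for contradiction that $d_\mathcal{T}(m,m')\geq 2$, so the unique $m$--$m'$ path in $\mathcal{T}$ contains two distinct edges $e_1,e_2$. For each $e_i$ the characterisation forces at least two of $\{a,b,c\}$ to lie on the $m$-side of $e_i$ and at least two of $\{a',b',c'\}$ to lie on the $m'$-side, so at least two of the three pairs $\{a,a'\},\{b,b'\},\{c,c'\}$ must ``straddle'' $e_i$. But each such pair has $d_\mathcal{T}$-distance at most $1$ (because $(x,x')\in E(\mathcal{T})$ means either $x=x'$, giving no straddle at all, or $\{x,x'\}$ is a genuine edge, in which case the pair straddles exactly that single edge). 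Hence across $e_1$ and $e_2$ we have at most $3$ straddlings in total, whereas at least $2+2=4$ are required. Contradiction.

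I expect the conceptual obstacle to be almost nil; the only care needed is in stating the edge-majority characterisation cleanly and in handling the degenerate reflexive cases (where one or more of the pairs is a loop, contributing zero straddlings, which only strengthens the counting argument). Once the contradiction is reached, $d_\mathcal{T}(m,m')\leq 1$, so $(m,m')\in E(\mathcal{T})$ thanks to reflexivity, and the lemma follows.
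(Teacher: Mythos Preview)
The paper offers no proof of this lemma beyond the remark ``easy to verify'', so there is nothing to compare against; your plan to reduce to $d_\mathcal{T}(m,m')\leq 1$ via the edge-majority description of the tree median is the natural one. However, your counting step contains a genuine gap. From ``at least two of $\{a,b,c\}$ lie on the $m$-side of $e_i$'' and ``at least two of $\{a',b',c'\}$ lie on the $m'$-side of $e_i$'' you may only conclude that \emph{at least one} of the three pairs straddles $e_i$: two subsets of size $\geq 2$ in a $3$-element universe meet in at least one element, not two. A concrete configuration with exactly one straddle is $a,b$ on the $m$-side and $c$ on the $m'$-side, while $b'$ is on the $m$-side and $a',c'$ on the $m'$-side; then only $(a,a')$ straddles. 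With ``at least one'' per edge your inequality becomes $2\leq 3$, which is no contradiction.

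The repair is to pick the two edges with care rather than arbitrarily. Let $e_1$ be the edge of the $m$--$m'$ path incident to $m$ and $e_2$ the edge incident to $m'$. The edge-majority characterisation applied to $e_1$ says at least two of $a,b,c$ lie in the component of $m$ after deleting $e_1$; applied to $e_2$ it says at least two of $a',b',c'$ lie in the component of $m'$ after deleting $e_2$. By pigeonhole some index $i\in\{a,b,c\}$ satisfies both. Then the unique $x_i$--$x'_i$ path in $\mathcal{T}$ must traverse every edge of the $m$--$m'$ path, giving $d_\mathcal{T}(x_i,x'_i)\geq d_\mathcal{T}(m,m')\geq 2$, which contradicts $(x_i,x'_i)\in E(\mathcal{T})$. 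With this adjustment your argument goes through.
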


\subsubsection{Amalgamating these operations}

Let $\mathcal{T}$ be constructed by attaching rooted (irreflexive) trees -- called \emph{tree-components} -- whose roots have degree one, to the branches of some reflexive tree -- the \emph{centre} -- such that the resulting object is a partially reflexive tree (loop-connected, of course). The roots maintain their labels $0$ despite now having a loop there. These special looped vertices are considered both part of their tree-component(s) and part of the centre. For the sake of well-definition, we preferentially see a vertex $0$ as being in some tree-component. Thus a looped vertex $0$ and the vertex $1$ above it, in its tree-component, constitute two vertices in the same tree-component. It is possible that a looped vertex $0$ is simultaneously the $0$ in multiple tree-components. This will mean we have to verify well-definition. Define the following ternary operation on $T$.
\[ f_1(x,y,z):= \left\{
\begin{array}{llr}
f_0(x,y,z) & \mbox{$x,y,z$ in same tree-component} & A \\
\mathrm{meet}(u,v) \ [-1] & \mbox{two of $x,y,z$  ($u$ and $v$) in the same tree-component;} \\
& \mbox{other elsewhere; and $u, v$ same parity} & B \\
0 \mbox{ from $\{u, v\}$'s comp.} & \mbox{two of $x,y,z$  ($u$ and $v$) in the same tree-component;} \\
& \mbox{other elsewhere; and $u, v$ different parity} & C \\
\mathrm{median}(x,y,z) & \mbox{ otherwise}  & D \\
\end{array}
\right.
\]
\begin{lemma}
\label{lem:maj-f1}
Let $\mathcal{T}$ be a loop-connected tree. Then $f_1$ is a majority polymorphism of $\mathcal{T}$.
\end{lemma}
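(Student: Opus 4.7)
The plan is to verify three things in order: well-definedness of $f_1$, the majority identities, and edge preservation; the third is by far the heaviest.

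First, for \emph{well-definedness}, the only ambiguity arises at looped roots of tree-components, which may belong to several tree-components at once. Using the stipulated convention that such a vertex $0$ is preferentially seen as lying in a tree-component, one checks that the outputs of rules A, B, C reduce to $0$ itself whenever the issue arises (since $f_0(0,0,0)=0$, $\mathrm{meet}(0,0)=0$, and the root of any of the containing tree-components is the same vertex $0$), and that the fall-through to rule D cannot occur in these degenerate triples. This is a short bookkeeping exercise.

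Second, for the \emph{majority identities}, I will check each rule separately: under rule A we invoke Lemma~\ref{lem:anti-op} for $f_0$; under rule B with $u=v=x$ (say) we get $\mathrm{meet}(x,x)=x$ with no parity offset; rule C cannot arise when two of the inputs are equal (since then $u$ and $v$ agree in parity); and under rule D the ordinary median is a majority operation.

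Third, and this is the heart of the matter, comes \emph{edge preservation}. Let $(a,a'),(b,b'),(c,c')\in E(\mathcal{T})$; the task is to show $f_1(a,b,c)$ and $f_1(a',b',c')$ are adjacent in $\mathcal{T}$ (where adjacency allows equality at looped vertices). I would proceed by case analysis on the pair of rules applied to the two triples. When both triples fall under rule~A, parities are preserved along edges so both land in the same tree-component and the result is immediate from Lemma~\ref{lem:anti-op}. When both fall under rule~D, the key observation is that the median function on the whole reflexive centre (viewed with the attached tree-components subsumed into their roots) is a polymorphism, and the output in both cases lies in the reflexive centre, where we use that median is a polymorphism of any reflexive tree. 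The matching cases for rule~B (respectively~C) on both sides follow from the fact that $\mathrm{meet}(u,v)$ and $\mathrm{meet}(u',v')$ differ by at most one level in the tree, together with the $[-1]$ offset that tracks the parity swap on edges; in case~C on both sides, the output is the root $0$ of the relevant tree-component, which is looped, so $f_1(a,b,c)=f_1(a',b',c')=0$ is self-adjacent.

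The main obstacle, and where I would spend most of the care, is the \emph{mixed} cases where the two triples fall under different rules. This happens precisely when some of the edges $(a,a')$, $(b,b')$, $(c,c')$ straddle the boundary between a tree-component and the centre, or cross between two tree-components sharing a looped root. Here I would exploit loop-connectivity of $\mathcal{T}$ crucially: the reflexive centre being connected and all-looped, and the fact that the boundary vertex $0$ is looped in every instance, means that the outputs on the two sides can always be joined by a loop or by an edge internal to the reflexive part. Concretely, transitions such as (A on one side, B on the other) correspond to exactly one argument stepping out of the common tree-component into the centre, in which case one tracks how $f_0$'s value compares to $\mathrm{meet}(u,v)[-1]$ by walking along the affected branch. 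Transitions (B/C $\leftrightarrow$ D) are handled using the loop at the root $0$: the side using D outputs a vertex in the centre adjacent to (or equal to) $0$, while the B/C side outputs either $0$ itself or a vertex one step into the tree-component, and adjacency is verified by a small diagram of possibilities. I expect that roughly half a dozen boundary configurations must be written out in detail, but each is routine once the loop-connectedness is invoked.
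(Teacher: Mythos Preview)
Your proposal is correct and follows essentially the same route as the paper: a brief well-definedness check, then an edge-preservation case analysis hinging on which tree-components the arguments occupy, with the crucial observation that transitions across the tree-component/centre boundary land at or near the reflexive part where loops secure adjacency. The paper organises the cases by the component-membership pattern of $(a,b,c)$ and by whether $f_1(a,b,c)$ lands in the reflexive centre, whereas you organise by the pair of rules applied to the two triples; these are two slicings of the same analysis, and your version is if anything more explicit (you also verify the majority identities, which the paper leaves implicit).
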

\begin{proof}
Well-definition can be seen by noting two things. Firstly, Rule B always returns an element in $u$ and $v$'s tree-component. Secondly, if a $0$ is involved that is a $0$ in more than one (\mbox{i.e.} two) relevant tree-components, then Rules B and C return the same value.

A loop-connected tree is either irreflexive, in which case we refer to Lemma~\ref{lem:anti-op}, or is exactly of the form given above (constructed by adding irreflexive trees to a reflexive tree). In the latter case, let us consider $f_1(a,b,c)$ and $f_1(a',b',c')$ such that $(a,a')$,  $(b,b')$ and $(c,c') \in E(\mathcal{T})$.

If $a,b,c$ are all in the same tree-component, then either $a',b',c'$ are all in the same tree-component (and we use Rule A), or one of  Rules B or C applies. We argue symmetrically if $a',b',c'$ are all in the same tree-component.

Otherwise, neither $a,b,c$ nor $a',b',c'$ are triples in the same tree-component. Suppose that two elements from $a,b,c$ (w.l.o.g., $a$ and $b$) are from the same tree-component, and $c$ is from a different tree-component. If $f_1(a,b,c)$ is not the $0$ in $a$ and $b$'s tree-component, then $a'$ and $b'$ are in the same tree-component as $a$ and $b$, with the same respective parities, and either of Rules B or C applies.

The remaining cases all involve $f_1(a,b,c)$ mapping to the reflexive centre, and Rules B, C or D apply.
\end{proof}

\begin{proposition}
\label{prop:loop-connected}
If $\mathcal{T}$ is a loop-connected tree, then QCSP$(\mathcal{T})$ is in NL.
\end{proposition}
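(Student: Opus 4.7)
The plan is to combine three ingredients already available: the majority polymorphism $f_1$ produced by Lemma~\ref{lem:maj-f1}, Chen's collapsibility reduction from \cite{hubie-sicomp} (the machinery foreshadowed in the introduction), and the result of Dalmau and Krokhin \cite{DalmauK08} that every constraint language invariant under a majority operation has CSP in NL.

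First I would observe that $f_1$, being a majority operation, automatically satisfies $f_1(x,x,x)=x$ and is therefore idempotent, hence in particular surjective. This surjective near-unanimity polymorphism of arity $3$ is exactly the hypothesis required to invoke Chen's collapse \cite{hubie-sicomp}: an instance $\forall \overline{x}\,\exists \overline{y}\,\phi$ of QCSP$(\mathcal{T})$ is satisfied by $\mathcal{T}$ if and only if it is satisfied when each universally quantified variable is restricted to range over some $2$-element subset of $T$. Since $|T|$ is a constant, this produces a polynomial-size ensemble of instances of CSP$(\mathcal{T}^c)$, where $\mathcal{T}^c$ denotes $\mathcal{T}$ enriched by finitely many constants.

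Next I would check that $\mathcal{T}^c$ still admits $f_1$ as a polymorphism. For this one only needs $f_1$ to preserve each singleton unary relation associated with an added constant $c$, namely $f_1(c,c,c)=c$; this is precisely idempotence, which $f_1$ already enjoys. Consequently, by \cite{DalmauK08}, every CSP$(\mathcal{T}^c)$ in the ensemble lies in NL, and combining the polynomially many NL decisions yields NL membership for QCSP$(\mathcal{T})$ itself.

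The principal obstacle has in fact been cleared in Lemma~\ref{lem:maj-f1}, since the construction of a genuine majority polymorphism is the step where the loop-connected tree structure is really exploited. What remains, and the only thing I would want to be careful about, is matching the precise hypotheses of the flavour of Chen's collapsibility theorem in \cite{hubie-sicomp} to the ternary near-unanimity polymorphism $f_1$; once that identification is made, the rest of the argument is essentially bookkeeping.
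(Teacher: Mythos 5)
Your proposal is correct and follows exactly the paper's own argument: invoke the majority polymorphism from Lemma~\ref{lem:maj-f1}, apply the collapse of \cite{hubie-sicomp} to reduce to a polynomial ensemble of CSP$(\mathcal{T}^c)$ instances, and conclude via \cite{DalmauK08}. The only difference is that you spell out the bookkeeping (idempotence giving surjectivity and preservation of the added constants) that the paper leaves implicit.
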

\begin{proof}
Since $\mathcal{T}$ admits a majority polymorphism, from Lemma~\ref{lem:maj-f1}, it follows from \cite{hubie-sicomp} that QCSP$(\mathcal{T})$ reduces to the verification of a polynomial number of instances of CSP$(\mathcal{T}^c)$, each of which is in NL by \cite{DalmauK08}. The result follows.
\end{proof}
In fact, for our later results, we only require the majority polymorphism on trees all of whose loops are in a connected component involving leaves. However, we give the fuller result because it is not much more difficult and because we can show these are the only trees admitting a majority polymorphism.
\begin{proposition}
Let $\mathcal{T}$ be a tree that is not loop-connected, then $\mathcal{T}$ does not admit a majority polymorphism. 
\end{proposition}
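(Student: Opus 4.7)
The plan is to suppose for contradiction that $\mathcal{T}$ admits a majority polymorphism $f$ and then construct an impossibly short walk in $\mathcal{T}$. First I would pick looped vertices $u, v \in T$ lying in distinct connected components of the loop subgraph whose distance $k$ in $\mathcal{T}$ is minimum; on the unique $u$-to-$v$ path $u = w_0, w_1, \ldots, w_k = v$, every interior vertex is then necessarily unlooped, and $k \geq 2$.

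The key local step is to show that $f(u, w_1, w_2) = u$. I would establish this by noting that the triple $(u, w_1, w_2)$ is adjacent in $\mathcal{T}^3$ both to $(u, u, w_1)$ and to $(w_1, u, w_1)$: coordinatewise adjacency is immediate since $u$ is looped and all other pairs are tree edges. The majority identity gives $f(u, u, w_1) = u$ and $f(w_1, u, w_1) = w_1$, so $f(u, w_1, w_2)$ must simultaneously be a neighbour of $u$ and a neighbour of $w_1$. Because $\mathcal{T}$ is acyclic, $u$ is looped, and $w_1$ is unlooped, this common neighbourhood collapses to $\{u\}$: the vertex $u$ itself works via its self-loop together with the edge $uw_1$, whereas any other common neighbour would produce a triangle $u, x, w_1, u$ in the tree. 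An entirely symmetric argument, using that $v$ is looped and $w_{k-1}$ is unlooped, gives $f(w_{k-2}, w_{k-1}, v) = v$.

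The final step is a walk-length contradiction. Set $t_j := f(w_j, w_{j+1}, w_{j+2})$ for $0 \leq j \leq k - 2$. For each $j \leq k - 3$, the triples $(w_j, w_{j+1}, w_{j+2})$ and $(w_{j+1}, w_{j+2}, w_{j+3})$ are adjacent in $\mathcal{T}^3$ (purely via tree edges), so $(t_j, t_{j+1}) \in E(\mathcal{T})$. By the previous step $t_0 = u$ and $t_{k-2} = v$, so $t_0, t_1, \ldots, t_{k-2}$ is a walk of length $k - 2$ from $u$ to $v$ in $\mathcal{T}$. Since the tree distance between $u$ and $v$ is $k$ and every walk has length at least the distance, and $k - 2 < k$, this is impossible.

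The only delicate point I anticipate is verifying that $N(u) \cap N(w_1) = \{u\}$ in the presence of possible side-branches at $u$ or $w_1$, but the tree's acyclicity together with $w_1$ being unlooped dispose of it swiftly. Everything else is bookkeeping with the majority identity and coordinatewise adjacency in $\mathcal{T}^3$, and the minimality choice of $u, v$ ensures that the required unlooped interior vertices are genuinely available.
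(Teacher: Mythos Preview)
Your argument is correct, and it takes a genuinely different route from the paper's. The paper fixes the two outer coordinates at the looped endpoints $p$ and $q$ and slides only the middle coordinate along the irreflexive path, using the loops at $p$ and $q$ at every step to maintain adjacency; it then tracks the distance of $f(p,\,\cdot\,,q)$ to $q$ and derives a contradiction from $f(p,q,q)=q$. An initial case split (between $f(p,p{+}1,q)$ and $f(p{+}1,p,q)$, noting both cannot equal the unlooped $p{+}1$) is needed to start that induction.

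You instead slide all three coordinates diagonally along the path, setting $t_j=f(w_j,w_{j+1},w_{j+2})$, and obtain a $u$--$v$ walk of length $k-2<k$. The loops are used only once at each end, in your local common-neighbourhood argument pinning $t_0=u$ and $t_{k-2}=v$; acyclicity of $\mathcal{T}$ does the rest. This avoids the paper's case split and the running distance estimate, and it even handles $k=2$ uniformly (your two pinning arguments force $u=f(u,w_1,v)=v$). The paper's approach, on the other hand, is closer to the standard ``hold the ends and probe the middle'' technique common in polymorphism arguments, which may generalise more readily to other templates. Both proofs rely on exactly the same structural input: a minimal looped--unlooped--looped subpath.
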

\begin{proof}
Let $\mathcal{T}$ be such a tree; it follows that it contains an induced path $\mathcal{P}$ of the form $\alpha 1 0^c 1 \beta$. Suppose $\mathcal{P}$ were such a path on $n$ vertices, with $|\alpha|=a$, and $f$ is a majority operation on $P \subseteq T$ with $P:=[n]$ (of course the range of $f$ might include elements in $T \setminus P$). Suppose that $f$ would be a polymorphism of $\mathcal{P}\subseteq \mathcal{T}$ (we will derive a contradiction). Let $p:=a+1 < q:=a+b+1$ be the positions of the loops at the ends of an irreflexive path. Since $f(p,p,q)=p$, we deduce that $f(p,p+1,q)=p$, $p+1$ or any other vertex adjacent to $p$ in $\mathcal{T}$. The same applies to $f(p+1,p,q)$, and it follows that both of $f(p,p+1,q)$ and $f(p+1,p,q)$ can not be $p+1$. W.l.o.g. let $f(p,p+1,q)=r$, where the distance from $r$ to $q$ in $\mathcal{T}$ is greater than or equal to $p-q$. Deduce in turn that the distance from $f(p,p+1,q) \leq p$, \ldots, $f(p,q-1,q \leq q-2)$ to $q$ is greater than or equal to $p-q$, \ldots, $2$. But $f(p,q,q)=q$, and we have the desired contradiction.
\end{proof}

\subsection{Paths of the form $\alpha 0^a$ where $|\alpha|\leq a+1$}

We will now explore the tractability of paths of the form $\alpha 0^a$, where $|\alpha|\leq a+1$. In the proof of the following lemma we deviate from the normalised domain of $[n]$ for a path on $n$ vertices, for pedagogical reasons that will become clear.
\begin{lemma}
\label{lem:surhom}
There is a surjective homomorphism from ${\mathcal{P}_{10^m}}^2$ to $\mathcal{P}_{0^m10^m}$.
\end{lemma}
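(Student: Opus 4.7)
The plan is to exhibit a single concrete surjective homomorphism $f\colon \mathcal{P}_{10^m}^2 \to \mathcal{P}_{0^m10^m}$ and verify its two properties by a case analysis over the edges of the product. For pedagogy I would first relabel: place $\mathcal{P}_{10^m}$ on the vertex set $\{0,1,\ldots,m\}$ with the loop at $0$ and edges between consecutive integers, and place $\mathcal{P}_{0^m10^m}$ on the vertex set $\{-m,-m+1,\ldots,m\}$ with the loop at $0$ and the obvious edges. The guiding observation is that $\mathcal{P}_{10^m}^2$ contains a walk of length $2m$ through its unique loop $(0,0)$, namely
\[
W \;=\; (m,m),(m-1,m-1),\ldots,(1,1),(0,0),(0,1),(1,2),\ldots,(m-1,m),
\]
which descends the main diagonal to $(0,0)$, pivots on that loop, and then ascends the super-diagonal. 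Each consecutive pair of $W$ is immediately an edge of the product, so labelling the $k$th vertex $w_k$ of $W$ by $k-m$ produces a bijection of $W$ onto the target, and so yields surjectivity as soon as $f$ is defined consistently with these labels.

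Off $W$ I would define $f$ by the closed formula
\[
f(a,b) \;=\; (-1)^{|a-b|+1}\,\min(a,b) \;+\; \varepsilon(a,b),
\]
where $\varepsilon(a,b)=1$ if $a<b$ and $b-a$ is odd, and $\varepsilon(a,b)=0$ otherwise. The formula reproduces the labels on $W$, namely $f(j,j)=-j$ and $f(j-1,j)=j$. Experimentation with small cases (and a short induction on $|a-b|$, proceeding diagonal by diagonal outwards from $W$) shows that this is in fact the \emph{unique} extension of the walk labelling consistent with adjacency: every $(a,b)\notin W$ has at least two previously labelled neighbours whose neighbourhoods in the target intersect in exactly one point.

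The homomorphism check then splits on the three types of edge in $\mathcal{P}_{10^m}^2$: the loop at $(0,0)$, which maps to $0$ and hence to the loop of the target; ``axis'' edges $(0,b)$--$(0,b\pm1)$ and $(a,0)$--$(a\pm1,0)$ arising from the loop at $0\in\mathcal{P}_{10^m}$; and ``diagonal'' edges $(a,b)$--$(a\pm1,b\pm1)$. The main obstacle will be bookkeeping: the formula has different branches depending on the sign of $a-b$ and the parity of $|a-b|$, and a diagonal step can simultaneously change that parity (an anti-diagonal step $(a,b)\to(a\pm1,b\mp1)$ shifts $|a-b|$ by $\pm2$) and cross the line $a=b$, at which point the correction $\varepsilon$ switches on or off. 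The cleanest organisation is to classify each edge by how $\min(a,b)$, $|a-b|$, and $\mathrm{sign}(a-b)$ change under the step: a parallel diagonal step keeps $|a-b|$ fixed and moves $\min$ by $\pm1$; an anti-diagonal step keeps $\min$ fixed and moves $|a-b|$ by $\pm2$; and any sign flip of $(-1)^{|a-b|+1}$ occurring when a step crosses $a=b$ is exactly cancelled by the jump in $\varepsilon$. With this organisation each subcase reduces to a one-line arithmetic check, and the lemma follows.
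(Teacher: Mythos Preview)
Your explicit map coincides with the paper's: writing $g(i,j)$ for the paper's matrix entry at row $i$, column $j$, one checks directly that your $f(a,b)=g(b,a)$, so you have simply transposed coordinates (harmless, since swapping factors is an automorphism of $\mathcal{P}_{10^m}^2$). The formula is correct, the walk $W$ witnesses surjectivity, and the homomorphism property does hold.

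That said, the case analysis you sketch contains two inaccuracies that would derail a reader trying to fill in the details. First, an anti-diagonal step $(a,b)\to(a\pm1,b\mp1)$ does \emph{not} in general keep $\min(a,b)$ fixed: away from the main and super-diagonals it moves $\min$ by $\pm1$ while shifting $|a-b|$ by $\mp2$. Second, the sign $(-1)^{|a-b|+1}$ never flips along any diagonal edge of the product, since parallel steps fix $|a-b|$ and anti-diagonal steps change it by an even amount; so there is no ``sign flip when a step crosses $a=b$'' to be cancelled by $\varepsilon$. The only place $\varepsilon$ actually jumps is on the anti-diagonal step from $(a,a+1)$ to $(a+1,a)$ straddling the super-diagonal, and there $\min$ genuinely is fixed. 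If you reorganise the verification around how $\min$ and $\varepsilon$ each move (rather than around $|a-b|$), the one-line checks you promise do materialise --- and this is essentially what the paper does, region by region, relative to the leading diagonal.
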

\begin{proof}
Let $[a,b]:=\{a,\ldots,b\}$. Let $E(\mathcal{P}_{10^m}):=\{(i,j):i,j \in [0,m], j=i+1\} \cup \{(0,0)\}$. 
Let $\mathcal{P}_{0^m10^m}$ be the undirected $2m$-path (on $2m+1$ vertices) such that the middle vertex has a self-loop but none of the others do. 
Formally, $E(\mathcal{P}_{0^m10^m})=\{(i,j):i,j \in [-m,m], j=i+1\} \cup \{(0,0)\}$. The numbering of the vertices is important in the following proof.
We will envisage ${\mathcal{P}_{10^m}}^2$ as a square $(m+1) \times (m+1)$ matrix whose top left corner is the vertex $(0,0)$ which has the self-loop. The entry in the matrix tells one where in $\mathcal{P}_{0^m10^m}$ the corresponding vertex of ${\mathcal{P}_{10^m}}^2$ is to map. It will then be a straightforward matter to verify that this is a surjective homomorphism. By way of example, we give the matrix for $m+1:=7$ in Figure~\ref{fig:figure2} (for all smaller $m$ one may simply restrict this matrix by removing rows and columns from the bottom right). $0$ is sometimes written as $-0$ for (obvious) aesthetic reasons -- we will later refer to the two parts \emph{plus} and \emph{minus} of the matrix. 
\begin{figure}[ht]
\begin{minipage}[b]{0.5\linewidth}
\centering
\includegraphics[scale=1]{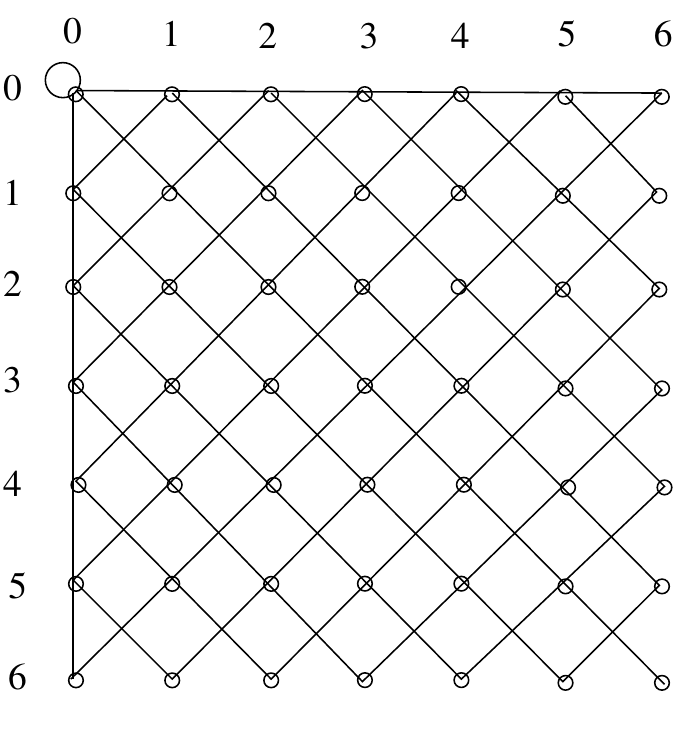}
\caption{${\mathcal{P}_{10^6}}^2$ and its \ldots}
\label{fig:figure1}
\end{minipage}
\hspace{0.5cm}
\begin{minipage}[b]{0.5\linewidth}
\centering
\[
\begin{array}{rrrrrrr}
-0 & 0 &-0 & 0 &-0 & 0 &-0 \\
\\
 1 &-1 & 1 &-1 & 1 &-1 & 1 \\
\\
-0 & 2 &-2 & 2 &-2 & 2 &-2 \\
\\
 1 &-1 & 3 &-3 & 3 &-3 & 3 \\
\\
-0 & 2 &-2 & 4 &-4 & 4 &-4 \\
\\
 1 &-1 & 3 &-3 & 5 &-5 & 5 \\
\\
-0 & 2 &-2 & 4 &-4 & 6 &-6 \\
\end{array}
\]
\caption{\ldots homomorphism to $\mathcal{P}_{0^610^6}$}
\label{fig:figure2}
\end{minipage}
\end{figure}

We refer to the far left-hand column as $0$. Note that the leading diagonal enumerates $-0,...,-m$. Beneath the leading diagonal, the matrix is periodic in each column (with period two). In general, the $j$th column of this matrix will read, from top to bottom:
\[ (-1)^{j-1}.0,\ (-1)^{j}.1,\ (-1)^{j+1}.2,\ \ldots,\ (-1)^{j+j-2}.(j-1),\ -j,\ j+1,\ -j,\ j+1,\ \mbox{etc.} \]
Plainly this map is surjective, we must verify that it is a homomorphism. Adjacent entries in the top row and in the left column must be adjacent -- and this is clearly the case (remember there is a self-loop on $0$). Apart from this, we must have adjacency in the diagonals, thus any point in the matrix must be adjacent (at distance one) from each of its neighbours in the compass directions NW, NE, SE, SW. We consider seven cases and in each of these give the change of value in the matrix for each of the four directions. In the following, LD abbreviates leading diagonal. We begin by considering SW and NW of the leading diagonal (both plusses and minusses), respectively. 

\begin{center}
\begin{tabular}{cccc}
$$
\xymatrix{
+1 \ar@/_0pc/@{<->}[dr] & -1 \ar@/_0pc/@{<->}[dl] \\
+1 & -1
}
$$
&
$$
\xymatrix{
-1 \ar@/_0pc/@{<->}[dr] & +1 \ar@/_0pc/@{<->}[dl] \\
-1 & +1
}
$$
&
$$
\xymatrix{
+1 \ar@/_0pc/@{<->}[dr] & +1 \ar@/_0pc/@{<->}[dl] \\
-1 & -1
}
$$
&
$$
\xymatrix{
-1 \ar@/_0pc/@{<->}[dr] & -1 \ar@/_0pc/@{<->}[dl] \\
+1 & +1
}
$$\\
SW of LD; $-$s & SW of LD; $+$s & NW of LD; $-$s & NW of LD; $+$s 
\end{tabular}
\end{center}

\noindent We now conclude with the three boundary diagonals (the leading diagonal iteself as well as the two diagonals which are adjacent to it).

\begin{center}
\begin{tabular}{cccc}
$$
\xymatrix{
-1 \ar@/_0pc/@{<->}[dr] & -1 \ar@/_0pc/@{<->}[dl] \\
-1 & +1
}
$$
&
$$
\xymatrix{
+1 \ar@/_0pc/@{<->}[dr] & +1 \ar@/_0pc/@{<->}[dl] \\
+1 & -1
}
$$
&
$$
\xymatrix{
-1 \ar@/_0pc/@{<->}[dr] & -1 \ar@/_0pc/@{<->}[dl] \\
+1 & +1
}
$$
\\
Below LD; $+$s & On LD; $-$s & Above LD; $+$s
\end{tabular}
\end{center}
\end{proof}  

\begin{proposition}
\label{prop:0-unbalanced}
If $\mathcal{P}$ is of the form $\alpha 0^a$ where $|\alpha|\leq a+1$, then QCSP$(\mathcal{P})$ is in P.
\end{proposition}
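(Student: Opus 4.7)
The plan is to exhibit, for each path $\mathcal{P} = \mathcal{P}_{\alpha 0^a}$ with $|\alpha| \leq a+1$, a loop-connected path $\mathcal{S}$ together with surjective homomorphisms $\mathcal{P} \surhom \mathcal{S}$ and $\mathcal{S}^2 \surhom \mathcal{P}$. By the template-equivalence criterion from \cite{LICS2008} recalled in the introduction, this forces QCSP$(\mathcal{P}) = $ QCSP$(\mathcal{S})$, and Proposition~\ref{prop:loop-connected} then places QCSP$(\mathcal{S})$ in NL $\subseteq$ P. If $\alpha$ contains no $1$, then $\mathcal{P}$ is irreflexive and so vacuously loop-connected, and Proposition~\ref{prop:loop-connected} applies directly; hence I assume $\alpha$ has at least one loop. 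Set $\lambda := \lambda_P$ and $\mathcal{S} := \mathcal{P}_{10^\lambda}$, a single-looped (hence loop-connected) path. Crucially $\lambda \geq a$, because every loop of $\mathcal{P}$ lies in the first $|\alpha|$ positions, so the rightmost vertex of $\mathcal{P}$ is at distance $\geq a$ from every loop.

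Both required homomorphisms reduce to walk-constructions: a homomorphism between partially reflexive paths is nothing other than a walk in the target of length one less than the source's size, required to sit on a looped vertex of the target whenever the corresponding source-position is looped. For $\mathcal{P} \surhom \mathcal{S}$, I build a walk of length $|\mathcal{P}|-1$ in $\mathcal{S}$ that is anchored at $0$ at every loop-position of $\mathcal{P}$. Each sub-segment of $\mathcal{P}$ lying strictly between two consecutive loops contributes a walk from $0$ to $0$ in $\mathcal{S}$ of prescribed length; such walks exist for every length, since the self-loop at $0$ absorbs any parity mismatch. The end segments beyond the extremal loops of $\mathcal{P}$ are realised by walks in $\mathcal{S}$ that start (or end) at $0$. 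Surjectivity is secured by routing the walk, on whichever segment of $\mathcal{P}$ realises the maximal loop-distance $\lambda$, all the way out to vertex $\lambda \in \mathcal{S}$.

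For $\mathcal{S}^2 \surhom \mathcal{P}$, I compose Lemma~\ref{lem:surhom}, which supplies ${\mathcal{P}_{10^\lambda}}^2 \surhom \mathcal{P}_{0^\lambda 1 0^\lambda}$, with a surjective $g : \mathcal{P}_{0^\lambda 1 0^\lambda} \surhom \mathcal{P}$. The map $g$ is encoded by a walk of length $2\lambda$ in $\mathcal{P}$ whose middle value must lie on a looped vertex. I walk monotonically from vertex $1$ to vertex $|\alpha|+a$ of $\mathcal{P}$, which uses $|\alpha|+a-1$ steps and visits every vertex (so surjectivity comes for free); the residual $2\lambda - (|\alpha|+a-1)$ steps, which are nonnegative because $2\lambda \geq 2a \geq |\alpha|+a-1$ by the hypothesis $|\alpha|\leq a+1$, are spent stalling at a judiciously chosen loop of $\alpha$, with the stall positioned so that the middle of the walk sits on the stalled loop.

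The main technical delicacy throughout is the parity matching of sub-walk lengths against the distances between their prescribed endpoints; the unique looped vertex available in the relevant target ($0 \in \mathcal{S}$, or the chosen loop of $\mathcal{P}$) supplies the single extra ``beat'' needed to correct any mismatch, and the slack $2\lambda - (|\alpha|+a-1) \geq 0$ granted by the hypothesis is exactly what guarantees that enough stalls are available.
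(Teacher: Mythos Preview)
Your approach is correct and essentially coincides with the paper's. You parameterise the auxiliary path by $\lambda := \lambda_P$, while the paper uses $m$, the distance from the rightmost loop to the right end; under the hypothesis $|\alpha|\leq a+1$ these two quantities are equal, so you are constructing the same $\mathcal{S}=\mathcal{P}_{10^m}$ and invoking Lemma~\ref{lem:surhom} in the same way. The only cosmetic difference is that the paper dispatches $\mathcal{P}\surhom\mathcal{P}_{10^m}$ more bluntly (collapse everything left of the rightmost loop onto the loop, then the identity), whereas you describe a general walk construction; both work, though you might note that choosing the \emph{rightmost} loop as the stall point in your map $g:\mathcal{P}_{0^\lambda 1 0^\lambda}\surhom\mathcal{P}$ is precisely what makes the middle of the walk land on a loop.
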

\begin{proof}
Let $\mathcal{P}$ be of the form $\alpha 0^a$ where $|\alpha|\leq a+1$. If $\mathcal{P}$ is irreflexive then the result follows from \cite{CiE2006}. Otherwise, $\mathcal{P}$ contains a loop, the right-most of which is $m$ vertices in from the right-hand end (on or left of centre).
We claim $\mathcal{P} \surhom \mathcal{P}_{10^m}$ and ${\mathcal{P}_{10^m}}^2 \surhom \mathcal{P}$. It then follows from \cite{LICS2008} that QCSP$(\mathcal{P})$ = QCSP$(\mathcal{P}_{10^m})$, whereupon membership in NL follows from Proposition~\ref{prop:loop-connected}.

The surjective homomorphism from $\mathcal{P}$ to $\mathcal{P}_{10^m}$ is trivial: map all vertices to the left of the right-most loop of $\mathcal{P}$ to the loop of $\mathcal{P}_{10^m}$, and let the remainder of the map follows the natural isomorphism. The surjective homomorphism from ${\mathcal{P}_{10^m}}^2 \surhom \mathcal{P}$ follows from the obvious surjective homomorphism from $\mathcal{P}_{0^m10^m}$ to $\mathcal{P}$, via Lemma~\ref{lem:surhom}.
\end{proof}

\subsection{Paths of the form $\alpha 1^b 0^a$ where $b\geq 1$ and $|\alpha|=a$}
\begin{lemma}
\label{lem:surhom2}
For $b\geq 1$, there is a surjective homomorphism from ${\mathcal{P}_{1^a1^b0^a}}^2$ to $\mathcal{P}_{0^a1^b0^a}$.
\end{lemma}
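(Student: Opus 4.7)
The plan is to extend the matrix construction of Lemma~\ref{lem:surhom} by ``fattening'' both the single source loop and single target loop into loop plateaux. Label the source $\mathcal{P}_{1^a 1^b 0^a} = \mathcal{P}_{1^{a+b} 0^a}$ with vertices $\{0, 1, \ldots, 2a+b-1\}$ and loops on $\{0, \ldots, a+b-1\}$, label the target $\mathcal{P}_{0^a 1^b 0^a}$ with vertices $\{-a, \ldots, 0, \ldots, a+b-1\}$ and loops on $\{0, \ldots, b-1\}$, and view ${\mathcal{P}_{1^a 1^b 0^a}}^2$ as a $(2a+b)\times(2a+b)$ matrix $M$. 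The key structural observation is that the top-left $(a+b)\times(a+b)$ block of $M$ consists entirely of self-loopy cells (both coordinates lie in the source loop plateau), so every entry there must lie in the target loop plateau $\{0, \ldots, b-1\}$; in contrast with Lemma~\ref{lem:surhom}, where this loopy region degenerates to the single cell $(0,0)$, here the block is substantially larger and provides exactly the room needed to surject onto all $b$ target loops.

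The construction proceeds in two pieces. Inside the top-left $(a+b)\times(a+b)$ loopy block, place values from $\{0, \ldots, b-1\}$ in a suitable staircase pattern with adjacent cells differing by at most one, arranged so that every target loop is attained. In the L-shaped region outside the block (where at least one coordinate is non-loopy), reproduce the alternating-sign pattern of Lemma~\ref{lem:surhom} so that one diagonal enumerates the right-side non-loops $\{b, \ldots, a+b-1\}$ and a parallel diagonal enumerates the left-side non-loops $\{-1, \ldots, -a\}$. The two regions are joined by selecting the loopy-block entries along the boundary (its bottom row and right column) to be compatible with the alternating pattern just outside. Verification then follows the template of Lemma~\ref{lem:surhom}'s case analysis on local NE/NW/SE/SW neighborhoods and boundary diagonals, confirming that adjacent matrix entries are always at target-distance at most one and that loopy source-cells map to loopy target-vertices; surjectivity is immediate from the coverage described above.

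The main obstacle is the gluing step: arranging the staircase pattern in the loopy block and the alternating pattern outside so that \emph{both} non-loop regions $\{-a, \ldots, -1\}$ and $\{b, \ldots, a+b-1\}$ of the target are reached simultaneously while preserving adjacency along the boundary. For $b=1$ the loopy block collapses to a single cell, and the statement already follows from Lemma~\ref{lem:surhom} composed with the retraction $\mathcal{P}_{1^{a+1} 0^a} \surhom \mathcal{P}_{1 0^a}$; the novel content is therefore the case $b \geq 2$, where the extra width of the loopy block is precisely what enables both sides of the target to be covered at once.
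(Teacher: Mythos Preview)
Your high-level decomposition into loopy block versus L-shaped complement is sound, and the observation that the case $b=1$ already follows from Lemma~\ref{lem:surhom} composed with the retraction $\mathcal{P}_{1^{a+1}0^a}\surhom\mathcal{P}_{10^a}$ is correct (the paper does not take this shortcut, but it is valid). The gap is in your treatment of the L-shape for $b\geq 2$. You propose to ``reproduce the alternating-sign pattern of Lemma~\ref{lem:surhom}'' there and to verify it via the NE/NW/SE/SW case analysis. But the two strips of the L (where exactly one coordinate lies in the source loop plateau) carry \emph{more} adjacencies than diagonals: if the column index $j<a+b$ is looped then $(i,j)$ and $(i\pm 1,j)$ are adjacent in the product even when $i\geq a+b$, so vertically consecutive matrix entries must map to target-adjacent vertices. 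Lemma~\ref{lem:surhom}'s pattern does not have this property --- for instance its second column reads $0,-1,2,-1,2,\ldots$ down the rows, and $-1$ and $2$ are at distance $3$ in the target path. The diagonal-only verification you plan to import will therefore miss exactly these N/S (and symmetrically E/W) constraints, and the transported pattern is not a homomorphism.

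The paper's construction is not a fattening of Lemma~\ref{lem:surhom} but a new pattern designed around these extra adjacencies. With the source labelled $[-a,a]$ (loops on $\{-a,\ldots,0\}$), the entire upper half of the matrix (rows $-a,\ldots,0$) is set to $0$; in the lower half the central column is $0$, and the SW and SE quadrants are mirror images, the $j$th SE column reading $1,2,\ldots,j,j-1,j,j-1,\ldots$ downward. This column-monotone-then-oscillating scheme makes vertical neighbours differ by $1$ by construction, after which the diagonals are checked separately. For $b>1$ the single central $0$ is replaced by a width-$b$ central cross carrying the values $0_1,\ldots,0_b$. So the missing idea is precisely a pattern that respects the richer (king-move) adjacency structure induced by the reflexive half of the source; Lemma~\ref{lem:surhom}'s checkerboard does not, and something like the paper's column pattern is needed instead.
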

\begin{proof}
We begin by giving the proof for the case $b=1$.
Let $\mathcal{P}_{1^a10^a}$ be the undirected $2a$-path (on $2a+1$ vertices) such that the middle vertex and all vertices to the left have a self-loop but none of the other vertices do. 
Formally, $E(\mathcal{P}_{1^a10^a})=\{(i,j):i,j \in [-a,a], j=i+1\} \cup \{(-a,-a),\ldots,(0,0)\}$. We see $\mathcal{P}_{0^a1^b0^a}$ as a subgraph of $\mathcal{P}_{1^a1^b0^a}$ in the obvious fashion. The numbering of the vertices is important in the following proof.
We will envisage ${\mathcal{P}_{1^a10^a}}^2$ as a square $(2a+1) \times (2a+1)$ matrix whose top left corner is the vertex $(-a,-a)$. The entry in the matrix tells one where in $\mathcal{P}_{0^a10^a}$ the corresponding vertex of ${\mathcal{P}_{1^a10^a}}^2$ is to map. It will then be a straightforward matter to verify that this is a surjective homomorphism. By way of example, we give the matrix for $a:=4$ in Figure~\ref{fig:figure22}.
\begin{figure}[ht]
\begin{minipage}[b]{0.5\linewidth}
\centering
\includegraphics[scale=1]{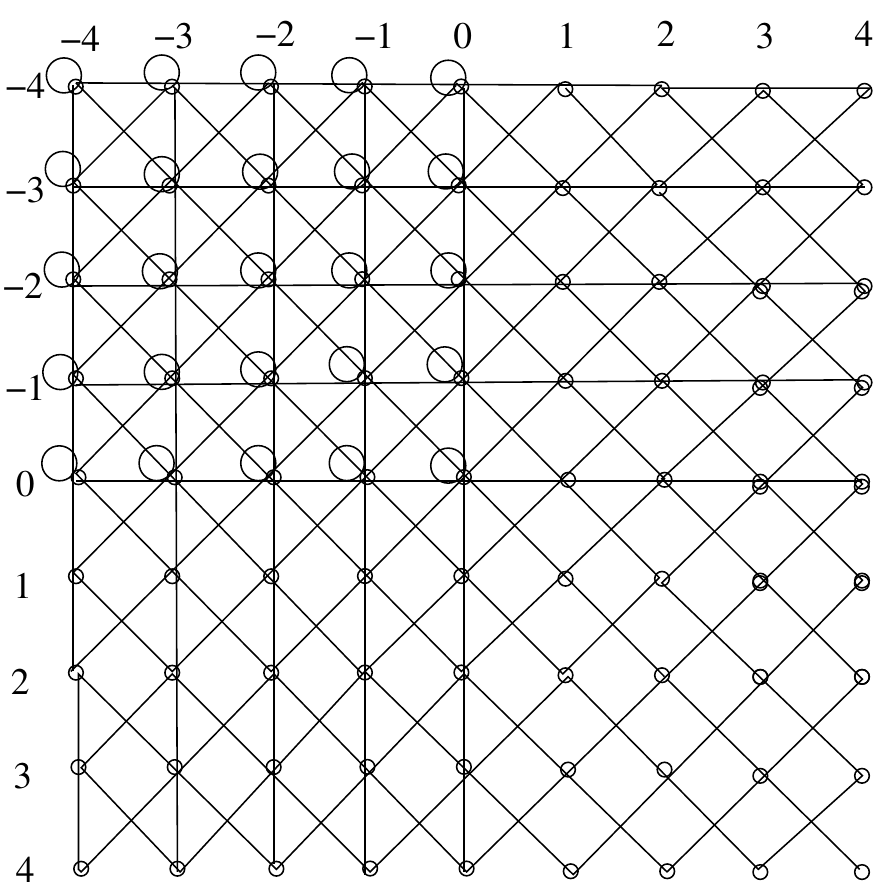}
\caption{${\mathcal{P}_{1^410^4}}^2$ and its \ldots}
\label{fig:figure11}
\end{minipage}
\hspace{0.5cm}
\begin{minipage}[b]{0.5\linewidth}
\centering
\[
\begin{array}{rrrrrrrrr}
 0 & 0 & 0 & 0\ & 0\ & 0\ & 0\ & 0\ & 0 \\
\\
 0 & 0 & 0 & 0\ & 0\ & 0\ & 0\ & 0\ & 0 \\
\\
 0 & 0 & 0 & 0 & 0 & 0 & 0 & 0 & 0 \\
\\
 0 & 0 & 0 & 0 & 0 & 0 & 0 & 0 & 0 \\
\\
 0 & 0 & 0 & 0 & 0 & 0 & 0 & 0 & 0 \\
\\
-1 &-1 &-1 &-1 & 0 & 1 & 1 & 1 & 1 \\
\\
-2 &-2 &-2 & 0 & 0 & 0 & 2 & 2 & 2 \\
\\
-3 &-3 &-1 &-1 & 0 & 1 & 1 & 3 & 3 \\
\\
-4 &-2 &-2 & 0 & 0 & 0 & 2 & 2 & 4 \\
\end{array}
\]
\caption{\ldots homomorphism to $\mathcal{P}_{0^410^4}$}
\label{fig:figure22}
\end{minipage}
\end{figure}
We consider the matrix to have a central cross of $0$s -- indexed by column $0$ and row $0$ -- whose removal leaves four segments. The NW and NE segments are all $0$s and the SW and SE segments are isomorphic under a reflection that maps $-x$ to $x$. We henceforth discuss only the SE submatrix -- columns $\geq 1$ and rows $\geq 1$. Beneath the leading diagonal, the submatrix is periodic in each column (with period two). In general, the $j$th column of the submatrix will read, from top to bottom: $1\, \ldots, j , j-1 ,j ,j-1 , \mbox{etc.}$.

This map is certainly surjective, but we must now verify that it is a homomorphism. In the central cross and NW and NE this is trivial. Where the central cross borders the SW and SE segments this is again easy to see. The argument for within the SW and SE segments is the same. Note that we must have more adjacencies in the SW than the SE -- so we will undertake our argument there. In the SW, we require adjacencies along both diagonals and along the vertical. For the vertical, the necessary relationship, decreasing by one until the rising diagonal, and then oscillating with period two, is immediate. For the diagonals running SW-NE, the values are incrementing by one and the result is clear. For the diagonals running NW-SE, the relationship is slightly more sophisticated, with the values dropping from $-1$ until their minimum, on or immdediately before the rising diagonal, and then climbing again to either $0$ or $-1$. In any case the necessary property holds and the result follows.    

In the case that $b>1$, we will keep the vertices $-a,\ldots,-1$ and $1,\ldots,a$ as the vertices to the left and to the right of the middle loops $0_1,\ldots,0_b$, respectively. Formally, we envisage 
\[
\begin{array}{ll}
E(\mathcal{P}_{1^a1^b0^a}):= & \{(i,j):i,j \in [-a,-1],[1,a], j=i+1\} \cup \{(0_i,0_j):i,j \in [1,b],\} \cup \\ 
& \{(-1,0_1),(0_1,-1),(0_b,1),(1,0_b \} \cup \{(\{(-a,-a),\ldots,(-1,-1),(0_1,0_1),\ldots,(0_b,0_b)\}.
\end{array}
\] 
Now the central cross of the matrix giving the homomorphism from ${\mathcal{P}_{1^a1^b0^a}}^2$ consists of $b$ rows and $b$ columns indexed respectively by $0_1$ to $0_b$. We may assume that the entries of the $i$ such column $0_i$ are uniformly $0_i$, for $i \in [b]$. This accounts for the vertical in the cross. We may now write $0_1$ into the entire left-side of the cross and the NW segment, and $0_b$ into the right-hand side of the cross and the NE segment. The SW and SE segments are specified as before ($0_1$ should be used as $0$ in the SW segment and $0_b$ should be used as $0$ in the SE segment. By way of example, we give the matrix for $a:=5$ and $b:=3$ in Figure~\ref{fig:big-matrix}. The proof of the correctness of this homomorphism is essentially as before (in the proof of Lemma~\ref{lem:surhom}).
\end{proof}
\begin{figure}
\[
\begin{array}{rrrrrrrrrrrrr}
 0_1 & 0_1 & 0_1 & 0_1 & 0_1 & 0_1 & 0_2 & 0_3 & 0_3 & 0_3 & 0_3 & 0_3 & 0_3 \\
\\
 0_1 & 0_1 & 0_1 & 0_1 & 0_1 & 0_1 & 0_2 & 0_3 & 0_3 & 0_3 & 0_3 & 0_3 & 0_3 \\
\\
 0_1 & 0_1 & 0_1 & 0_1 & 0_1 & 0_1 & 0_2 & 0_3 & 0_3 & 0_3 & 0_3 & 0_3 & 0_3 \\
\\
 0_1 & 0_1 & 0_1 & 0_1 & 0_1 & 0_1 & 0_2 & 0_3 & 0_3 & 0_3 & 0_3 & 0_3 & 0_3 \\
\\
 0_1 & 0_1 & 0_1 & 0_1 & 0_1 & 0_1 & 0_2 & 0_3 & 0_3 & 0_3 & 0_3 & 0_3 & 0_3 \\
\\
 0_1 & 0_1 & 0_1 & 0_1 & 0_1 & 0_1 & 0_2 & 0_3 & 0_3 & 0_3 & 0_3 & 0_3 & 0_3 \\
\\
 0_1 & 0_1 & 0_1 & 0_1 & 0_1 & 0_1 & 0_2 & 0_3 & 0_3 & 0_3 & 0_3 & 0_3 & 0_3 \\
\\
 0_1 & 0_1 & 0_1 & 0_1 & 0_1 & 0_1 & 0_2 & 0_3 & 0_3 & 0_3 & 0_3 & 0_3 & 0_3 \\
\\
   -1 &   -1 &   -1 &   -1 &   -1 & 0_1 & 0_2 & 0_3 &   1 &    1 &   1  &    1 &    1  \\
\\
   -2 &   -2 &   -2 &   -2 & 0_1 & 0_1 & 0_2 & 0_3 & 0_3 &    2 &   2  &    2 &    2  \\
\\
   -3 &   -3 &   -3 &   -1 &   -1 & 0_1 & 0_2 & 0_3 &   1 &    1 &   3  &    3 &    3  \\
\\
   -4 &   -4 &   -2 &   -2 & 0_1 & 0_1 & 0_2 & 0_3 & 0_3 &    2 &   2  &    4 &    4  \\
\\
   -5 &   -3 &   -3 &   -1 &   -1 & 0_1 & 0_2 & 0_3 &   1 &    1 &   3  &    3 &    5  \\
\end{array}
\]
\caption{The homomorphism of ${\mathcal{P}_{1^51^3 0^5}}^2$ to $\mathcal{P}_{0^51^3 0^5}$}
\label{fig:big-matrix}
\end{figure}
\begin{proposition}
\label{prop:0-eccentric}
If $\mathcal{P}$ is of the form $\alpha 1^b 0^a$ for $b\geq 1$ and $|\alpha|=a$, then QCSP$(\mathcal{P})$ is in P.
\end{proposition}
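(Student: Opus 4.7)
The plan is to generalise the argument of Proposition~\ref{prop:0-unbalanced}: identify a loop-connected path $\mathcal{Q}$ with QCSP$(\mathcal{P})$ = QCSP$(\mathcal{Q})$ via the template-equivalence criterion of \cite{LICS2008}, which demands surjective homomorphisms $\mathcal{P} \surhom \mathcal{Q}$ and $\mathcal{Q}^s \surhom \mathcal{P}$ for some $s$; then conclude by Proposition~\ref{prop:loop-connected} that QCSP$(\mathcal{Q})$ (hence QCSP$(\mathcal{P})$) lies in NL $\subseteq$ P. The natural choice is $\mathcal{Q} := \mathcal{P}_{1^{a+b}0^a}$, whose loops form a single contiguous block adjoined with a trailing non-loop segment, so $\mathcal{Q}$ is loop-connected. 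The right power is $s=2$, matching what Lemma~\ref{lem:surhom2} provides.

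For the first surjective homomorphism $\mathcal{P} \surhom \mathcal{Q}$, I would simply use the identity on the common vertex set $[2a+b]$. Every loop of $\mathcal{P}_{\alpha 1^b 0^a}$ lies at a position in $\{1,\ldots,a+b\}$ (either inside $\alpha$ or in the central $1^b$ block), and $\mathcal{Q}$ carries a loop at each such position; non-loops may freely be sent to looped vertices, so edge-preservation (including self-loops) holds. For the second, $\mathcal{Q}^2 \surhom \mathcal{P}$, I would invoke Lemma~\ref{lem:surhom2} after rewriting $\mathcal{Q}$ as $\mathcal{P}_{1^a 1^b 0^a}$: this yields $\mathcal{Q}^2 \surhom \mathcal{P}_{0^a 1^b 0^a}$. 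Composing with the identity $\mathcal{P}_{0^a 1^b 0^a} \surhom \mathcal{P}_{\alpha 1^b 0^a}$ (a surjective homomorphism because the loops of $\mathcal{P}_{0^a 1^b 0^a}$ sit at positions $\{a+1,\ldots,a+b\}$, which remain looped in $\mathcal{P}$) produces the desired $\mathcal{Q}^2 \surhom \mathcal{P}$.

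The equivalence QCSP$(\mathcal{P})$ = QCSP$(\mathcal{Q})$ from \cite{LICS2008} then reduces everything to QCSP$(\mathcal{Q})$, and loop-connectedness of $\mathcal{Q}$ finishes the proof via Proposition~\ref{prop:loop-connected}. The substantive combinatorial obstacle --- constructing the surjective homomorphism from a square of a loop-connected path onto a path with loops only in the middle --- has already been discharged in Lemma~\ref{lem:surhom2}; the two bookending identity maps are routine once one observes that adding loops at arbitrary positions of the target can only weaken the homomorphism constraints.
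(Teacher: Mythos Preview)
Your proposal is correct and essentially identical to the paper's own proof: the paper also takes $\mathcal{Q}=\mathcal{P}_{1^a1^b0^a}$ (your $\mathcal{P}_{1^{a+b}0^a}$), uses the identity for $\mathcal{P}\surhom\mathcal{Q}$, and obtains $\mathcal{Q}^2\surhom\mathcal{P}$ by composing Lemma~\ref{lem:surhom2} with the identity $\mathcal{P}_{0^a1^b0^a}\surhom\mathcal{P}$, before invoking \cite{LICS2008} and Proposition~\ref{prop:loop-connected}.
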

\begin{proof}
Let $\mathcal{P}$ be of the form $\alpha 1^b 0^a$ where $|\alpha|=a$ and $b\geq 1$. We claim $\mathcal{P} \surhom \mathcal{P}_{1^a 1^b 0^a}$ and ${\mathcal{P}_{1^a1^b0^a}}^2 \surhom \mathcal{P}$. It then follows from \cite{LICS2008} that QCSP$(\mathcal{P})$ = QCSP$(\mathcal{P}_{1^a1^b0^a})$, whereupon membership in NL follows from Proposition~\ref{prop:loop-connected}.

The surjective homomorphism from $\mathcal{P}$ to $\mathcal{P}_{1^a 1^b 0^a}$ is the identity. The surjective homomorphism from ${\mathcal{P}_{1^a 1^b 0^a}}^2$ to $\mathcal{P}$ follows from the surjective homomorphism from $\mathcal{P}_{0^a1^b0^a}$ to $\mathcal{P}$ (the identity), via Lemma~\ref{lem:surhom2}.
\end{proof}

\begin{theorem}
\label{thm:paths-easy}
If $\mathcal{P}$ is quasi-loop-connected ($0$-eccentric), then QCSP$(\mathcal{P})$ is in P
\end{theorem}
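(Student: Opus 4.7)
The plan is to reduce the theorem directly to Propositions~\ref{prop:0-unbalanced} and~\ref{prop:0-eccentric}: I will show that every $0$-eccentric path $\mathcal{P}$ admits a decomposition $\alpha 1^b 0^a$ meeting the hypothesis of at least one of them, whereupon QCSP$(\mathcal{P}) \in \Ptime$ follows.

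First I would fix an orientation witnessing $0$-eccentricity, so that $\mathcal{P} = \mathcal{P}_{\alpha 1^b 0^a}$ with $|\alpha|\leq a$ and $b\geq 0$, and set $n := |P|$. Once the underlying word of loops and non-loops is fixed, the three blocks $\alpha,1^b,0^a$ are not: any valid rewriting must have its last $a$ positions equal to $0$ and the preceding $b$ positions equal to $1$. Let $a_r$ be the length of the rightmost maximal run of $0$s in this word, and let $b_r$ be the length of the maximal $1$-run immediately to its left (with $b_r := 0$ if there is none). A moment's inspection shows that every valid decomposition falls into one of two shapes: either $b = 0$ and $a\leq a_r$, or $a=a_r$ and $1\leq b\leq b_r$.

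Next I split according to whether $2a_r \geq n$. If so, I take the decomposition $\alpha' 0^{a_r}$ with $b=0$; the prefix has length $n-a_r\leq a_r\leq a_r+1$, so Proposition~\ref{prop:0-unbalanced} applies. Otherwise $n-2a_r\geq 1$, and the $0$-eccentricity hypothesis $|\alpha|\leq a$, combined with the shape constraints above, forces $b_r \geq n-2a_r$. Setting $b := n-2a_r$ and $a := a_r$ then produces a decomposition $\alpha'' 1^{b} 0^{a}$ with $|\alpha''| = n-a-b = a_r = a$ and $b\geq 1$, matching the hypothesis of Proposition~\ref{prop:0-eccentric}.

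The only point to verify is that the chosen blocks really occur in the claimed positions of the underlying word: the last $a_r$ characters are $0$s by the very definition of $a_r$, and in the second case the $b$ characters just to their left are $1$s because $b\leq b_r$. There is no genuine obstacle; the substantive combinatorial work has already been absorbed into Lemmas~\ref{lem:surhom} and~\ref{lem:surhom2} (and the polymorphism arguments behind Proposition~\ref{prop:loop-connected}) via the two supporting propositions, so the theorem is essentially a bookkeeping amalgamation of them.
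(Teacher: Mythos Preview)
Your proposal is correct and follows exactly the paper's approach: reduce the $0$-eccentric case to Propositions~\ref{prop:0-unbalanced} and~\ref{prop:0-eccentric} by showing every such path admits a decomposition matching one of their hypotheses. The paper's own proof merely asserts this dichotomy in one sentence (``either $\mathcal{P}$ is of the form $\alpha 0^a$, for $|\alpha|\leq a+1$, or $\alpha 1^b 0^a$, for $|\alpha|=a$ and $b\geq 1$''), whereas you supply the explicit case analysis via $a_r$ and $b_r$ that justifies it; so your write-up is in fact more detailed than the original.
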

\begin{proof}
If $\mathcal{P}$ is quasi-loop-connected ($0$-eccentric), then either $\mathcal{P}$ is of the form $\alpha 0^a$, for $|\alpha| \leq a+1$, or $\alpha 1^b 0^a$, for $|\alpha|=a$ and $b\geq 1$ (or both!). The result follows from Propositions~\ref{prop:0-unbalanced} and \ref{prop:0-eccentric}.
\end{proof}
We remark that one could prove the tractability results for $0$-eccentric paths using only the method of Lemma~\ref{lem:surhom2}. This is because one can see that there is a surjective homomorphism from ${\mathcal{P}_{1^a10^b}}^2$ to $\mathcal{P}_{0^a10^b}$, when $a<b$ (read this from Figure~\ref{fig:figure22} by removing $b-a$ rows and columns from the top left). However, we will shortly see a generalised version of Lemma~\ref{lem:surhom}, and so it is that our exertions were not in vain.

\subsection{The quasi-loop-connected case}

Suppose that $\mathcal{T}$ is a quasi-loop-connnected tree, that is neither reflexive nor irreflexive, with associated $\mathcal{T}_0$ and $\lambda_T$, as defined in the preliminaries. Let $v_\lambda \in T$ be such that there is no $(\lambda_T-1)$-walk to a looped vertex but there is a $\lambda_T$-walk to the looped vertex $l$ of the maximal (under inclusion) connected reflexive subtree $\mathcal{T}_0$ (such a $v_\lambda$ exists). Let $\mathcal{T}_1$ be the maximal subtree of $\mathcal{T}$ rooted at $l$ that contains $v_\lambda$.
\begin{lemma}
If $\mathcal{T}$ and $v_\lambda$ are as in the previous paragraph, then $v_\lambda$ is a leaf.
\end{lemma}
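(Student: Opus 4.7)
The plan is to prove the contrapositive via the tree structure: if $v_\lambda$ had a second neighbor $w$ off the $v_\lambda$-to-$l$ path, then the tree-distance from $w$ to $T_0$ would exceed $\lambda_T$, which would block any walk of length $\lambda_T$ from $w$ to $T_0$ and so violate the quasi-loop-connectedness of $\mathcal{T}$. The key observation is that a walk of length $k$ in any graph has length at least the graph-distance it realises, so the walk condition on $\mathcal{T}$ translates directly into the bound $d(w, T_0) \leq \lambda_T$, which one can then refute.

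First I would note that $\lambda_T \geq 1$, since $\mathcal{T}$ is neither irreflexive nor reflexive and therefore contains a non-looped vertex. Consequently $\lambda_T(v_\lambda) = \lambda_T \geq 1$, so $v_\lambda$ is itself non-looped and in particular $v_\lambda \notin T_0$. Because $\mathcal{T}$ is a tree and $\mathcal{T}_0$ is a connected subgraph containing $l$ but missing $v_\lambda$, all of $T_0$ lies inside a single connected component of $\mathcal{T} \setminus \{v_\lambda\}$, namely the one on the same side of $v_\lambda$ as $l$.

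Now suppose for contradiction that $v_\lambda$ is not a leaf, and pick a neighbor $w$ of $v_\lambda$ that is not the first vertex of the $v_\lambda$-to-$l$ path; such a $w$ exists because $v_\lambda$ has at least two neighbors and only one of them lies on the unique path to $l$. In the tree, every walk from $w$ to any vertex of $T_0$ must pass through $v_\lambda$, so $d(w, T_0) \geq 1 + d(v_\lambda, T_0)$. Since $l \in T_0$ realises $d(v_\lambda, l) = \lambda_T$ and every vertex of $T_0$ is looped and hence at distance at least $\lambda_T(v_\lambda) = \lambda_T$ from $v_\lambda$, we obtain $d(v_\lambda, T_0) = \lambda_T$, and therefore $d(w, T_0) \geq \lambda_T + 1 > \lambda_T$. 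This is incompatible with the existence of a walk of length exactly $\lambda_T$ from $w$ to $T_0$, contradicting quasi-loop-connectedness; hence $v_\lambda$ must be a leaf. The only delicate step is the very first translation, turning the walk-length condition into a distance bound, and beyond that the tree structure does all the work.
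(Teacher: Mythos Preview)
Your argument is correct and follows essentially the same route as the paper's proof: pick a neighbour $w$ of $v_\lambda$ on the far side from $l$, observe that $d(w,T_0)=\lambda_T+1$, and derive a contradiction. The only difference is cosmetic: you spell out carefully why $d(v_\lambda,T_0)=\lambda_T$ and why the distance bound blocks any $\lambda_T$-walk from $w$ into $T_0$ (explicitly invoking quasi-loop-connectedness), whereas the paper compresses this into the phrase ``contradicts maximality of $\lambda_T$''.
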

\begin{proof}
If not, then $v_\lambda$ has a neighbour $w$ on the path in the direction from $l$ towards and beyond $v$. But the distance from this vertex to the connected component $T_0$ containing $l$ is $\lambda_T+1$, which contradicts maximality of $\lambda_T$. 
\end{proof}
\noindent Note that if $\mathcal{T}$ were an arbitrary tree, \mbox{i.e.} not quasi-loop-connected, then there is no need for $v_\lambda$, at maximal distance from a loop, to be a leaf. E.g., let $\mathcal{P}_{101}$ be the path on three vertices, the two ends of which are looped. $\lambda_{P_{101}}=1$ and $v_\lambda$ would be the centre vertex.

So, as before, let $\mathcal{T}$ be a quasi-loop-connnected tree that is neither reflexive nor irreflexive, and let some $v_\lambda \in T$ be given ($v_\lambda$, of course, need not be unique).
There is an irreflexive path $\mathcal{P} \subseteq \mathcal{T}_1$ of length $\lambda_T$ from the leaf $v_\lambda$ to $l \in \mathcal{T}_0$. There may be other paths joining this path, of course. Let $\mathcal{T}'$ be $\mathcal{T}$ with these other paths pruned off (see Figure~\ref{fig:Tdash}). We need to take a short diversion in which we consider graphs with a similar structure to $\mathcal{T}'$.
\begin{figure}[ht]
\begin{minipage}[b]{0.5\linewidth}
\centering
\includegraphics[scale=1]{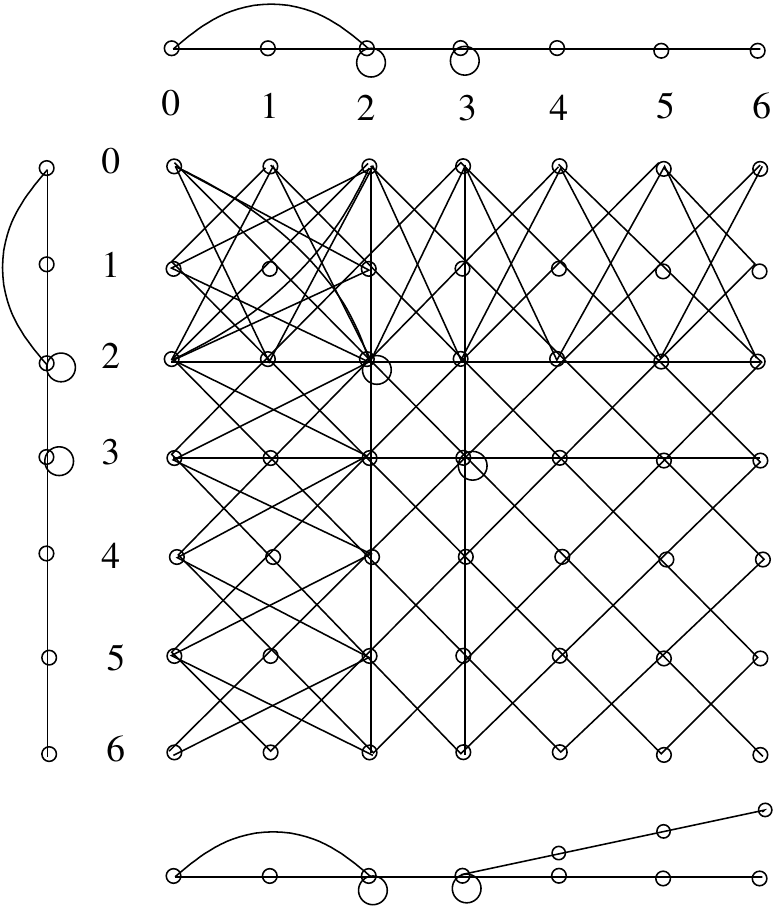}
\caption{$\mathcal{H}^2$ (and $\mathcal{H}'$ below) and its \ldots}
\label{fig:P6squaredSpecial}
\end{minipage}
\hspace{0.5cm}
\begin{minipage}[b]{0.5\linewidth}
\centering
\[
\begin{array}{rrr|rrrr}
 0 & 0 & 0 & 0 & 0 & 0 & 0 \\
& & & & & & \\
 1 & 1 & 1 & 1 & 1 & 1 & 1 \\
& & & & & & \\
 2 & 2 & 2 & 2 & 2 & 2 & 2 \\
\hline
& & & & & & \\
 3 & 3 & 3 & 3 & 3 & 3 & 3 \\
& & & & & & \\
 3 & 3 & 3 & 4 & 4' & 4 & 4' \\
& & & & & & \\
 3 & 3 & 3 & 3 & 5 & 5' & 5 \\
& & & & & & \\
 3 & 3 & 3 & 4 & 4' & 6 & 6' \\
\\
\\
\\
\end{array}
\]
\caption{\ldots homomorphism to $\mathcal{H}'$}
\label{fig:P6squaredSpecial2}
\end{minipage}
\end{figure}
\begin{lemma}
\label{lem:multiply-paths}
Suppose $\mathcal{H}$ consists of a graph $\mathcal{G}$, with a looped vertex $l$, onto which an irreflexive path $\mathcal{P}$ of length $\lambda$ is attached. Let $\mathcal{H}'$ be constructed as $\mathcal{H}$ but with the addition of two (disjoint) paths $\mathcal{P}$ onto the looped vertex $l$. Then there is a surjective homomorphism from $\mathcal{H}^2$ to $\mathcal{H}'$.
\end{lemma}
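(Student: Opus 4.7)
The plan is to exhibit an explicit surjective homomorphism $h : \mathcal{H}^2 \to \mathcal{H}'$ given by a piecewise definition on the block structure of the product. Label the vertices of the attached path $\mathcal{P}$ as $l = v_0, v_1, \ldots, v_\lambda$, and use primes (and double primes) to distinguish the corresponding vertices on the added disjoint copies of $\mathcal{P}$ in $\mathcal{H}'$. Partition $V(\mathcal{H})^2 = (G \cup P) \times (G \cup P)$ (with $G = V(\mathcal{G})$, $P = V(\mathcal{P})$) into three overlapping blocks: the \emph{graph block} $G \times G$, the two \emph{boundary blocks} $(G \setminus \{l\}) \times P$ and $P \times (G \setminus \{l\})$, and the \emph{path block} $P \times P$.

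On the graph block, set $h(g_1,g_2) := g_1$; this projects onto $\mathcal{G} \subseteq \mathcal{H}'$ and is trivially a homomorphism since $\mathcal{G}$ embeds into $\mathcal{H}'$. On the two boundary blocks, set $h(g,p) = h(p,g) := l$; adjacency is preserved here because $l$ carries a self-loop, so arbitrary neighbouring pairs in $\mathcal{H}^2$ are legitimately collapsed onto the loop at $l$. On the path block, I will define $h$ by a matrix pattern that generalises Lemma~\ref{lem:surhom}: viewing the $(\lambda+1)\times(\lambda+1)$ grid indexed by $(v_i,v_j)$, the parity of $i+j$ selects which of the emanating paths of $\mathcal{H}'$ the image lies on (unprimed for one parity, primed/double-primed for the other), and the distance of the image from $l$ is read off the diagonals exactly as in Figure~\ref{fig:figure2}. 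In particular, the leading diagonal will enumerate $v_0, v_1, v_2, \ldots, v_\lambda$ (alternating between copies), while the sub- and super-diagonals oscillate through the corresponding vertices on the complementary path(s).

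Surjectivity onto $\mathcal{G}$ follows from the graph-block definition, and surjectivity onto each added path follows because every vertex $v_k$, $v'_k$ (and, if required, $v''_k$) appears as an image on or immediately adjacent to the leading diagonal of the path block. For the homomorphism property, checking edges within a block is a direct reprise of the compass-direction argument of Lemma~\ref{lem:surhom}: the parity rule for the choice of path corresponds exactly to the sign rule that distinguished the ``plus'' and ``minus'' halves in that earlier proof. The novelty lies only at the interfaces: the row and column indexed by $l = v_0$ must map into vertices adjacent to $l$ on one of the emanating paths (which they do, by construction, hitting either $l$, $v_1$, $v'_1$ or $v''_1$), so the transition from the boundary blocks into the path block is consistent.

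The main obstacle is engineering the matrix rule on the path block so that it simultaneously (i) is well-defined (when two rules might apply on a boundary diagonal, they must agree, just as in the well-definition argument of Lemma~\ref{lem:maj-f1}), (ii) preserves adjacency along every compass direction including the rising and leading diagonals, and (iii) surjects onto every one of the added paths. Once this generalised matrix is written down, the verification reduces to the same finite case analysis performed in the proofs of Lemmas~\ref{lem:surhom} and~\ref{lem:surhom2}, with ``primed vs unprimed'' playing the role previously played by ``positive vs negative''. The remaining edges that cross between the blocks are covered by observing that neighbours of $l$ in $\mathcal{H}$ lie in $P$, so the only new adjacencies to verify are between entries indexed $(l, v_j)$ or $(v_i, l)$ and their immediate neighbours, all of which the construction sends either to $l$ or to a distance-one neighbour of $l$ on an appropriate path.
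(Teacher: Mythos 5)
Your overall strategy---decomposing the product into a grid of blocks, projecting on the graph part, collapsing to $l$ by exploiting its loop, and reusing the matrix of Lemma~\ref{lem:surhom} on the $P\times P$ block---is the same as the paper's. However, your explicit piecewise definition has a genuine error at the interface between the graph block and one of your boundary blocks. You send all of $(G\setminus\{l\})\times P$ to $l$, while sending $(g_1,g_2)\in G\times G$ to $g_1$. Take any $g\in G$ that is \emph{not} adjacent to $l$ and any neighbour $g'$ of $g$ in $\mathcal{G}$ with $g'\neq l$: the pair $\bigl((g,l),(g',v_1)\bigr)$ is an edge of $\mathcal{H}^2$ (since $(g,g')\in E(\mathcal{G})$ and $(l,v_1)\in E(\mathcal{P})$), but your map sends it to the pair $(g,l)$, which need not be an edge of $\mathcal{H}'$. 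The two boundary blocks cannot both be collapsed to $l$: the construction must be asymmetric. In the paper's matrix every row indexed by $g\in G\setminus\{l\}$ is constantly equal to $g$ across \emph{all} columns, including those indexed by $P$---so the block $(G\setminus\{l\})\times P$ is handled by first projection, not by collapse---and only the block $(P\setminus\{l\})\times(G\setminus\{l\})$ is sent uniformly to $l$. With that choice the interface is harmless, because any product edge leaving $G\times G$ into a mixed block must pass through column $l$ or row $l$ (as $P$ meets $\mathcal{G}$ only at $l$), and there the two rules agree up to adjacency with $l$. A secondary point: your parity-of-$(i+j)$ rule on $P\times P$ yields only two classes of images and hence covers the original arm plus one new copy; if $\mathcal{H}'$ genuinely carries two additional copies of $\mathcal{P}$ (three arms at $l$), parity alone cannot surject onto all three---though the paper's own example matrix likewise exhibits only one primed copy, so this defect is at worst inherited rather than introduced.
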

\begin{proof}
We proceed essentially by example, see Figures~\ref{fig:P6squaredSpecial} and \ref{fig:P6squaredSpecial2}, where $G=\{0,1,2,3\}$, $l=3$ and $\mathcal{P}=\{3,4,5,6\}$. $\{4,5',6',7'\}$ constitutes the copy of $\mathcal{P}$. The generalisation to arbitrary graphs is clear -- we proceed as in the proof of Lemma~\ref{lem:surhom} (for longer $\mathcal{P}$, in the SE part of the matrix) and by the trivial projection (for larger $\mathcal{G}$, in the N of the matrix). For the SW of ther matrix, we continue to map uniformly to $l$ (note that we are requiring the loop on the vertex $l$). 
\end{proof}
\begin{figure}
\begin{center}
\input{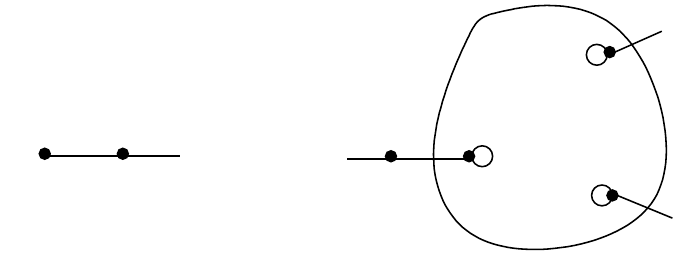tex_t}
\end{center}
\caption{Anatomy of $\mathcal{T}'$}
\label{fig:Tdash}
\end{figure}
\begin{lemma}
There is a surjective homomorphism from $\mathcal{T}$ to $\mathcal{T}'$. There exists $p\in \mathbb{N}$ such that there is a surjective homomorphism from $(\mathcal{T}')^p$ to $\mathcal{T}$.
\end{lemma}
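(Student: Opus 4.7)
The plan is to construct the two surjective homomorphisms separately, exploiting that $\mathcal{T}'$ differs from $\mathcal{T}$ only by missing the subtrees branching off $\mathcal{P}$ inside $\mathcal{T}_1$, while the reflexive core $\mathcal{T}_0$ and all attachments outside $\mathcal{T}_1$ are preserved. For the first claim, I would construct a retraction $\mathcal{T} \surhom \mathcal{T}'$ by folding each pruned branch back into $\mathcal{P}$. Fix a pruned branch $\mathcal{B}$ rooted at some $u \in \mathcal{P}$ at distance $d_u$ from $l$; note $d_u < \lambda_T$, since $v_\lambda$ is a leaf by the preceding lemma. Quasi-loop-connectedness forces every $w \in \mathcal{B}$ to admit a walk of length $\lambda_T$ to $\mathcal{T}_0$, and any such walk from within $\mathcal{T}_1$ must first exit $\mathcal{B}$ through $u$ and then traverse $\mathcal{P}$ towards $\mathcal{T}_0$, so $d(w, u) \leq \lambda_T - d_u$. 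Because this bound matches exactly the length of the sub-path of $\mathcal{P}$ from $u$ to $v_\lambda$, the fold sending a vertex of $\mathcal{B}$ at distance $d$ from $u$ to the vertex of $\mathcal{P}$ at distance $d$ from $u$ in the direction of $v_\lambda$ is a well-defined graph homomorphism. Combining these folds over all pruned branches with the identity on $\mathcal{T}'$ yields the required retraction.

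For the second claim, I would iterate Lemma~\ref{lem:multiply-paths} to manufacture enough extra copies of $\mathcal{P}$ at $l$ to cover every vertex of $\mathcal{T} \setminus \mathcal{T}'$. Writing $\mathcal{T}' = \mathcal{G} + \mathcal{P}$ with $\mathcal{G}$ absorbing $\mathcal{T}_0$ together with all non-$\mathcal{P}$ attachments, Lemma~\ref{lem:multiply-paths} yields $(\mathcal{T}')^2 \surhom \mathcal{T}' + 2\mathcal{P}$; regrouping the new copies into $\mathcal{G}$ and reapplying the lemma yields $(\mathcal{T}')^{2^k} \surhom \mathcal{T}' + 2k\mathcal{P}$ for every $k \geq 0$. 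Setting $N := |T \setminus T'|$ and choosing $k \geq \lceil N/2 \rceil$, I would map $\mathcal{T}' + 2k\mathcal{P}$ surjectively onto $\mathcal{T}$ by the identity on $\mathcal{T}'$ together with, for each extra copy of $\mathcal{P}$, a walk in $\mathcal{T}$ of length $\lambda_T$ starting at $l$: it idles on the self-loop of $l$ for $\lambda_T - d(l,w)$ steps and then traces the unique $l$-to-$w$ path, with each pruned vertex $w \in T \setminus T'$ assigned to at least one extra copy. Since $l$ is the unique closest $\mathcal{T}_0$-vertex to any $w \in T_1$, quasi-loop-connectedness gives $d(l, w) \leq \lambda_T$, so the walk exists and defines a homomorphism. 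Composing with the iteration of Lemma~\ref{lem:multiply-paths} produces $(\mathcal{T}')^p \surhom \mathcal{T}$ for $p = 2^k$.

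The main obstacle is verifying that Lemma~\ref{lem:multiply-paths} iterates cleanly in the presence of the other attached subtrees $\mathcal{S}_1,\dots,\mathcal{S}_k$: its proof sends the entire SW-submatrix uniformly to $l$, so additional attachments absorbed into $\mathcal{G}$ are left untouched and each iteration step faithfully adds two fresh disjoint copies of $\mathcal{P}$ at $l$ without disturbing the rest of the graph.
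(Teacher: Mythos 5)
Your second half---iterating Lemma~\ref{lem:multiply-paths} to manufacture extra copies of $\mathcal{P}$ at $l$ and then running each copy along a $\lambda_T$-walk that idles on the loop at $l$ before tracing the unique path down to a pruned vertex---is essentially the paper's own argument for the homomorphism $(\mathcal{T}')^p \surhom \mathcal{T}$, and it is sound (your $p=2^k$ is wasteful, but the statement only asserts existence of some $p$).

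The first half, however, contains a genuine error: you fold each pruned branch \emph{away from} $l$, towards $v_\lambda$. The interior of $\mathcal{P}$ is irreflexive, while the pruned branches may carry loops (the paper explicitly notes this), and then your fold destroys those loops. Concretely, let $\mathcal{T}$ be the path $l - p_1 - p_2 - v_\lambda$ with loops on $l$ and on a pendant vertex $w$ attached to $p_1$. This $\mathcal{T}$ is quasi-loop-connected with $\lambda_T = 3$ and $\mathcal{T}_0 = \{l\}$, the vertex $w$ is pruned, and your fold sends $w$ to $p_2$, so the edge $(w,w)$ has no image edge in $\mathcal{T}'$. The correct fold goes the other way, towards $l$, idling on the loop at $l$ once it is reached: since $v_\lambda$ admits no $(\lambda_T-1)$-walk to any looped vertex, every loop on a pruned branch lies at distance at least $\lambda_T$ from $v_\lambda$, hence at distance at least $d_u$ from the attachment point $u$, and is therefore absorbed onto the reflexive vertex $l$ under that fold. (Your bound $d(w,u) \leq \lambda_T - d_u$ is correct and shows your map is well defined as a vertex map, but well-definedness is not where the difficulty lies.)
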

\begin{proof}
The surjective homomorphism from $\mathcal{T}$ to $\mathcal{T}'$ takes the paths constituted by $\mathcal{T}_1 \setminus \mathcal{P}$ and folds them back towards $l$. These paths may have loops on them, but never at distance $<\lambda_T$ from $v_\lambda$, which explains why this will be a homomorphism.

The surjective homomorphism from $(\mathcal{T}')^p$ to $\mathcal{T}$ comes from the multiplication of the paths $\mathcal{P}$ -- by iteration of Lemma~\ref{lem:multiply-paths} -- in powers of $\mathcal{T}'$ (note that nothing may be further than $\lambda_T$ from $l$ in $\mathcal{T}_1$, without violating maximality of $\lambda_T$ or uniqueness of $\mathcal{T}_0$). To cover $\mathcal{T}_1$ in $\mathcal{T}$ we require no more than $|T_1|$ copies of the path $\mathcal{P}$. According to the previous lemma, we may take $p:=|T_1|-1$ (in fact it is easy to see that $\lceil \log |T_1|\rceil$ suffices).
\end{proof}
Now, it may be possible that in $\mathcal{T}'$ there are subtrees $\mathcal{S}_1$, \ldots, $\mathcal{S}_k$ rooted in $\mathcal{T}_0$ whose first vertex, other than their root, is a non-loop (because we chose $\mathcal{T}_0$ to be maximal under inclusion). The height of these trees is $\leq \lambda_T$. Let $\mathcal{T}''$ be $\mathcal{T}'$ with these subtrees $\mathcal{S}_1,\ldots,\mathcal{S}_k$ being reflexively closed. In the following lemma we use implicitly the fact that the presence of a surjective homomorphism $h$ from $\mathcal{A}^2$ to $\mathcal{B}$, generates also a surjective homomorphism from $\mathcal{A}^4$ to $\mathcal{B}^2$ given by $(a_1,a_2,a_3,a_4)\mapsto (h(a_1,a_2),h(a_3,a_4))$.
\begin{lemma}
There is a surjective homomorphism from $\mathcal{T}'$ to $\mathcal{T}''$. There is a $p$ such that there is a surjective homomorphism from $(\mathcal{T}'')^p$ to $\mathcal{T}'$.
\end{lemma}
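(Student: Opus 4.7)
For the first claim, the identity map $\mathcal{T}'\to\mathcal{T}''$ is a surjective homomorphism, since $\mathcal{T}''$ is obtained from $\mathcal{T}'$ by adding only self-loops (on the vertices of the $\mathcal{S}_i$'s). Thus $V(\mathcal{T}')=V(\mathcal{T}'')$ and $E(\mathcal{T}')\subseteq E(\mathcal{T}'')$, so every edge of $\mathcal{T}'$ is automatically preserved.

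The plan for the second claim is to generalise the matrix/product constructions of Lemmas~\ref{lem:surhom}, \ref{lem:surhom2} and \ref{lem:multiply-paths}. The central idea is to use one coordinate of the product as a ``template'' that traces along the irreflexive path $\mathcal{P}$ of length $\lambda_T$ contained in $\mathcal{T}''$, while the other coordinate ranges over a reflexively-closed subtree $\mathcal{S}_i\subseteq\mathcal{T}''$. Because the non-loops of $\mathcal{P}$ sit at depths $1,\ldots,\lambda_T$ from the looped vertex $l$, and $\mathcal{S}_i$ has height at most $\lambda_T$, the non-loop positions of $\mathcal{P}$ give us exactly enough room to mark the non-loop positions of $\mathcal{S}_i$ as they appear in $\mathcal{T}'$. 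Concretely, for a pair $(s,v_j)$ with $s\in\mathcal{S}_i$ and $v_j\in\mathcal{P}$, one assigns the image in $\mathcal{T}'$ depending on both the depth of $s$ within $\mathcal{S}_i$ and on $j$, analogously to the ``SE submatrix'' of Lemma~\ref{lem:surhom2}, ensuring that $(s,v_j)$ maps to a non-loop of $\mathcal{S}_i$ exactly when $v_j$ is a non-loop of $\mathcal{P}$; pairs $(s,l)$, which remain loops since $l$ is looped, are mapped to the looped root $l_i\in\mathcal{T}_0$ of $\mathcal{S}_i$. Vertices outside $\bigcup_i\mathcal{S}_i$ are handled by projecting to the first coordinate.

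To combine the constructions for the several $\mathcal{S}_i$'s into a single surjective homomorphism on one product, I would iterate using the squaring trick highlighted in the preamble: if a base construction gives $(\mathcal{T}'')^2\surhom\mathcal{T}^{(1)}$, where $\mathcal{T}^{(1)}$ is $\mathcal{T}''$ with one $\mathcal{S}_i$ (or one outer layer of loops across the $\mathcal{S}_i$'s) replaced by its partially reflexive version from $\mathcal{T}'$, then composing $(\mathcal{T}'')^2\surhom\mathcal{T}^{(1)}$ with $(\mathcal{T}^{(1)})^2\surhom\mathcal{T}^{(2)}$ yields $(\mathcal{T}'')^4\surhom\mathcal{T}^{(2)}$, and so on. Iterating this for $q$ stages gives $(\mathcal{T}'')^{2^q}\surhom\mathcal{T}'$, so $p$ may be taken as a power of two bounded in terms of $\sum_i|S_i|$.

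The main obstacle will be the base step itself: making the matrix construction work for a general subtree $\mathcal{S}_i$ with an arbitrary loop pattern beyond its first non-root vertex, and ensuring that the values fit together consistently across branchings of $\mathcal{S}_i$ and at the interface with the reflexive centre $\mathcal{T}_0$. The constructions so far have been essentially one-dimensional along paths, so generalising to trees requires making compatible choices at each branching vertex; I expect this can be arranged by induction on the structure of $\mathcal{S}_i$, building the map branch-by-branch from $l_i$ outwards and using a separate ``column'' of the matrix for each branch of $\mathcal{S}_i$, exploiting the fact that $\lambda_T$ bounds the depth of every such branch uniformly.
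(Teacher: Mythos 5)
Your proof is correct in outline and follows essentially the same route as the paper: the first claim is the identity map (adding loops only adds edges), and the second is obtained by repeated squaring, using products with the irreflexive path $\mathcal{P}$ hanging off $l$ to supply the non-looped product vertices that cover the non-loops of each $\mathcal{S}_i$, one $\mathcal{S}_i$ per stage, giving $p$ a power of two bounded via $\sum_i|S_i|$. The one organisational difference is that the paper splits into two cases per $\mathcal{S}_i$: when every vertex of $\mathcal{S}_i$ admits a $\lambda_T$-walk from $l$ it simply multiplies the path $\mathcal{P}$ (as in Lemma~\ref{lem:multiply-paths}) and folds the copies onto $\mathcal{S}_i$; only in the remaining case does it invoke the Lemma~\ref{lem:surhom2}-style matrix, observing that there the centre of the path from $v_\lambda$ via $l$ to the top of $\mathcal{S}_i$ lies in $\mathcal{T}_0$. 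You use the matrix method uniformly; this appears to be legitimate, since the relevant requirement is only that the reflexive arm being stripped (height of $\mathcal{S}_i$, at most $\lambda_T$) is no longer than the irreflexive arm $\mathcal{P}$ doing the stripping, exactly as in the remark following Theorem~\ref{thm:paths-easy}. The "main obstacle" you flag -- extending the one-dimensional matrix pattern across the branchings of a general $\mathcal{S}_i$ and past other subtrees of $\mathcal{T}_0$ -- is not actually resolved in any more detail by the paper either, which proceeds essentially by the worked example of Figure~9 (projection handles the extraneous subtrees, and consistency at branch points follows because the assigned value depends only on depth); so your sketch is at the same level of rigour as the published argument.
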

\begin{proof}
The identity is a surjective homomorphism from $\mathcal{T}'$ to $\mathcal{T}''$. 

The surjective homomorphism from from $(\mathcal{T}'')^p$ to $\mathcal{T}'$ may be constructed in a variety of stages of repeated squaring, dealing with $\mathcal{S}_i$ in the $i$th stage. There are two cases to consider. Either 1.) there is a $\lambda_T$ walk from $l$ to all vertices of $\mathcal{S}_i$ or 2.) there is not. 

In Case 1, we repeatedly square, reproducing the path $\mathcal{P}$ $|S_i|$ times and covering $\mathcal{S}_i$ in this manner.

Case 2 is illustrated in \comments{Figure~\ref{fig:matrix}}Figure~9, and applies the method from the proof of Lemma~\ref{lem:surhom2}. In this case, the centre of a path from $v_\lambda$, via $l$, to the top of the tree $\mathcal{S}_i$, lies in $\mathcal{T}_0$. in \comments{Figure~\ref{fig:matrix}}Figure~9, $\mathcal{X}$ should be read as $\mathcal{T}_0:=\{2,3,4,5,6\}$, $\mathcal{S}:=\{0,1,2,3\}$, $\mathcal{P}:=\{6,7,8\}$ rooted at $l$ ($l:=6$, $v_\lambda:=8$) and $\mathcal{Z}:=\{4,9,10,11\}$. The purpose of $\mathcal{Z}$ in the picture is to demonstrate that the technique for ``removing'' loops from the reflexive closure of $\mathcal{S}$ can be applied across $\mathcal{T}_0$ (from the non-loops, here in the form of $\{7,8\}$) regardless of any subtrees that may come off from $\mathcal{T}_0$. In $\mathcal{X}$ we have the reflexive closure of $\mathcal{S}$, for $\mathcal{Y}$ we show how to surjectively cover $S$, even if $\mathcal{S}$ had no loops in $\mathcal{T}'$ (other than its root $3$). The centre part of the matrix considers how the square of the path substructure of $\mathcal{X}$ induced by $\{1,\ldots,8\}$ maps to the path substructure of $\mathcal{Y}$ induced by $\{1,\ldots,8\}$ -- this is essentially what we have seen in Lemma~\ref{lem:surhom2}. The left-centre part of the matrix again uses Lemma~\ref{lem:surhom2}, but now we are interested in the path substructures induced by $\{0,2,\ldots,8\}$. The whole right-hand (W) part of the matrix is a projection on to $\mathcal{Z}$, and the top-centre, top-left, bottom-centre and bottom-left follow the pattern of the top of the centre part of the matrix (again, as dictated in Lemma~\ref{lem:surhom2}).

It follows, by repeated squaring and surjective homomorphism according to the stages given, that one may take $p:=2^{|S_1|+\ldots+|S_k|}$ (this $p$ is far from optimal).
\end{proof}
\begin{figure}
\label{fig:matrix}
\begin{center}
\input{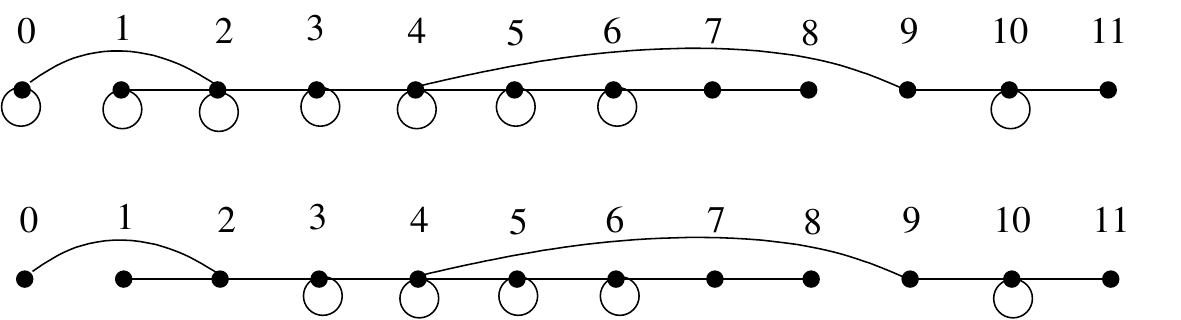tex_t}
\end{center}
\[
\begin{array}{r|rrrrrrrr|rrr}
 3   & 3   & 3   & 3   & 4   & 5   & 6   & 6   & 6   & 9   & 10  & 11   \\
\hline
 3   & 3   & 3   & 3   & 4   & 5   & 6   & 6   & 6   & 9   & 10  & 11   \\
 3   & 3   & 3   & 3   & 4   & 5   & 6   & 6   & 6   & 9   & 10  & 11   \\
 3   & 3   & 3   & 3   & 4   & 5   & 6   & 6   & 6   & 9   & 10  & 11   \\
 3   & 3   & 3   & 3   & 4   & 5   & 6   & 6   & 6   & 9   & 10  & 11   \\
 3   & 3   & 3   & 3   & 4   & 5   & 6   & 6   & 6   & 9   & 10  & 11   \\
 3   & 3   & 3   & 3   & 4   & 5   & 6   & 6   & 6   & 9   & 10  & 11   \\
 2   & 2   & 2   & 3   & 4   & 5   & 6   & 7   & 7   & 9   & 10  & 11   \\
 0   & 1   & 3   & 3   & 4   & 5   & 6   & 6   & 8   & 9   & 10  & 11   \\
\hline
 3   & 3   & 3   & 3   & 4   & 5   & 6   & 6   & 6   & 9   & 10  & 11   \\
 3   & 3   & 3   & 3   & 4   & 5   & 6   & 6   & 6   & 9   & 10  & 11   \\
 3   & 3   & 3   & 3   & 4   & 5   & 6   & 6   & 6   & 9   & 10  & 11   \\
\end{array}
\]
\caption{Example surjective homomorphism from $\mathcal{X}^2$ to $\mathcal{Y}$.}
\end{figure}
\begin{corollary}
\label{cor:quasi}
Let $\mathcal{T}$ be quasi-loop-connected, then QCSP$(\mathcal{T})$ is in NL.
\end{corollary}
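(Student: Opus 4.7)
The plan is to reduce the corollary to Proposition~\ref{prop:loop-connected} by producing, for each quasi-loop-connected $\mathcal{T}$, a loop-connected tree that is QCSP-equivalent to it. The equivalence will be supplied by the logico-combinatorial characterisation of \cite{LICS2008}: if $\mathcal{A}^s \surhom \mathcal{B}$ and $\mathcal{B}^t \surhom \mathcal{A}$ for some $s,t \in \mathbb{N}$, then QCSP$(\mathcal{A}) = $ QCSP$(\mathcal{B})$.

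First I would dispense with the two degenerate subcases. If $\mathcal{T}$ is irreflexive or reflexive, then $\mathcal{T}$ is already loop-connected (vacuously or trivially), and Proposition~\ref{prop:loop-connected} gives NL membership directly. So henceforth I may assume $\mathcal{T}$ is neither reflexive nor irreflexive, which is precisely the setting in which $\mathcal{T}'$ and $\mathcal{T}''$ were defined above.

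Next I chain together the two preceding lemmas of this subsection. The first lemma provides a surjective homomorphism $\mathcal{T} \surhom \mathcal{T}'$ and an exponent $p_1$ with $(\mathcal{T}')^{p_1} \surhom \mathcal{T}$; the second does the same for the pair $(\mathcal{T}', \mathcal{T}'')$ with an exponent $p_2$. Applying the \cite{LICS2008} criterion twice yields QCSP$(\mathcal{T}) = $ QCSP$(\mathcal{T}') = $ QCSP$(\mathcal{T}'')$, and in particular these three problems share a complexity class.

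The payoff step is to observe that $\mathcal{T}''$ is loop-connected. Indeed, $\mathcal{T}''$ is obtained from $\mathcal{T}'$ by reflexively closing each of the subtrees $\mathcal{S}_1, \ldots, \mathcal{S}_k$, which by construction are all rooted in the connected reflexive subtree $\mathcal{T}_0$. The looped vertices of $\mathcal{T}''$ are therefore exactly $T_0 \cup S_1 \cup \ldots \cup S_k$, a connected subgraph since each $\mathcal{S}_i$ shares a vertex with $\mathcal{T}_0$. The only remaining part of $\mathcal{T}''$ is the irreflexive path $\mathcal{P}$ attached at $l$, which carries no loops and hence does not disrupt loop-connectedness. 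Consequently Proposition~\ref{prop:loop-connected} applies to $\mathcal{T}''$, placing QCSP$(\mathcal{T}'')$ in NL, and the chain of equalities above transports this to QCSP$(\mathcal{T})$.

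The only genuinely delicate point is the loop-connectedness of $\mathcal{T}''$: one has to check that the maximality of $\mathcal{T}_0$ together with the definition of the $\mathcal{S}_i$ leaves no stray looped vertex outside $T_0 \cup S_1 \cup \ldots \cup S_k$. Once this bookkeeping is done, the remainder is a straightforward concatenation of previously established results.
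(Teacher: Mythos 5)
Your proposal is correct and follows essentially the same route as the paper: handle the loop-connected (including irreflexive and reflexive) case via Proposition~\ref{prop:loop-connected} directly, and otherwise chain the two preceding lemmas through the \cite{LICS2008} equivalence to get QCSP$(\mathcal{T})=$ QCSP$(\mathcal{T}')=$ QCSP$(\mathcal{T}'')$ with $\mathcal{T}''$ loop-connected. The paper states this in one line; your additional check that the loops of $\mathcal{T}''$ form the connected set $T_0\cup S_1\cup\ldots\cup S_k$ (using maximality of $\mathcal{T}_0$ to rule out stray loops adjacent to it) is exactly the bookkeeping the paper leaves implicit.
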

\begin{proof}
If $\mathcal{T}$ is actually loop-connected, then the result is Proposition~\ref{prop:loop-connected}. Otherwise,
QCSP$(\mathcal{T})$= QCSP$(\mathcal{T}')$= QCSP$(\mathcal{T}'')$, and tractability of the last follows from Proposition~\ref{prop:loop-connected}.
\end{proof}

\section{Hard cases}
\label{sec:hard}

\subsection{Pspace-completeness results for paths that are not $0$-eccentric}

\subsubsection{$\mathcal{P}_{101}$ and weakly balanced $0$-centred paths}

In the following proof we introduce the notions of \emph{pattern} and \emph{$\forall$-selector} that will recur in future proofs.
\begin{figure}
\label{fig:gadgets}
\begin{center}
\input{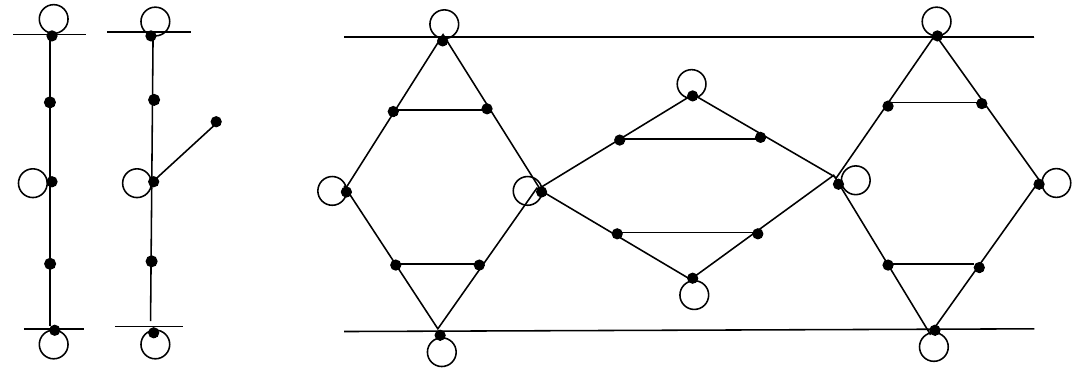tex_t}
\end{center}
\caption{Variable and clause gadgets in reduction to QCSP$(\mathcal{P}_{101})$.}
\end{figure}
\begin{proposition}
\label{prop:P101}
QCSP$(\mathcal{P}_{101})$ is Pspace-complete.
\end{proposition}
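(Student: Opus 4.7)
Since $\mathrm{QCSP}(\mathcal{P}_{101})$ is in Pspace for free, the task is Pspace-hardness, which I would establish by a direct reduction from QNAE3SAT. The plan is to encode the Booleans by the looped vertices $l_1 \mapsto \top$ and $l_3 \mapsto \bot$, and to use the atom $E(v,v)$, which is satisfied in $\mathcal{P}_{101}$ exactly on $\{l_1,l_3\}$, to pin auxiliary variables onto the Boolean part of the template. An existentially quantified QNAE3SAT variable $\exists x_i$ therefore becomes $\exists v_i$ together with the atom $E(v_i,v_i)$.

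The harder half is universally quantifying over Booleans in the presence of the spurious middle vertex $l_2$. This is the role of the $\forall$-selector: replace $\forall x_i$ by $\forall w_i\exists v_i\,(E(w_i,v_i) \wedge E(v_i,v_i))$. A quick case-check shows that when $w_i \in \{l_1,l_3\}$ the gadget forces $v_i = w_i$, while when $w_i = l_2$ the selector freely picks $v_i \in \{l_1,l_3\}$. This ``don't care'' freedom at $l_2$ is harmless because one can consistently harmonise it with any Skolem strategy for the QNAE3SAT instance, for instance by copying the strategy used at $w_i = l_1$.

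The remaining, and principal, technical step is the \emph{pattern} encoding an NAE-clause as a primitive positive formula $\Theta(v_i,v_j,v_k)$ which, on Boolean inputs, is satisfiable in $\mathcal{P}_{101}$ precisely when the three are not all equal. The guiding structural fact is that a looped common neighbour of $v_i,v_j,v_k$ exists iff the three are all equal (since $l_1$ and $l_3$ have no looped common neighbour, while any constant $\{l_1,l_3\}$-tuple does), so some complementary construction should filter exactly the NAE tuples. Achieving this in positive Horn, where negation is unavailable, is the main obstacle I anticipate; I expect it will require chaining several auxiliary existentials---possibly augmented by further $\forall$-selectors so as to simulate the missing disjunction---around the unlooped vertex $l_2$, which is the unique link between the two Boolean values.

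With the pattern in hand, the reduction assembles the sentence $\Phi$ by threading the $\forall$-selector and the pinning atoms through the original quantifier prefix and conjoining one copy of $\Theta$ per clause. Soundness then reduces to the analysis of the selector: a winning QNAE3SAT strategy lifts to a winning strategy for $\Phi$ via the harmonisation trick at $l_2$, and restricting a winning $\Phi$-strategy to inputs in $\{l_1,l_3\}^n$ recovers an NAE-witness for the original instance. Pspace-completeness follows from Pspace-completeness of QNAE3SAT.
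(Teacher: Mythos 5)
Your frame is right as far as it goes: pinning Boolean values to the looped vertices via the atom $E(v,v)$, and simulating universal quantification over Booleans with the $\forall$-selector $\forall w\,\exists v\,(E(w,v)\wedge E(v,v))$, are exactly the devices the paper uses (its $\forall$-selector is the path $\mathcal{P}_{10}$, which is the same gadget). But the proposal has a genuine gap, and you name it yourself: you never construct the clause pattern $\Theta(v_i,v_j,v_k)$, and this is the entire substance of the reduction. Note moreover that no primitive positive formula can do the job on its own: the constant map onto a looped vertex is an endomorphism of $\mathcal{P}_{101}$, so every nonempty pp-definable relation contains the all-equal tuple $(l_1,l_1,l_1)$. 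Your suggestion to ``simulate the missing disjunction'' with further universal quantifiers is the right instinct, but it is precisely where the work lies, and the paper's solution is not a local per-clause formula at all. It introduces two reflexive paths $\top$ and $\bot$ whose images are forced, by a single outermost pair of universally quantified neighbours $v'_\top,v'_\bot$, to be looped vertices; each clause becomes a diamond of three copies of the pattern $\mathcal{P}_{101}$ strung between $\top$ and $\bot$ and braced by two horizontal edges, so that a homomorphism extending a given Boolean assignment of the literal vertices exists iff the three literals are not all equal relative to the $\top/\bot$ frame.

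This global-frame architecture creates a second obligation that your proposal does not address: since $v'_\top$ and $v'_\bot$ are genuinely universally quantified, one must verify the sentence also in the degenerate cases where both frame paths land on the same loop (i.e.\ that $\Psi'(1,3)$ implies $\Psi'(1,1)$ and $\Psi'(3,3)$). The paper handles this by reordering literals so that the middle position of a clause gadget never carries a universally quantified variable, and by the particular choice of bracing in the diamond (the paper explicitly notes that a different diamond would break this step). Without a concrete gadget and an analysis of these degenerate evaluations, the reduction is not yet a proof.
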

\begin{proof}
For hardness, we reduce from QNAE3SAT, where we will ask for the extra condition that no clause has three universal variables (of course, any such instance would be trivially false). From an instance $\Phi$ of QNAESAT we will build an instance $\Psi$ of QCSP$(\mathcal{P}_{101})$ such that $\Phi$ is in QNAE3SAT iff $\Psi$ in QCSP$(\mathcal{P}_{101})$. We will consider the quantifier-free part of $\Psi$, itself a conjunction of atoms, as a graph, and use the language of homomorphisms. The constraint satisfaction problem, CSP$(\mathcal{P}_{101})$, seen in this guise, is nothing other than the question of homomorphism of this graph to $\mathcal{P}_{101}$. The idea of considering QCSP$(\mathcal{P}_{101})$ as a special type of homomorphism problem is used implicitly in \cite{OxfordQuantifiedConstraints}\footnote{The journal version of this paper was published much later as \cite{BBCJK}.} and explicitly in \cite{CiE2006}.

We begin by describing a graph $\mathcal{G}_\Phi$, whose vertices will give rise to the variables of $\Psi$, and whose edges will give rise to the facts listed in the quantifer-free part of $\Psi$. Most of these variables will be existentially quantified, but a small handful will be universally quantified. $\mathcal{G}_\Phi$ consists of two reflexive paths, labelled $\top$ and $\bot$ which contain inbetween them gadgets for the clauses and variables of $\Phi$. We begin by assuming that the paths $\top$ and $\bot$ are evaluated, under any homomorphism we care to consider, to vertices $1$ and $3$ in $P_{101}$, respectively (the two ends of $P_{101}$); later on we will show how we can effectively enforce this. Of course, once one vertex of one of the paths is evaluated to, say, $1$, then that whole path must also be so evaluated -- as the only looped neighbour of $1$ in $\mathcal{P}_{101}$ is $1$. The gadgets are drawn in \comments{Figure~\ref{fig:gadgets}}Figure~10. The pattern is the path $\mathcal{P}_{101}$, that forms the edges of the diamonds in the clause gadgets as well as the tops and bottoms of the variable gadgets. The diamonds are \emph{braced} by two horizontal edges, one joining the centres of the top patterns and the other joining the centres of the bottom patterns. We will return to the question of the absence of vertical bracing at the end of the proof. The $\forall$-selector is the path $\mathcal{P}_{10}$, which travels between the universal variable node $v_2$ and the labelled vertex $\forall$.

For each existential variable $v_1$ in $\Phi$ we add the gadget on the far left, and for each universal variable $v_2$ we add the gadget immediately to its right. There is a single vertex in that gadget that will eventually give rise to a variable in $\Psi$ that is universally quantified, and it is labelled $\forall$. For each clause of $\Phi$ we introduce a copy of the clause gadget drawn on the right. We then introduce an edge between a variable $v$ and literal $l_i$ ($i \in \{1,2,3\}$) if $v=l_i$ (note that all literals in QNAE3SAT are positive). We reorder the literals in each clause, if necessary, to ensure that literal $l_2$ of any clasue is never a variable in $\Phi$ that is universally quantified. It is not hard to verify that homomorphisms from $\mathcal{G}_\Phi$ to $\mathcal{P}_{101}$ (such that the paths $\top$ and $\bot$ are evaluated to $1$ and $3$, respectively) correspond exactly to satisfying not-all-equal assignments of $\Phi$. The looped vertices must map to either $1$ or $3$ -- $\top$ or $\bot$ -- and the clause gadgets forbid exactly the all-equal assignments. Now we will consider the graph $\mathcal{G}_\Phi$ realised as a formula $\Psi''$, in which we will existentially quantify all of the variables of $\Psi''$ except: one variable each, $v_\top$ and $v_\bot$, corresponding respectively to some vertex from the paths $\top$ and $\bot$; all variables corresponding to the centre vertex of an existential variable gadget; all variables corresponding to the centre vertex of a universal variable gadget, and all variables corresponding to the extra vertex labelled $\forall$ of a universal variable gadget. We now build $\Psi'$ by quantifying, adding outermost and in the order of the quantifiers of $\Phi$:
\begin{itemize}
\item existentially, the variable corresponding to the centre vertex of an existential variable gadget,
\item universally, the variable corresponding to the extra vertex labelled $\forall$ of a universal variable gadget, and then existentially, the variable corresponding to the centre vertex of a universal variable gadget.
\end{itemize}
\noindent The reason we do not directly universally quantify the vertex associated with a universal variable is because we want it to be forced to range over only the looped vertices $1$ and $3$ (which it does as its unlooped neighbour $\forall$ is forced to range over all $\{1,2,3\}$). $\Psi'(v_\top,v_\bot)$ is therefore a positive Horn formula with two free variables, $v_\top$ and $v_\bot$, such that, $\Phi$ is QNAE3SAT iff $\mathcal{P}_{101} \models \Psi'(1,3)$. 
Finally, we construct $\Psi$ from $\Psi'$ with the help of two $\forall$-selectors, adding new variables $v'_\top$ and $v'_\bot$, and setting  
\[ \Psi:=\forall v'_\top, v'_\bot \exists v_\top, v_\bot \ E(v'_\top,v_\top) \wedge E(v_\top,v'_\top) \wedge E(v'_\bot,v_\bot) \wedge E(v_\bot,v'_\bot) \wedge \Psi'(v_\top,v_\bot).\] 
The purpose of universally quantifying the new variables $v'_\top$ and $v'_\bot$, instead of directly quantifying $v_\top$ and $v_\bot$, is to force $v'_\top$ and $v'_\bot$ to range over $\{1,3\}$ (recall that $E(v_\top,v_\top)$ and $E(v_\bot,v_\bot)$ are both atoms of $\Psi$). This is the same reason we add the vertex $\forall$ to the universal variable gadget.

We claim that $\mathcal{P}_{101} \models \Psi'(1,3)$ iff $\mathcal{P}_{101} \models \Psi$. It suffices to prove that $\mathcal{P}_{101} \models \Psi'(1,3)$ implies $\mathcal{P}_{101} \models \Psi'(3,1), \Psi'(1,1), \Psi'(3,3)$. The first of these follows by symmetry. The second two are easy to verify, and follow because the second literal in any clause is forbidden to be universally quantified in $\Phi$. If both paths $\top$ and $\bot$ are \mbox{w.l.o.g.} evaluated to $1$, then, even if some $l_1$- or $l_3$-literals are forced to evaluate to $3$, we can still extend this to a homomorphism from $\mathcal{G}_\Phi$ to $\mathcal{P}_{101}$.
\end{proof}
\noindent Note that the properties required for the final paragraph of the previous proof are inconsistent with our using the different diamond of \comments{Figure~\ref{fig:bad-gadget}}Figure~11 in our clause gadget (other than for the centre literal $l_2$).
\begin{figure}
\label{fig:bad-gadget}
\begin{center}
\includegraphics{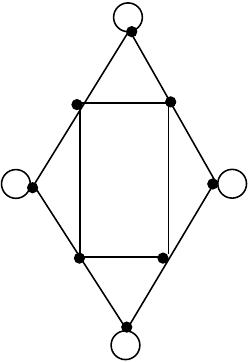}
\end{center}
\caption{Incorrect gadget for QCSP$(\mathcal{P}_{101})$.}
\end{figure}
\begin{proposition}
Let $\mathcal{P}_{0^a10^b10^c}$ be such that its centre is between its loops ($a+b \geq c$ and $b+c\geq a$). Then QCSP$(0^a10^b10^c)$ is Pspace-complete.
\end{proposition}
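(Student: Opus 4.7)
The plan is to follow the same structure as the reduction in Proposition~\ref{prop:P101}, replacing the pattern $\mathcal{P}_{101}$ there by the full path $\mathcal{P}_{0^a10^b10^c}$ and using its two loops $\ell_1$ (at position $a+1$) and $\ell_2$ (at position $a+b+2$) in place of the loops of $\mathcal{P}_{101}$. In outline: I would reduce from QNAE3SAT (restricted so that no clause has three universal variables), construct a graph $\mathcal{G}_\Phi$ whose existential and universal variable gadgets are built from two copies of the pattern joined at a central vertex, and whose clause gadgets are ``diamonds'' of four copies of the pattern with three literal positions. Two reflexive paths $\top$ and $\bot$ are attached and ultimately forced by the $\forall$-selector trick to evaluate to $\ell_1$ and $\ell_2$ respectively, so that the ``looped'' variable vertices range over $\{\ell_1,\ell_2\}$ and encode truth values.

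The key geometric fact is that the hypothesis $a+b\geq c$ and $b+c\geq a$ places the centre of $\mathcal{P}_{0^a10^b10^c}$ strictly between $\ell_1$ and $\ell_2$. This symmetry is exactly what is needed so that there exist homomorphisms of the pattern to $\mathcal{P}$ sending both its endpoints to the \emph{same} loop (by folding the pattern through the centre into one of the $0^a$ or $0^c$ tails), while homomorphisms sending the endpoints to \emph{different} loops are rigid (essentially the identity). Without this symmetry, one or the other of these alternatives fails, and the diamond cannot simultaneously admit the three NAE-satisfying configurations and forbid the two all-equal ones.

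Concretely, I would proceed in four steps: first, verify that a homomorphism from a single pattern into $\mathcal{P}$ with prescribed endpoint images is possible iff those images are both loops (and describe the possibilities); second, analyse the variable gadget to show that the central vertex is forced to lie in $\{\ell_1,\ell_2\}$ (using the brace that ties the extremities of the two stacked patterns to $\top$ and $\bot$); third, analyse the clause gadget, showing that if all three literal vertices are mapped to the same loop then no extension to the four brace vertices of the diamond exists, while any NAE assignment extends; fourth, wrap the positive-Horn formula $\Psi'$ obtained from $\mathcal{G}_\Phi$ with outer universal quantifications for the universal-variable labels and with the $\forall$-selector quantification on $v'_\top,v'_\bot$ to pin $v_\top,v_\bot$ to $\{\ell_1,\ell_2\}$, exactly as in Proposition~\ref{prop:P101}.

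The main obstacle will be the clause gadget analysis in step three: the asymmetry between the $0^a$ tail and the $0^c$ tail (even though both are permitted by the centre-between-loops condition) creates several subcases depending on the parity of $b$, on which loop the literals are claimed to map to, and on the orientation chosen when folding. I expect the cleanest way to handle this is to identify, for each pair of loop-assignments to the two ends of the pattern, the set of admissible assignments to the centre vertex of that pattern, and then to observe that the diamond brace combines three such sets in a way that is empty precisely under the two all-equal configurations. The verification that NAE assignments lift to full homomorphisms is then straightforward, and the hardness conclusion follows.
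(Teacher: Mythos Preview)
Your high-level plan---adapt the $\mathcal{P}_{101}$ reduction---matches the paper, but two choices diverge, and one of them is a genuine gap.

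First, the paper does \emph{not} take the full path $\mathcal{P}_{0^a10^b10^c}$ as the pattern. It uses the much smaller $\mathcal{P}_{10^b1}$, the segment between the two loops with looped endpoints; the tails $0^a$ and $0^c$ play no role in the diamonds or the variable gadgets. With that choice the clause analysis is essentially identical to the $\mathcal{P}_{101}$ case (same bracing on the first non-loop vertex), and no parity or tail-orientation casework is needed. Note also that your rigidity claim fails for the full path: sending the two endpoints of a length-$(a{+}b{+}c{+}1)$ path to loops at distance $b{+}1$ is far from rigid, since the excess $a{+}c$ steps can be absorbed at either loop in many ways. Rigidity does hold for $\mathcal{P}_{10^b1}$, but for the trivial reason that its length equals the loop distance---this has nothing to do with the hypothesis.

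Second, and this is the real gap, you have placed the hypothesis $a+b\ge c$, $b+c\ge a$ in the wrong part of the argument. It is not used in the pattern or diamond analysis (the constant map to one loop handles the same-loop case regardless). It is used in the $\forall$-selector, which you leave unspecified. The paper takes the selector to be $\mathcal{P}_{10^m}$ with $m=\max\{a,b,c\}$, and the proof rests on three facts: (1) from every vertex of $\mathcal{P}$ there is an $m$-walk to \emph{some} loop; (2a) some vertex $p$ has an $m$-walk to $\ell_1$ but none to $\ell_2$; (2b) symmetrically some $q$ reaches only $\ell_2$. Taking $p=1$ and $q=|P|$, conditions (2a) and (2b) amount precisely to $m\le a+b$ and $m\le b+c$, i.e.\ $c\le a+b$ and $a\le b+c$. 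Without them the outer universal quantifiers cannot force $v_\top,v_\bot$ (and the universal-variable vertices) to hit \emph{both} loops, and the reduction collapses. Your ``folding through the centre into a tail'' narrative does not supply this mechanism; you should replace your step three by specifying the selector length and verifying (1), (2a), (2b).
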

\begin{proof}
Let $m:=\max\{a,b,c\}$. The proof proceeds exactly as in Proposition~\ref{prop:P101}, except we use a new pattern and $\forall$-selector. We replace the pattern $\mathcal{P}_{101}$ with the pattern $\mathcal{P}_{10^b1}$. Note that the bracing of the clause diamonds is still on the first non-loop vertex. The $\forall$-selector $\mathcal{P}_{10}$ is replaced by $\mathcal{P}_{10^m}$. All new vertex-variables may be existentially quantified in the innermost block of quantifiers, except those on the path between the labelled $\forall$ vertex-variable in a universal variable gadget and that variable gadget's centre vertex-variable. These are quantified existentially immediately after the universal quantification of the labelled $\forall$ vertex-variable, in the obvious fashion. The new variable and clause gadgets for $\mathcal{P}_{00100010000}$ are drawn in \comments{Figure~\ref{fig:gadgets2}}Figure~12.
\begin{figure}
\label{fig:gadgets2}
\begin{center}
\input{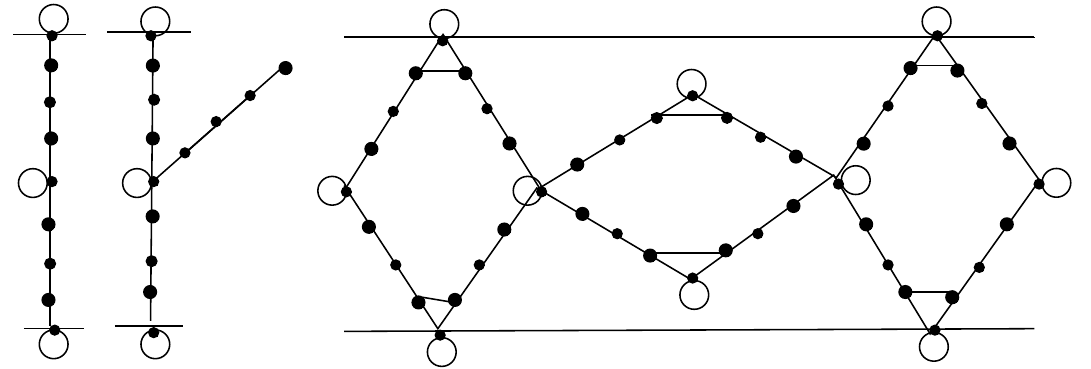tex_t}
\end{center}
\caption{Variable and clause gadgets in reduction to QCSP$(\mathcal{P}_{00100010000})$; $b=3, m=4$.}
\end{figure}

We will give some explanation as to why this proof works. The crucial point is that there is 1.) an $m$-path from every vertex to one of the loops, 2a.) a point $p$ on the path such that there is an $m$-path from $p$ to the left loop and no $m$-path from $p$ to the right loop, and 2b.) a point $q$ on the path such that there is an $m$-path from $q$ to the right loop and no $m$-path from $p$ to the left loop. The conditions $a+b \geq c$ and $b+c\geq a$ are required to force properties 2a and 2b. The properties 1, 2a and 2b allow for a faithful simulation of QNAE3SAT.
\end{proof}
\begin{proposition}
\label{prop:weakly-balanced-0}
Let $\mathcal{P}$ be a weakly balanced $0$-centred path, then QCSP$(\mathcal{P})$ is Pspace-complete.
\end{proposition}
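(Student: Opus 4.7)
The plan is to extend the reductions of the two preceding propositions, reducing from QNAE3SAT and adapting the pattern and the $\forall$-selector to the structure of the given weakly balanced $0$-centred path $\mathcal{P}$. Since $\mathcal{P}$ is $0$-centred, its centre is a non-loop; since $\mathcal{P}$ is weakly balanced, going outward from the centre in each direction one eventually encounters a non-loop followed by a loop. Let $l_L$ and $l_R$ be the innermost such loops to the left and right of the centre, respectively. I would take the pattern to be the subpath of $\mathcal{P}$ from $l_L$ to $l_R$, a path of the form $\mathcal{P}_{1\sigma 1}$ where $\sigma$ records the interior loop/non-loop structure and in particular contains a non-loop at the image of the centre. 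The $\forall$-selector would be $\mathcal{P}_{10^m}$ for an appropriately chosen $m \geq \lambda_P$ (the precise choice of $m$ discussed below).

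From a QNAE3SAT instance $\Phi$ I would then build $\Psi$ over QCSP$(\mathcal{P})$ exactly as in Proposition~\ref{prop:P101}: horizontal ``top'' and ``bottom'' paths $\top$ and $\bot$; existential/universal variable gadgets whose central node connects to $\top$ and $\bot$ through the pattern; diamond-shaped clause gadgets formed from the pattern and braced horizontally between the centres of the upper and lower pattern-copies; and $\forall$-selectors attached at $\top$, at $\bot$, and at the extra $\forall$-labelled vertex of every universal variable gadget, so as to force those variables to range only over looped vertices of $\mathcal{P}$. The outer envelope
\[ \Psi \;=\; \forall v'_\top, v'_\bot \, \exists v_\top, v_\bot \,(\cdots \wedge \Psi'(v_\top,v_\bot)) \]
is then assembled exactly as in Proposition~\ref{prop:P101}, with the intermediate ``selector'' quantifiers around the centres of universal variable gadgets inserted immediately after their governing $\forall$, as in the preceding proposition.

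The essential verification is, following the preceding proposition, the trio of properties: (1) every vertex of $\mathcal{P}$ admits an $m$-walk to some loop; (2a) some vertex of $\mathcal{P}$ admits an $m$-walk only to loops on the left side of the centre; (2b) symmetrically on the right. The natural witnesses for (2a) and (2b) are the two endpoints of $\mathcal{P}$. The main obstacle, and where the weakly balanced $0$-centred hypothesis really earns its keep, is choosing $m$ tight enough that no endpoint has an $m$-walk crossing the centre to a loop on the far side, while still being large enough to satisfy (1). Here the $0$-centredness ensures that no homomorphism can collapse a clause diamond by routing $\top$ and $\bot$ through a loop at the centre, and the weakly balanced condition supplies, on each side, a non-loop barrier immediately inside $l_L$ and $l_R$ that forbids ``short'' crossings by the endpoint witnesses.

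Once properties (1), (2a) and (2b) are in place, the faithful correspondence between satisfying NAE-assignments of $\Phi$ and homomorphisms of the gadget-graph $\mathcal{G}_\Phi$ to $\mathcal{P}$ (sending the $\top$-path to left-side loops and the $\bot$-path to right-side loops) goes through as in the preceding proofs; the clause diamonds forbid exactly the all-equal colourings, because the centre of the pattern must not be mapped to a loop. The final symmetry argument, showing that $\mathcal{P} \models \Psi'(l_L,l_R)$ implies $\mathcal{P} \models \Psi'(l_R,l_L), \Psi'(l_L,l_L), \Psi'(l_R,l_R)$, is identical to the one in Proposition~\ref{prop:P101}, using the reordering of literals within each clause to avoid the universally quantified variable appearing in position $l_2$.
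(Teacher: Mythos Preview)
Your proposal is essentially correct and follows the same route as the paper. The paper's proof is extremely terse: it observes that stripping $\mathcal{P}$ down to just the two loops nearest the (non-loop) centre yields a path of the form $\mathcal{P}_{0^a10^b10^c}$ satisfying the hypotheses of the preceding proposition, and then uses that reduction \emph{verbatim} (pattern $\mathcal{P}_{10^b1}$, $\forall$-selector $\mathcal{P}_{10^m}$ with $m=\max\{a,b,c\}$). The only extra remark is that when $v'_\top$ and $v'_\bot$ are set to the two endpoints of $\mathcal{P}$, the existentials $v_\top,v_\bot$ are forced to the two innermost loops $a{+}1$ and $a{+}b{+}2$, which is exactly your properties~(2a) and~(2b).

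Two small clarifications would tighten your write-up. First, since $l_L$ and $l_R$ are by construction the loops \emph{nearest} the non-loop centre, every vertex strictly between them is loopless; hence your $\sigma$ is always $0^b$ and the pattern is simply $\mathcal{P}_{10^b1}$ --- there is no need to hedge about ``interior loop/non-loop structure''. Second, you can (and should) pin down $m$: the paper's implicit choice $m=\max\{a,b,c\}$ works, and more generally any $m$ with $\lambda_P \le m \le \min\{a+b,\,b+c\}$ does; the upper bound is exactly what prevents an $m$-walk from either endpoint from reaching a loop on the far side of the centre, and the weak-balance hypothesis $a+b\ge c$, $b+c\ge a$ guarantees this interval is non-empty.
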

\begin{proof}
Let $\mathcal{P}_{0^a10^b10^c}$ be the path that is obtained from $\mathcal{P}$ if one retains only the loops nearest the centre, one on each side. Use the reduction of the previous proposition. The important point is that it is possible to evaluate $\{v_\top,v_\bot\}$ as the self-loops at $\{a+1,a+2+b\}$ -- and we will at some point be forced to choose, say, $v_\top$ to be $a+1$ and $v_\bot$ to be $a+2+b$ (when, e.g., $v'_\top$ is evaluated as $1$ and $v'_\bot$ is evaluated as $a+b+c+2$).
\end{proof}

\subsubsection{$\mathcal{P}_{10101}$ and weakly balanced $1$-centred paths}

We begin with the simplest weakly balanced $1$-centred path, $\mathcal{P}_{10101}$, which in some sense is also the trickiest.
\begin{figure}
\label{fig:gadgets10101A}
\begin{center}
\input{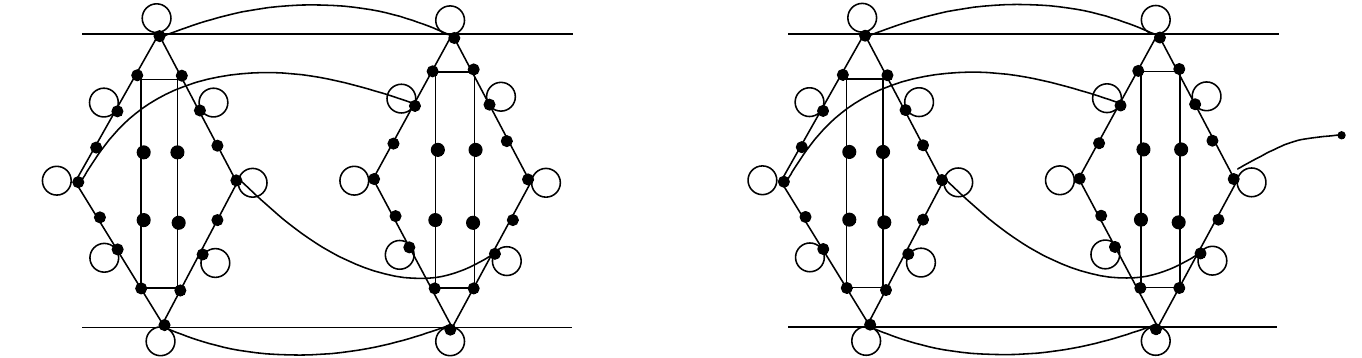tex_t}
\end{center}
\caption{Variable gadgets in reduction to QCSP$(\mathcal{P}_{10101})$.}
\end{figure}
\begin{figure}
\label{fig:gadgets10101B}
\begin{center}
\input{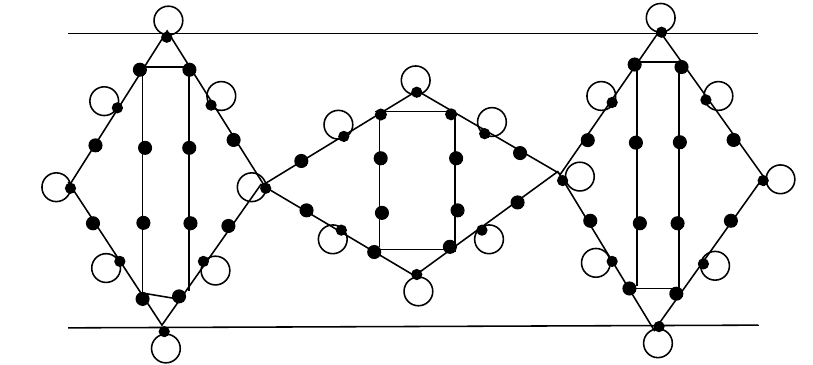tex_t}
\end{center}
\caption{Clause gadget in reduction to QCSP$(\mathcal{P}_{10101})$.}
\end{figure}
\begin{proposition}
\label{prop:P10101}
QCSP$(\mathcal{P}_{10101})$ is Pspace-complete.
\end{proposition}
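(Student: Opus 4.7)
The proof will mirror the architecture of Proposition~\ref{prop:P101}: reduce from QNAE3SAT by building a graph $\mathcal{G}_\Phi$ consisting of two reflexive rails labelled $\top$ and $\bot$ (intended to evaluate to the endpoint loops $1$ and $5$ of $\mathcal{P}_{10101}$), joined through variable and clause gadgets, and then realise $\mathcal{G}_\Phi$ as a positive Horn sentence $\Psi$ whose quantifier structure echoes that of $\Phi$, wrapped by an outer $\forall v'_\top, v'_\bot \, \exists v_\top, v_\bot$ block together with atoms $E(v_\top,v'_\top), E(v_\bot,v'_\bot), E(v_\top,v_\top), E(v_\bot,v_\bot)$ that pins the rails to the endpoint loops exactly as before. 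The fundamental new difficulty is the central loop at vertex $3$: the naive $\forall$-selector trick used for $\mathcal{P}_{101}$ only forces a looped image in $\{1,3,5\}$, not in $\{1,5\}$.

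To circumvent this, I would represent each Boolean value by an \emph{adjacent pair} of vertices $(v, v')$ (one looped, one not), in keeping with the doubled structure of Figures~13 and~14. The canonical true pair is $(1,2)$, the canonical false pair is $(5,4)$; the spurious pairs $(3,2)$ and $(3,4)$ must be excluded. The pattern connecting the rails would be $\mathcal{P}_{10001}$ (the sub-path of $\mathcal{P}_{10101}$ of length $4$ joining the two endpoint loops through non-loops), so that any extension of $\top = 1$ and $\bot = 5$ along the pattern must use vertex $3$ only as a non-loop-usable transit point. The clause gadget is a doubled diamond, with a top brace meeting $\top$ and a bottom brace meeting $\bot$, and with literal pairs $(l_i,l'_i)$ identified to variable pairs by edges; as in Proposition~\ref{prop:P101} we reorder so that the distinguished middle literal $l_2$ is never universal.

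The verification then follows the earlier template. Soundness uses the length-$4$ pattern to force each literal pair onto $(1,2)$ or $(5,4)$, and the doubled diamond to forbid the two all-equal assignments, so a homomorphism with rails at $1,5$ yields an NAE-assignment. Completeness is the straightforward check that each of the six NAE-assignments extends to a homomorphism. The outer wrapper, by the same argument as before, reduces $\mathcal{P}_{10101} \models \Psi$ to $\mathcal{P}_{10101} \models \Psi'(1,5)$: the symmetric case $\Psi'(5,1)$ is immediate, and the degenerate cases $\Psi'(1,1)$ and $\Psi'(5,5)$ follow because the middle literal $l_2$ is never universal, so the clause gadget can always be satisfied when all rails coincide.

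The principal obstacle I anticipate is showing that the $\forall$-selector attached to a universal variable gadget really forces the pair $(v_2, v'_2)$ into $\{(1,2),(5,4)\}$ rather than permitting it to settle at the centre. Concretely, when the universally quantified vertex labelled $\forall$ is instantiated to $3$, one must ensure that the existential response for the looped coordinate $v_2$ cannot be $3$ without breaking some pattern edge back to a rail; and conversely, that when $\forall$ is instantiated to $2$ or $4$, the only consistent continuations evaluate $v_2$ at one of the endpoints. This will come down to a joint distance-and-parity calculation in $\mathcal{P}_{10101}$, balancing the length of the $\forall$-selector $\mathcal{P}_{10}$ against the length-$4$ pattern, and is the step where the concrete geometry of Figures~13 and~14 matters. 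Once this forcing lemma is in place, Pspace-completeness is immediate from Pspace membership of QCSP.
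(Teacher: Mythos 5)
Your high-level architecture is the right one and matches the paper's: double every variable and literal into a pair $(v,v')$ that must be forced to opposite ends of the path, keep the middle literal $l_2$ non-universal, and reuse the outer $\forall v'_\top,v'_\bot\,\exists v_\top,v_\bot$ wrapper from Proposition~\ref{prop:P101}. But the proposal has a genuine gap, and it sits exactly where the whole difficulty of this case lies. The one concrete mechanism you offer for excluding the centre loop --- that the pattern $\mathcal{P}_{10001}$ of length $4$ ``forces each literal pair onto $(1,2)$ or $(5,4)$'' --- is false. A homomorphism does not require the irreflexive interior of the pattern to avoid looped targets: from the image $1$ there are length-$4$ walks in $\mathcal{P}_{10101}$ ending at every loop, e.g.\ $1,2,3,2,3$ ends at the centre $3$. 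So the length-$4$ pattern by itself excludes nothing, and your parenthetical claim that vertex $3$ can serve ``only as a non-loop-usable transit point'' does not hold in the target. The exclusion of $3$ has to come from the interaction of the two coordinates of each pair with \emph{both} rails simultaneously (plus, in the clause gadgets, vertical bracing of the diamonds), which is precisely the content of the concrete gadgets in Figures~13 and~14 of the paper; you explicitly defer this as an ``anticipated obstacle'' and a ``forcing lemma'' to be supplied later, for both the universal variable gadgets and (implicitly) the existential ones and the literals.

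Since everything else in your write-up is verbatim inheritance from Proposition~\ref{prop:P101}, the missing forcing lemma is not a routine detail but the entire added content of this proposition. The paper itself signals how delicate this is: it keeps the pattern as $\mathcal{P}_{10101}$ (not $\mathcal{P}_{10001}$), adds the extra edges from $l'_i$ to $v'_i$ precisely because hitting $l_i=1$ no longer forces $l'_i=5$, must separately verify the behaviour when the $\forall$-adjacent loop of a universal gadget lands on $3$ (Figure~15), and remarks after the proof that a superficially plausible alternative selector strategy works for NAE3SAT but breaks for the quantified problem. Without the explicit gadgets and the distance/parity verification you postpone, the reduction's soundness is unestablished.
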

\begin{proof}
We work as in Proposition~\ref{prop:P101}, but with pattern $\mathcal{P}_{10101}$ and $\forall$-selector $\mathcal{P}_{10}$. We will need more sophisticated variable gadgets, along with some vertical bracing in the diamonds. The requisite gadgets are depicted in \comments{Figures~\ref{fig:gadgets10101A} and \ref{fig:gadgets10101A}}Figures 13 and 14. Finally, not only is $v_1$ (likewise, $v_2$) connected by an edge to a literal $l_i$ (if $v_1=l_i$), but on the other side $v'_1$ is also connected by an edge to $l'_i$. We assume for now that the paths $\top$ and $\bot$ are evaluated to $1$ and $5$. We need the extra edge from $l'_i$ to $v'_1$ as an evaluation of $l_1$ on a clause diamond to, e.g., $1$, no longer, in itself, enforces that $l'_i$ be evaluated to $5$. In the existential variable gadgets, $v_1$ must be evaluated to either $1$ or $5$, and $v'_1$ must be evaluated to the other. In a universal gadget, the loop adjacent to the vertex $\forall$ will be evaluated to any of $1$, $3$ or $5$ -- but $v_2$ and $v'_2$ must still be evaluated to opposites in $1$ and $5$. We depict an example of the situation where the loop adjacent to $\forall$ is evaluated to $3$, but the other vertices are mapped so as to set $v_2$ to $1$ and $v'_2$ to $5$ (this is the left-hand diamond of \comments{Figure~\ref{fig:135}}Figure~15).

Finally, we must explain what happens in the degenerate cases in which $v_\top$ and $v_\bot$ are not evaluated to $1$ and $5$, respectively (or vice-versa). It is not hard to see that this is no problem, even when universal variables are evaluated anywhere. Two examples of these degenerate cases, when $v_\top$ and $v_\bot$ are evaluated firstly to $1$ and $1$, and, secondly, to $1$ and $3$ are drawn in the centre and right of \comments{Figure~\ref{fig:135}}Figure~15. In both cases, we consider what happens when the evaluation of a universal variable forces the left-hand node of the gadget to be evaluated to $5$.
\end{proof}
\begin{figure}
\label{fig:135}
\begin{center}
\input{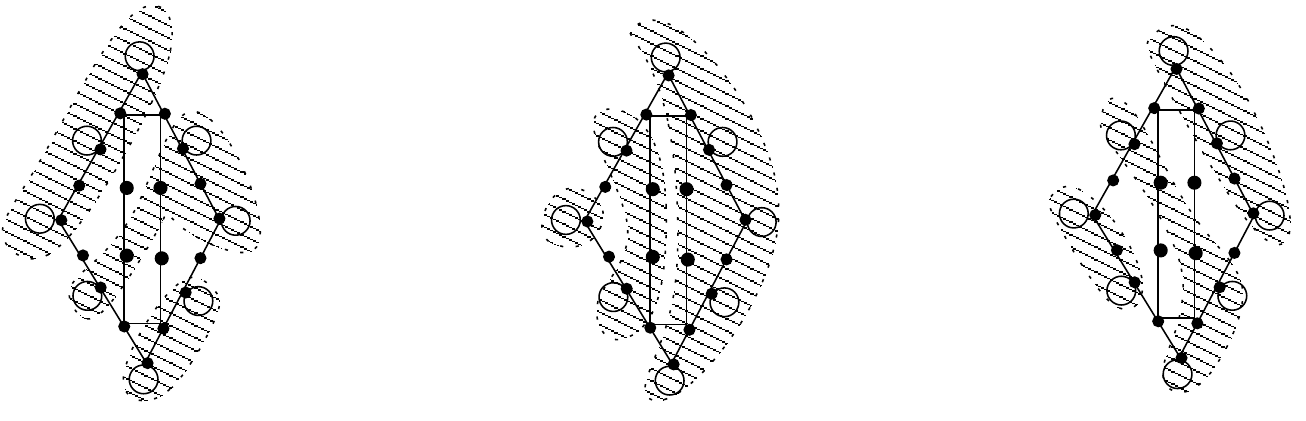tex_t}
\end{center}
\caption{Degenerate mappings in QCSP$(\mathcal{P}_{10101})$.}
\end{figure}
\noindent It may be asked why we did not consider using a pattern of $\mathcal{P}_{101}$ and $\forall$-selector $\mathcal{P}_{10}$ in the previous proof, while, instead of beginning with $\forall v'_\top, v'_\bot$, using $\exists v'_\top \forall v'_\bot$. This would then select the centre loop for $v_\top$ along with at least once an outer loop for $v_\bot$. This proof would work for a simulation of the NP-hard NAE3SAT, but breaks down for the quantified variables of QNAE3SAT.

We will now briefly consider the paths $\mathcal{P}_{101^d01}$.
\begin{proposition}
For all $d$, QCSP$(\mathcal{P}_{101^d01})$ is Pspace-complete.
\end{proposition}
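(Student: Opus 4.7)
The plan is to generalise the reduction of Proposition~\ref{prop:P10101} from QNAE3SAT, replacing the pattern $\mathcal{P}_{10101}$ throughout by $\mathcal{P}_{101^d01}$ while retaining $\mathcal{P}_{10}$ as the $\forall$-selector. We view $\mathcal{P}_{101^d01}$ on vertex set $[d+4]$, with loops on $\{1\}\cup[3,d+2]\cup\{d+4\}$ and non-loops at $2$ and $d+3$; the role of the extremal loops $1,5$ in Proposition~\ref{prop:P10101} will be played here by $1$ and $d+4$. The existential and universal variable gadgets, and the clause gadget, are built exactly in analogy with \comments{Figures~\ref{fig:gadgets10101A} and \ref{fig:gadgets10101B}}Figures~13 and~14, with each side of a diamond now a copy of the longer pattern $\mathcal{P}_{101^d01}$. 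Horizontal bracing is placed between the mirror pairs of non-loop vertices (as when $d=1$) and, when $d\geq 2$, also between the two mirror pairs sitting at the boundaries of the middle loop block; the outermost quantification remains $\forall v_\top',v_\bot'\ \exists v_\top,v_\bot$ together with the edge atoms that pin $v_\top,v_\bot \in \{1,d+4\}$, exactly as in Proposition~\ref{prop:P101}.

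Under the intended evaluation $v_\top=1$, $v_\bot=d+4$, I would argue that each existential gadget forces the pair $v_1,v_1'$ to opposite ends of $\{1,d+4\}$, each universal gadget forces $v_2,v_2'$ likewise regardless of where the $\forall$-adjacent vertex is universally sent, and each clause diamond forbids exactly the all-equal evaluations of its three literals. Homomorphisms of the resulting graph to $\mathcal{P}_{101^d01}$ then correspond to not-all-equal satisfying assignments of the input formula in the standard way. The universal variable gadget works because the central loop block $[3,d+2]$ is connected and reflexive: whenever the $\forall$-selector sends its attached vertex into the middle, the flanking copies of the pattern can still be completed (absorbing any slack inside the middle block) so as to place $v_2, v_2'$ at the extremes $1$ and $d+4$.

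The principal obstacle is twofold. Firstly, one must rule out parasitic homomorphisms that fold both sides of a clause diamond into the interior loop block and thereby satisfy the pattern constraints without respecting the not-all-equal condition; the additional horizontal bracing at the boundaries of the middle loop block is exactly what is needed to kill these collapses. Secondly, one must still extend to a full homomorphism in the degenerate outer cases where $v_\top$ and $v_\bot$ land on coincident or ``wrong'' loops under $\forall v_\top',v_\bot'$. This is handled, as in the last paragraph of the proof of Proposition~\ref{prop:P10101}, by the symmetry argument together with the fact that the central literal $l_2$ of every clause is forbidden to be universally quantified. Once these two points are verified by a short case analysis on the diamond (uniform in $d$), the reduction goes through and Pspace-hardness follows; Pspace membership is standard.
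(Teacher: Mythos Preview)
Your approach is essentially the paper's: generalise the reduction of Proposition~\ref{prop:P10101} by replacing the pattern $\mathcal{P}_{10101}$ with $\mathcal{P}_{101^d01}$ and adjusting the bracing. The paper's own proof is two sentences long and does exactly this.

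There are two small technical deviations worth noting. First, the paper takes the $\forall$-selector to be $\mathcal{P}_{1^d0}$ rather than your $\mathcal{P}_{10}$; second, the paper lengthens the \emph{vertical} path braces in the diamonds to $d+2$ (and remarks that for $d\geq 2$ vertical bracing is in fact unnecessary), whereas you propose adding extra \emph{horizontal} braces at the boundaries of the middle loop block. Both choices are plausible ways to kill the parasitic collapses into the reflexive middle, and either can be made to work, but you should be aware that your parameters differ from the paper's. In particular, the paper's observation that no bracing beyond that of the $d=1$ case is actually needed once $d\geq 2$ slightly undercuts the motivation for your additional horizontal braces.
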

\begin{proof}
We may proceed as in the proof of Proposition~\ref{prop:P10101}, but with pattern $\mathcal{P}_{101^d01}$ and $\forall$-selector $\mathcal{P}_{1^d0}$. The vertical path braces in the diamonds become of length $d+2$, though one may verify that vertical bracing is actually unnecessary when $d \geq 2$. 
\end{proof}
\begin{proposition}
\label{prop:weakly-balanced-1}
If $\mathcal{P}$ is a weakly balanced $1$-centred path, then QCSP$(\mathcal{P})$ is Pspace-complete.
\end{proposition}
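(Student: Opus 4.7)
The plan is to mirror Proposition~\ref{prop:weakly-balanced-0}. Since $\mathcal{P}$ is $1$-centred, its central block of loops has the form $1^d$ with $d \geq 1$; since $\mathcal{P}$ is weakly balanced, the outer part on each side contains a non-loop followed eventually by a loop. Retaining on each side only the loop nearest to the central block, we may exhibit $\mathcal{P}$ as $\mathcal{P}_{\gamma 1 0^a 1^d 0^b 1 \delta}$ with $a, b \geq 1$ and $\gamma, \delta \in \{0,1\}^*$ encoding the remaining peripheral structure of $\mathcal{P}$.

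First I would verify that the argument of the previous proposition (for $\mathcal{P}_{101^d 01}$, i.e.\ the case $a = b = 1$) extends verbatim to give Pspace-hardness of QCSP$(\mathcal{P}_{10^a 1^d 0^b 1})$ for arbitrary $a,b \geq 1$. One uses as pattern $\mathcal{P}_{10^a 1^d 0^b 1}$ itself, as $\forall$-selector $\mathcal{P}_{1^d 0^m}$ with $m := \max(a,b)$, and vertical diamond braces of length $a+d+b$ (which can be dispensed with when $d \geq 2$). The three properties labelled (1), (2a), (2b) in the proposition for $\mathcal{P}_{0^a 1 0^b 1 0^c}$ carry over: every vertex of $\mathcal{P}_{10^a1^d 0^b 1}$ reaches a looped vertex by an $m$-walk, while from each of the two outer ends of $\mathcal{P}_{10^a 1^d 0^b 1}$ only the corresponding nearest-to-centre outer loop (and the central block) is so reachable in the requisite parity.

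I would then lift this reduction to $\mathcal{P}$ itself, just as in Proposition~\ref{prop:weakly-balanced-0}. The universally quantified $v'_\top, v'_\bot$ now range over all of $P$, and when they are instantiated to the two extreme ends of $\mathcal{P}$ the $\forall$-selector edges force $v_\top$ and $v_\bot$ onto the two retained outer loops, since no other loop of $\mathcal{P}$ lies within an $m$-walk of an extreme end in the appropriate parity.

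The hard part is the analysis of the degenerate cases, where $v'_\top, v'_\bot$ land elsewhere, or where an interaction with the peripheral pieces $\gamma, \delta$ forces $v_\top, v_\bot$ onto some other loop of $\mathcal{P}$. These are handled as in the final paragraph of Proposition~\ref{prop:P10101}: the prohibition on the middle literal of any clause being QNAE3SAT-universally quantified allows every clause diamond to be completed independently of the simulated truth assignment, while the peripheral pieces only provide additional slack and never create spurious models of the positive Horn sentence. Pspace-membership is standard, giving Pspace-completeness.
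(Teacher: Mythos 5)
Your overall strategy (pattern plus $\forall$-selector, reduce to a core segment consisting of the central reflexive block and the two nearest outer loops) is the right one, but the proposal elides exactly the part that makes this proposition delicate, and the uniform choice of parameters you propose does not work. Writing $\mathcal{P}$ as $\alpha 1 0^c 1^d 0^e 1 \beta$ with $|\alpha|=a$, $|\beta|=b$, the paper's proof splits on whether $a+c=e+b$; in the unbalanced case it reverts to the simpler two-loop pattern $\mathcal{P}_{10^l1}$ of Proposition~\ref{prop:P101} with selector $\mathcal{P}_{1^d0^{m'+1}}$, $m'=\min\{a+c,e+b\}$, and in the balanced case it uses the symmetrised pattern $\mathcal{P}_{10^m1^d0^m1}$ with $m=\max\{c,e\}$ and four further subcases for the selector length ($\max\{c,e\}$, $\max\{a,b\}$, $\max\{b,c\}$ or $\max\{a,e\}$) according to which two of $a,b,c,e$ are smallest. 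Your single choice --- pattern $\mathcal{P}_{10^a1^d0^b1}$ (your $a,b$ are the paper's $c,e$) with selector $\mathcal{P}_{1^d0^{\max(a,b)}}$ --- corresponds to none of these, and two concrete problems arise. First, an asymmetric pattern with $a\neq b$ need not embed in the reversed orientation with its looped ends on the two designated outer loops (if, say, $a<b$, the reversed copy folds into the central block), which is why the paper symmetrises the pattern to $0^{\max}$ on both sides; you cannot simply take the core segment itself as the pattern.

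Second, and more seriously, your lifting step asserts that instantiating the universal variables at the extreme ends of $\mathcal{P}$ forces $v_\top,v_\bot$ onto the two retained outer loops because ``no other loop of $\mathcal{P}$ lies within an $m$-walk of an extreme end.'' This is unjustified and generally false: the peripheral pieces $\gamma,\delta$ may themselves contain further loops (weak balance only constrains the first non-loop/loop transition from the centre outward), and their lengths $|\gamma|,|\delta|$ bear no relation to $\max(a,b)$, so the selector may reach no loop at all from an extreme end, or may reach the wrong ones, or the central block. Controlling which loops the selector can pin $v_\top$ and $v_\bot$ to, as a function of the four quantities $a,b,c,e$, is precisely the content of the paper's case analysis, and no appeal to the degenerate-case discussion of Proposition~\ref{prop:P10101} or to Proposition~\ref{prop:weakly-balanced-0} substitutes for it. As it stands the proposal is a plan rather than a proof, with the essential combinatorial work missing.
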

\begin{proof}
The weakly balanced $1$-centred cases have the form $\alpha 10^c1^d0^e1\beta$, in which $|\alpha|=a$, $|\beta|=b$, $c,d,e \neq 0$ and $a+c+2 < \frac{a+b+c+d+e+3}{2}< a+c+d+1$ (this final stipulation ensures that centre falls in the $1^d$).

If $a+c \neq e+b$, then let $m':=\min\{a+c,e+b\}$. If $a+c<e+b$ then set $l:=e$; if $e+b<a+c$ then set $l:=c$. We may proceed as in the proof of Proposition~\ref{prop:P101}, but with pattern $\mathcal{P}_{10^l1}$ and $\forall$-selector $\mathcal{P}_{1^d 0^{m'+1}}$.

Otherwise, $a+c = e+b$. Let $m:=\max\{c,e\}$.

If $a,b \leq \min\{c,e\}$ we may proceed as in the proof of Proposition~\ref{prop:P10101}, but with pattern $\mathcal{P}_{10^m1^d0^m1}$ and $\forall$-selector $\mathcal{P}_{1^d 0^{m}}$ (the vertical path braces in the diamonds should be of length $d+2m$).

If $c,e \leq \min\{a,b\}$ we may proceed as in the proof of Proposition~\ref{prop:P10101}, but with pattern $\mathcal{P}_{10^m1^d0^m1}$ and $\forall$-selector $\mathcal{P}_{1^d 0^{\max\{a,b\}}}$ (the vertical path braces in the diamonds should be of length $d+2m$).

If $a,e \leq \min\{b,c\}$ we may proceed as in the proof of Proposition~\ref{prop:P10101}, but with pattern $\mathcal{P}_{10^m1^d0^m1}$ and $\forall$-selector $\mathcal{P}_{1^d 0^{\max\{b,c\}}}$ (the vertical path braces in the diamonds should be of length $d+2m$).

If $b,c \leq \min\{a,e\}$ we may proceed as in the proof of Proposition~\ref{prop:P10101}, but with pattern $\mathcal{P}_{10^m1^d0^m1}$ and $\forall$-selector $\mathcal{P}_{1^d 0^{\max\{a,e\}}}$ (the vertical path braces in the diamonds should be of length $d+2m$).
\end{proof}

\subsubsection{Remaining path cases}

We are close to having exhausted the possible forms that a partially reflexive path may take.

\begin{proposition}
\label{prop:remaining-case}
Let $\mathcal{P}$ be of the form $\alpha 1^b 0^a$ such that $\mathcal{P}$ is not $0$-eccentric and $|\alpha|+1 \leq \frac{|\alpha|+b+a+1}{2} \leq |\alpha|+b$ (the centre is in the $1^b$ segment), then QCSP$(\mathcal{P})$ is Pspace-complete.
\end{proposition}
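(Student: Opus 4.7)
The plan is to adapt the QNAE3SAT reduction template of Propositions~\ref{prop:P101}--\ref{prop:weakly-balanced-1}. First I would observe that $\alpha$ must contain at least one loop: otherwise $\alpha = 0^{|\alpha|}$, and reversing the path gives $\mathcal{P} = 0^a 1^b 0^{|\alpha|}$, which is of the form $\alpha' 1^b 0^{|\alpha|}$ with $|\alpha'| = a \leq |\alpha|$, so $\mathcal{P}$ would be $0$-eccentric, contradicting the hypothesis. Choosing the decomposition with $b$ maximal, $\alpha$ ends in a non-loop, so one can write $\alpha = \beta 1 0^k$ for some $k \geq 1$; the rightmost loop of $\alpha$ sits at position $|\alpha|-k$, and the $1^b$-block occupies positions $|\alpha|+1,\ldots,|\alpha|+b$. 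The two loops at $|\alpha|-k$ and $|\alpha|+b$ are then the natural candidates for the Boolean anchors $\top$ and $\bot$.

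Next I would take the pattern to be $\mathcal{P}_{10^k 1^b}$, whose two loop endpoints are intended to be mapped to the anchor loops; this pattern is realised in $\mathcal{P}$ by the sub-path on vertices $|\alpha|-k,\ldots,|\alpha|+b$. For the $\forall$-selector I would take $\mathcal{P}_{1^b 0^a}$, namely the rightmost segment of $\mathcal{P}$, attached to a universal variable gadget as in earlier propositions so that its looped end is forced to land in the $1^b$-block of $\mathcal{P}$. Variable gadgets, clause gadgets (diamonds of pattern copies), horizontal braces, and the outer $\forall v'_\top, v'_\bot$ wrapper are then assembled exactly as in Proposition~\ref{prop:P101}, with vertical bracing added inside the clause diamonds in the style of Proposition~\ref{prop:P10101} to block degenerate homomorphisms that would collapse $\top$ and $\bot$ to the same loop of $1^b$.

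Correctness in the forward direction is routine: a satisfying quantified NAE assignment yields a homomorphism sending $\top$ to $|\alpha|-k$ and $\bot$ to $|\alpha|+b$ and then extending canonically over the gadgets. For the reverse direction one must show that any valid homomorphism, whichever loop pair it assigns to $v_\top, v_\bot$, still induces a not-all-equal satisfying assignment on every clause. The crucial asymmetry $a < |\alpha| \leq a+b-1$ should be decisive here: on the right of $1^b$ only $a$ non-loops are available, strictly fewer than the $|\alpha|$ on the left, so the pattern $\mathcal{P}_{10^k 1^b}$ cannot be ``flipped'' through the $0^a$ tail without running out of space, which pins down the admissible images of the anchor vertices up to an overall symmetry swapping $\top$ and $\bot$.

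The hard part will be the degenerate-case analysis, in the spirit of the final paragraph of Proposition~\ref{prop:P10101}: the further loops of $\alpha$ beyond the one at $|\alpha|-k$, together with the $b$ loops inside $1^b$, provide alternative targets for $v_\top$ and $v_\bot$, and the gadgets must be calibrated so that none of these alternatives yields a spurious satisfying homomorphism. I expect, as in Proposition~\ref{prop:weakly-balanced-1}, that the final proof will split into sub-cases according to the fine structure of $\alpha$ (notably the value of $k$ relative to $a$ and $b$, and whether further loops of $\alpha$ lie near or far from position $|\alpha|-k$), each handled by an analogous but differently-calibrated pattern and selector.
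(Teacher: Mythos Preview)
Your overall strategy—identify a loop in $\alpha$, pick a pattern and a $\forall$-selector, and run the QNAE3SAT reduction—is right, and your observation that $\alpha$ must contain a loop (else $\mathcal{P}$ is $0$-eccentric after reversal) is exactly how the paper begins. But the paper's execution is considerably simpler than what you outline, and avoids entirely the sub-cases and vertical bracing you anticipate.

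The key difference is the choice of pattern. You anchor at the rightmost loop of $\alpha$ (position $|\alpha|-k$) and the \emph{last} loop of $1^b$ (position $|\alpha|+b$), taking pattern $\mathcal{P}_{10^k1^b}$; the inclusion of all $b$ central loops is precisely what pushes you toward the $\mathcal{P}_{10101}$-style machinery and the degenerate-case analysis you flag as ``the hard part''. The paper instead anchors at the rightmost loop of $\alpha$ and the \emph{first} loop of $1^b$ (position $|\alpha|+1$), so the pattern is simply $\mathcal{P}_{10^k1}$ (in the paper's notation $\mathcal{P}_{10^e1}$, with $e=k$). This is literally the $\mathcal{P}_{101}$ situation stretched by $e$ non-loops, so the reduction proceeds \emph{exactly} as in Proposition~\ref{prop:P101}: no vertical bracing, no doubled literal vertices $l'_i$, and no case split on the fine structure of $\alpha$.

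The $\forall$-selector is correspondingly different. Writing $c=|\alpha|-k$ for the position of the rightmost loop of $\alpha$, the paper uses $\mathcal{P}_{1^{\,b+a-c}0^{c}}$, i.e.\ a non-loop tail of length $c$ rather than your $a$. The point is that when the universal vertex sits at the leftmost end of $\mathcal{P}$, the irreflexive tail must be long enough to force the anchor out to position $c$; your tail of length $a$ need not suffice, since the paper asserts (from non-$0$-eccentricity) that $c\geq a$. So the selector $\mathcal{P}_{1^b0^a}$ is likely too short on the irreflexive side, which is a genuine gap in the plan as stated.

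In short: your decomposition $\alpha=\beta 1 0^k$ is the right first move, but the second anchor should be the \emph{first} loop of $1^b$, not the last. That single change collapses the argument to a one-line invocation of the Proposition~\ref{prop:P101} template, with no sub-cases.
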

\begin{proof}
Note that $0 \leq a < |\alpha|$. Since $\mathcal{P}$ is of the form $\gamma:=\alpha 1^b 0^a$ and not $0$-eccentric, there exists a right-most $10 \in \alpha$ at position $c\geq a$ (this is the position of the $1$). That last $10$ is of the form $10^e1$ and then it hits a sequence of $1$s that merge into the $1^b$ in the centre of $\gamma$. We may proceed as in the proof of Proposition~\ref{prop:P101}, but with pattern $\mathcal{P}_{10^e1}$ and $\forall$-selector $\mathcal{P}_{1^{b+a-c} 0^c}$.
\end{proof}

\begin{theorem}
If $\mathcal{P}$ is not a $0$-eccentric path, then QCSP$(\mathcal{P})$ is Pspace-complete.
\label{thm:paths-hard}
\end{theorem}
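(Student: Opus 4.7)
The plan is a case analysis invoking one of Propositions~\ref{prop:weakly-balanced-0}, \ref{prop:weakly-balanced-1} and \ref{prop:remaining-case} in each subcase. A partially reflexive path is isomorphic to its reversal, so we may reverse $\mathcal{P}$ freely; and since the irreflexive path $0^n$ is $0$-eccentric (take $\alpha=\epsilon$, $b=0$), we may assume $\mathcal{P}$ contains at least one loop. Moreover non-$0$-eccentricity, together with reversal invariance, forces the leftmost loop and the rightmost loop of $\mathcal{P}$ to sit strictly on opposite sides of the centre.

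I would then split on whether $\mathcal{P}$ is weakly balanced. If it is, then by definition $\mathcal{P}$ is either $0$-centred or $1$-centred, and the result follows immediately from Proposition~\ref{prop:weakly-balanced-0} or Proposition~\ref{prop:weakly-balanced-1} respectively. Otherwise, I claim that, after possibly reversing $\mathcal{P}$, the centre of $\mathcal{P}$ lies inside the rightmost maximal block of loops; writing $\mathcal{P}$ in canonical form $\alpha 1^b 0^a$ with $1^b$ being that rightmost block and $\alpha$ either empty or ending in $0$, this places the centre in the $1^b$ segment, so Proposition~\ref{prop:remaining-case} applies.

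The hard part will be justifying the claim. Failing weak balance on the right-hand side means the walk from the centre to the right end meets no transition from a non-loop to a loop; combined with the existence of a loop strictly to the right of the centre, this forces the right half of $\mathcal{P}$ to read $1^{\ast}0^{\ast}$ outwards from the centre, and in particular the rightmost loop block reaches (and contains) the centre, as required. The key sub-observation is that this failure cannot occur on both sides simultaneously: if it did, all loops of $\mathcal{P}$ would lie in a single contiguous block straddling the centre, so that $\mathcal{P}$ would have the form $0^p 1^q 0^r$, but a direct computation shows that every such path is $0$-eccentric (place the shorter of the two non-loop tails into $\alpha$, reversing if necessary), contradicting the hypothesis. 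Hence we may always reverse to move the failing side to the right and conclude via Proposition~\ref{prop:remaining-case}. The only fiddle to watch is the even-length case, where the centre sits between two vertices and ``centre in $1^b$'' means both centre vertices belong to the loop block; this is automatic once the rightmost loop is strictly past the centre and the right half outwards has the shape $1^{\ast}0^{\ast}$.
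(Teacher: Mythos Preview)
Your proposal is correct and follows the same case split as the paper's proof: weakly balanced paths are dispatched to Propositions~\ref{prop:weakly-balanced-0} and~\ref{prop:weakly-balanced-1}, and the remaining non-$0$-eccentric paths to Proposition~\ref{prop:remaining-case}. The paper's own proof is a two-sentence assertion that this case analysis is exhaustive; you have supplied the combinatorial justification that the paper omits --- in particular, the argument that non-$0$-eccentricity forces loops strictly on both sides of the centre, and that failure of weak balance on both sides would force the loop-connected shape $0^p1^q0^r$, which is always $0$-eccentric.
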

\begin{proof}
Suppose $\mathcal{P}$ is not a $0$-eccentric path. Then,
if $\mathcal{P}$ is weakly balanced, the result follows from Propositions~\ref{prop:weakly-balanced-0} and \ref{prop:weakly-balanced-1}. Otherwise, $\mathcal{P}$ is of the form of Proposition~\ref{prop:remaining-case}, and the result follows from that proposition.
\end{proof}

\subsection{NP-hardness for remaining trees}

\begin{theorem}
\label{thm:not-quasi-loop-connected}
Let $\mathcal{T}$ be a tree that is not quasi-loop-connected. Then QCSP$(\mathcal{T})$ is NP-hard.
\end{theorem}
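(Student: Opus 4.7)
The plan is a direct reduction from NAE3SAT, in the style of Section~\ref{sec:hard} and the ideas of \cite{MatchingCut}. Since $\mathcal{T}$ is not quasi-loop-connected, it has at least one loop and no single maximal connected reflexive subtree is within a $\lambda_T$-walk of every vertex. Consequently there are two distinct maximal connected reflexive subtrees $\mathcal{T}_0^1 \ni l_1$ and $\mathcal{T}_0^2 \ni l_2$, and a vertex $v \in T$ witnessing non-quasi-loop-connectedness in the sense that its distance to some $\mathcal{T}_0^j$ exceeds $\lambda_T$. Because $\mathcal{T}_0^1$ and $\mathcal{T}_0^2$ are distinct, the unique path $\pi$ in $\mathcal{T}$ joining $l_1$ to $l_2$ contains an induced sub-path of the loop-disconnected form $1\, 0^c\, 1$ for some $c \geq 1$.

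First, I would take the union of $\pi$ with the branch of $\mathcal{T}$ containing $v$ as the hardness backbone, and transplant the templates of Propositions~\ref{prop:P101} and \ref{prop:weakly-balanced-0}: use $\mathcal{P}_{10^c 1}$ as the edge-pattern inside the clause diamonds, variable gadgets whose centres are intended to land on $l_1$ or $l_2$, and a $\forall$-selector calibrated to $\lambda_T$ so that, when its universally quantified endpoint is instantiated at $v$, the existential chain in the selector is forced through one of the chosen loops. Since NAE3SAT has no universal NAE-variables to emulate, the only universal quantifiers in the constructed QCSP instance are those of the selectors; each variable of the NAE3SAT instance $\Phi$ is encoded existentially and tied into the clause diamonds by the pattern exactly as in Proposition~\ref{prop:P101}.

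The main obstacle is soundness against ``escape'' homomorphisms that evaluate gadget vertices into branches of $\mathcal{T}$ lying outside the chosen backbone: unlike in a path, $\mathcal{T}$ may have several maximal reflexive subtrees into which a variable could slip and trivialise the encoding. This is precisely where the non-quasi-loop-connected hypothesis earns its keep: the guaranteed vertex $v$ whose $\lambda_T$-neighbourhood omits $\mathcal{T}_0^j$ forces the universally quantified endpoint of the $\forall$-selector, upon being instantiated at $v$, to be satisfied only via an existential route through the loops intended to serve as the truth vertices, anchoring $v_\top, v_\bot$ and hence the whole gadget. Making this rigorous will require a modest case split depending on how the branch carrying $v$ attaches to $\pi$ and where along $\pi$ the reflexive subtrees $\mathcal{T}_0^1, \mathcal{T}_0^2$ sit --- in keeping with the ``several different flavours of the same reduction'' foreshadowed in the introduction. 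Once those cases are dispatched, the reduction is a faithful encoding of NAE3SAT and yields NP-hardness of QCSP$(\mathcal{T})$.
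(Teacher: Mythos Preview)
Your high-level plan --- reduce from NAE3SAT via the pattern/$\forall$-selector machinery of Proposition~\ref{prop:P101} --- matches the paper's, but the proposal has a genuine gap in how the pattern and selector are chosen.

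You fix one path $\pi$ between two particular loops $l_1,l_2$ and take pattern $\mathcal{P}_{10^c1}$ from the gap on $\pi$. The two outermost universal variables, however, range over \emph{all} of $T$, and you do not control which pair of maximal reflexive subtrees the selectors will anchor $v_\top,v_\bot$ into. Different instantiations of $v'_\top,v'_\bot$ may force $v_\top,v_\bot$ onto different pairs of loops with different gap lengths, and your fixed $c$ need not be correct there; conversely, for the hardness direction you need \emph{some} instantiation that genuinely stretches $v_\top,v_\bot$ far enough apart to make the clause diamonds behave as NAE gadgets, and a single witness $v$ with ``distance to some $\mathcal{T}_0^j$ exceeds $\lambda_T$'' does not by itself secure this. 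The ``modest case split on how $v$ attaches to $\pi$'' you propose is a split over tree shapes, not over universal instantiations, and so does not address the issue.

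The paper avoids any case split by introducing global extremal parameters. It sets $\mu(x,y)$ to be the minimum distance between reflexive subtrees reachable by $\lambda$-walks from $x$ and from $y$, takes $\mu:=\max_{x,y}\mu(x,y)$ (with $\mu>1$ exactly because $\mathcal{T}$ is not quasi-loop-connected), collects all subpaths $\mathbb{P}$ witnessing $\mu$, and defines $\Delta$ as the maximal distance from an endpoint of a path in $\mathbb{P}$ to its nearest loop. The pattern is then $\mathcal{P}_{10^{\Delta-1}1}$ (the paper explicitly notes that the naive $\mathcal{P}_{10^{\mu-1}1}$ may fail). The $\forall$-selectors are $\mathcal{P}_{1^\nu 0^\lambda}$ and, asymmetrically, $\mathcal{P}_{0^{\mu-1}1^\nu 0^\lambda}$, where $\nu$ is the size of the largest reflexive subtree; the $1^\nu$ prefix lets the selector traverse a reflexive subtree before its irreflexive tail, a detail your ``calibrated to $\lambda_T$'' selector omits. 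The missing idea, then, is not a structural case analysis but the definition of the right quantities $\mu,\Delta,\nu$ that make a single uniform gadget correct against every instantiation of the universal variables.
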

\begin{proof}
Let $\mathcal{T}$ and its associated $\lambda:=\lambda_T$ be given. Define $\mu(x,y)$ to be the minimum distance between some reflexive subtree $\mathcal{T}_x$ (at distance $\lambda$ from $x$) and some reflexive subtree $\mathcal{T}_y$ (at distance $\lambda$ from $y$). Note that we are considering all possible reflexive subtrees $\mathcal{T}_x$ and $\mathcal{T}_y$. In particular, since $\mu(x,y)$ is a minimum, it is sufficient to consider only such reflexive subtress that are maximal under inclusion. Let $\mu:=\max\{\mu(x,y):x,y \in T\}$. Since $\mathcal{T}$ is not quasi-loop connected, $\mu>1$. A subpath $\mathcal{P} \subseteq \mathcal{T}$ is said to have the \emph{$\mu$-property} if it connects two (maximal under inclusion) reflexive subtrees $\mathcal{T}_x$ and $\mathcal{T}_y$ that witness the maximality of $\mu$, as just defined. Let $\nu$ be the size of the largest induced reflexive subtree of $\mathcal{T}$.

Let $\mathbb{P}$ be the set of induced subpaths $\mathcal{P}$ of $\mathcal{T}$ that have the $\mu$-property, relabelled with vertices $\{1,\ldots,n:=|P|\}$ in the direction from $\mathcal{T}_x$ to $\mathcal{T}_y$. Note that the paths in $\mathcal{P}$ have loops on neither vertex $2$ nor vertex $n-1$. Note also that $\mathbb{P}$ is closed under reflection of paths (i.e., the respective mapping of $1,\ldots,n$ to $n,\ldots,1$).
We would like to reduce from NAE3SAT exactly as in the proof of Proposition~\ref{prop:P101}, with pattern $\mathcal{P}_{10^{\mu-1}1}$ and $\forall$-selector $\mathcal{P}_{1^\nu 0^\lambda}$. The sentence we would create for input for QCSP$(\mathcal{T})$ has precisely two universal quantifiers, at the beginning (i.e. this is the only use of the $\forall$-selector). The point is that somewhere we would forcibly stretch $v_\top$ and $v_\bot$ to be at distance $\mu$ (when this distance is less, it will only make it easier to extend to homomorphism). However, this method will only succeed if there is the path $\mathcal{P}_{10^{\mu-1}1} \in \mathbb{P}$.

For $\mathcal{P} \in \mathbb{P}$, let $\Delta(\mathcal{P})$ be the distance from the end of the path $\mathcal{P}$ (vertex $n$) to the nearest loop. Let $\Delta:=\max\{\Delta(\mathcal{P}):\mathcal{P} \in \mathbb{P}\}$. We build an input $\Psi$ for QCSP$(\mathcal{T})$ as in the proof of Proposition~\ref{prop:P101}, with pattern $\mathcal{P}_{10^{\Delta-1}1}$, except for the point at which we have only the variables $v_\top$ and $v_\bot$ remaining free (i.e., the one place we would come to use a $\forall$-selector).  Here, we use the $\forall$-selector $\mathcal{P}_{0^{\mu-1} 1^\nu 0^\lambda}$ for $v_\top$ and $\mathcal{P}_{1^\nu 0^\lambda}$ for $v_\top$. For the correctness of this, note that a walk of $\mu-1$ will always get you to the penultimate loop along a path $\mathcal{P} \in \mathbb{P}$, which is sometimes at distance $\Delta$ from the end (and is always at distance $\leq \Delta$ from the end).
\end{proof}

\section*{Acknowledgements}

The author is grateful for assistance with majority operations from Tom\'as Feder and Andrei Krokhin.

\bibliographystyle{acm}
\bibliography{local}

\end{document}